\keywords{probabilistic programming, symbolic execution}
\lstdefinestyle{sppl}{
  basicstyle=\ttfamily\footnotesize,
  language=Python,
  columns=fullflexible,
  keepspaces=true,
  upquote=true,
  morekeywords=[1]{condition, prob, switch, cases, range, array},
  keywordstyle=[1]\bfseries\textcolor{NavyBlue},
  morekeywords=[2]{choice, bernoulli, atomic, uniform},
  keywordstyle=[2]\textcolor{Black},
  stringstyle=\ttfamily\color{Black},
  commentstyle=\color{gray}\ttfamily,
  linewidth=\linewidth,
  xleftmargin=3pt,
  xrightmargin=3pt,
  framesep=3pt,
}
\DeclarePairedDelimiter{\abs}{\lvert}{\rvert}
\newcommand{\sppl}{\textrm{\textsc{Sppl}}}
\newcommand{\psii}{\textrm{PSI}}
\newcommand{\blog}{\textrm{BLOG}}
\newcommand{\fairsquare}{\textrm{FairSquare}}
\newcommand{\verifair}{\textrm{VeriFair}}
\newcommand{\spflow}{\textrm{SPFlow}}
\newcommand{\python}{\textrm{Python}}
\newcommand{\asdef}{\eqqcolon}
\newcommand{\defas}{\coloneqq}
\newcommand{\gor}{\mathrel{\vert}}
\newcommand{\set}[1]{\{#1\}}
\newcommand{\ceil}[1]{\lceil#1\rceil}
\newcommand{\floor}[1]{\lfloor#1\rfloor}
\newcommand{\Denot}[2][]{#1\left\llbracket#2\right\rrbracket}
\newcommand{\dom}[1]{\mathsf{#1}}
\newcommand{\SPE}{\dom{SPE}}
\newcommand{\token}[1]{\mathtt{#1}}
\newcommand{\kw}[1]{\textcolor{NavyBlue}{\textbf{\texttt{#1}}}}
\renewcommand{\translate}{\to_\SPE}
\newcommand{\translateR}{\to_\sppl{}}
\newcommand{\translateRStar}{\to^*_\sppl{}}
\newcommand{\translateStar}{\to^*_\SPE}
\newcommand{\valfunc}[1]{\mathbb{#1}}
\newcommand{\Denotv}[2]{\Denot[\valfunc{#1}]{#2}}
\newcommand{\sexpr}[1]{\textup{\texttt{(}}{#1}\textup{\texttt{)}}}
\newcommand{\settt}[1]{\textup{\texttt{\{}}{#1}\textup{\texttt{\}}}}
\newcommand{\bracktt}[1]{\textup{\texttt{[}}{#1}\textup{\texttt{]}}}
\newcommand{\scall}[2]{\token{#1}\sexpr{#2}}
\newcommand{\sintvl}[2]{\sexpr{\sexpr{#1}\,\sexpr{#2}}}
\newcommand{\domfunc}[1]{\mathit{#1}}
\newcommand{\dist}[1]{\mathrm{#1}}
\newcommand{\belse}{\mathbf{else}}
\newcommand{\bbe}{\mathbf{be}}
\newcommand{\bif}{\mathbf{if}}
\newcommand{\bin}{\mathbf{in}}
\newcommand{\blet}{\mathbf{let}}
\newcommand{\bmatch}{\mathbf{match}}
\newcommand{\bthen}{\mathbf{then}}
\newcommand{\bundef}{\mathbf{undefined}}
\newcommand{\bindicator}[1]{\mathbf{1}\left[#1\right]}
\newcommand{\ttrue}{\texttt{\#}\token{t}}
\newcommand{\tfalse}{\texttt{\#}\token{f}}
\newcommand{\tunit}{\texttt{\#}\token{u}}
\newcommand{\dquote}[1]{\texttt{"#1"}}
\newcommand{\squote}[1]{\texttt{\textquotesingle#1\textquotesingle}}
\newcommand{\mli}[1]{\mathit{#1}}
\newcommand{\inj}[3][]{\downarrow\substack{#2\\#3}\,#1}
\newcommand{\subalign}[1]{%
  \vcenter{%
    \Let@ \restore@math@cr \default@tag
    \baselineskip\fontdimen10 \scriptfont\tw@
    \advance\baselineskip\fontdimen12 \scriptfont\tw@
    \lineskip\thr@@\fontdimen8 \scriptfont\thr@@
    \lineskiplimit\lineskip
    \ialign{\hfil$\m@th\scriptstyle##$&$\m@th\scriptstyle{}##$\hfil\crcr
      #1\crcr
    }%
  }%
}
\newcommand{\fillBlue}{white!95!blue}
\newcommand{\fillGray}{white!90!black}
\newcommand{\fillGreen}{white!95!green}
\newcommand{\fillYell}{white!95!yellow}
\newcommand{\fillCondt}{\fillBlue}
\newcommand{\fillPrior}{\fillYell}
\newcommand{\fillQuery}{\fillGreen}
\newcommand{\mathleft}[1]{\@fleqntrue\@mathmargin#1}
\newcommand{\mathcenter}{\@fleqnfalse}
  \newcommand{\addToLabel}[1]{%
    \protected@edef\@currentlabel{\@currentlabel#1}%
  }
\newcounter{rule}
\newcommand{\staterule}[4][]{%
  \refstepcounter{rule}%
  \addToLabel{(\textsc{#2})}\label{#2}%
  $\begin{array}[b]{@{}l@{}}%
    \mbox{(\textsc{#2})#1}\\%
    \begin{array}{@{}c@{}}
      #3\\
      \hline
      \raisebox{0ex}[2.5ex]{\strut}#4%
    \end{array}
  \end{array}$}
\theoremstyle{acmdefinition}
\newtheorem{remark}[theorem]{Remark}}
\begin{document}

\title{SPPL: Probabilistic Programming with Fast Exact Symbolic Inference}

\author{Feras A.~Saad}
\affiliation{
  \institution{Massachusetts Institute of Technology}
  \city{Cambridge}
  \state{MA}
  \country{USA}
}
\author{Martin C.~Rinard}
\affiliation{
  \institution{Massachusetts Institute of Technology}
  \city{Cambridge}
  \state{MA}
  \country{USA}
}
\author{Vikash K.~Mansinghka}
\affiliation{
  \institution{Massachusetts Institute of Technology}
  \streetaddress{Address}
  \city{Cambridge}
  \state{MA}
  \country{USA}
}
\authorsaddresses{}

\begin{abstract}
We present the Sum-Product Probabilistic Language (\sppl{}), a new
probabilistic programming language that automatically delivers exact
solutions to a broad range of probabilistic inference queries.
\sppl{} translates probabilistic programs into {\em sum-product expressions},
a new symbolic representation and associated semantic domain that
extends standard sum-product networks to support mixed-type
distributions, numeric transformations, logical formulas, and
pointwise and set-valued constraints.
We formalize \sppl{} via a novel translation strategy from
probabilistic programs to sum-product expressions
and give sound exact algorithms for conditioning on and computing
probabilities of events.
\sppl{} imposes a collection of restrictions on probabilistic programs
to ensure they can be translated into sum-product expressions, which
allow the system to leverage new techniques for improving the
scalability of translation and inference by automatically exploiting
probabilistic structure.
We implement a prototype of \sppl{} with a modular architecture and
evaluate it on benchmarks the system targets, showing that it obtains
up to 3500x speedups over state-of-the-art symbolic systems on tasks
such as verifying the fairness of decision tree classifiers, smoothing hidden
Markov models, conditioning transformed random variables, and computing rare
event probabilities.
\end{abstract}

\maketitle


\section{Introduction}
\label{sec:introduction}

Reasoning under uncertainty is a well-established theme across diverse
fields including
  robotics,
  cognitive science,
  natural language processing,
  algorithmic fairness,
  amongst many others~\citep{thrun2005,chater2006,jelinek1997,dwork2012}.
A common approach for modeling uncertainty is to use probabilistic
programming languages (PPLs~\citep{gordon2014}) to both represent complex probability
distributions and perform probabilistic inference
within the language.
There is growing recognition of the utility of PPLs for solving
challenging tasks that involve probabilistic reasoning in various
application domains~\citep{kulkarni2015,ghahramani2015,bolton2019,krapu2019}.

\input{figures/architecture}

Probabilistic inference is central to reasoning about uncertainty and
is a central concern for both PPL implementors and users.
Several PPLs leverage approximate inference
techniques~\citep{thomas1994,goodman2008,wingate2013},
which have been used effectively in a variety of
settings~\citep{sankaranarayanan2013,carpenter2017,towner2019}.
Drawbacks of approximate inference, however, include
  a lack of accuracy and/or soundness guarantees~\citep{luby1993,lew2020};
  difficulties with programs that combine continuous, discrete, or mixed-type distributions~\citep{carpenter2017,wu2018};
  challenges assessing the quality of iterative solvers~\citep{brooks1998};
  and the substantial expertise needed to write custom inference
  programs that deliver acceptable performance~\citep{mansinghka2018,towner2019}.
To address the shortcomings of approximate inference,
several PPLs instead use exact symbolic techniques~\citep{bhat2013,narayanan2016,gehr2016,shan2016,zhang2019}.
These languages can typically express a large class of models, using
general computer algebra to solve queries.
However, the generality of the symbolic computations causes them to
sometimes fail, even on problems with tractable solutions.

\noindentparagraph{Our Work}
We introduce the Sum-Product Probabilistic Language
(\sppl{}), a system that occupies a new point in the expressiveness
vs.\ performance trade-off space for exact symbolic inference.
A key idea in \sppl{} is to incorporate certain modeling restrictions
that avoid the need for general computer algebra, instead using a new,
specialized class of ``sum-product'' symbolic expressions to exactly
represent probability distributions specified by \sppl{} programs.
These new symbolic expressions extend and generalize sum-product
networks~\citep{poon2011}, which are computational graphs that have
received widespread attention for their clear probabilistic semantics
and tractable properties for exact inference---see~\citep{vergari2020}
for a comprehensive and curated literature review.
These sum-product expressions are used to automatically obtain exact
solutions to probabilistic inference queries about \sppl{} programs,
which are fast and scalable in tractable regimes.

\noindentparagraph{System Overview}
Fig.~\ref{fig:system-diagram} shows an overview of our approach.
Given a probabilistic program written in \sppl{}
(Lst.~\ref{lst:sppl-syntax}) a translator
(Lst.~\ref{lst:sppl-translation}) produces a sum-product expression
that represents the prior distribution over all program variables.
Given this expression and a query specified by the user,
the \sppl{} inference engine returns an exact answer, where:
\begin{enumerate}
\item $\kw{simulate}(\dom{Vars})$ returns random samples
    of program variables from their joint probability distribution;
\item $\kw{prob}(\dom{Event})$ returns the probability of an event,
  which is a predicate on program variables;
\item $\kw{condition}(\dom{Event})$ returns a new
  sum-product expression for the posterior distribution
  over program variables, given that the specified event is true.
\end{enumerate}
A key aspect of the system design in Fig.~\ref{fig:system-diagram} is
modularity: modeling, conditioning, and querying are factored into
distinct stages that reflect the essential components of a Bayesian
workflow.
Moreover, the dashed back-edge in Fig.~\ref{fig:system-diagram}
indicates that the new sum-product expression returned by
$\kw{condition}$ can be reused to interactively invoke additional
queries on the posterior distribution.
This closure property enables substantial runtime gains across
multiple datasets and queries.

\noindentparagraph{Trade-offs}
\sppl{} imposes restrictions on
probabilistic programs that specifically rule out the following
constructs:
\begin{enumerate*}[label=(\roman*)]
\item unbounded loops;
\item multivariate numeric transformations;  and
\item arbitrary prior distributions on continuous parameters.
\end{enumerate*}
As a result, \sppl{} is not designed to express model classes such as
regression with a prior on real coefficients; neural networks;
support-vector machines; spatial Poisson processes; urn processes; and
hidden Markov models with unknown transition matrices.
The aforesaid model classes cannot be represented as sum-product
expressions, and most of them do not have tractable algorithms for
exact inference.

We impose these restrictions to ensure that valid \sppl{} programs can
always be translated into finite sum-product expressions, as opposed
to general symbolic algebra expressions.
The resulting sum-product expressions delivered by \sppl{} have a
number of characteristics that make them a particularly useful
translation target for probabilistic programs:

\begin{itemize}[wide=0pt]
\item {\bf Completeness and Decomposibility:}
  By satisfying important completeness and decomposability conditions
  from the literature~\citep[Defs.~4,5]{poon2011}, sum-product
  expressions are guaranteed to represent normalized probability
  distributions.

\item {\bf Efficient Factorization:}
  By specifying multivariate probability distributions compositionally
  in terms of sums and products of simpler distributions, sum-product
  expressions can be simplified by algebraic ``factorization''
  (Fig.~\ref{fig:hmm-spe-factorized},
  Fig.~\ref{fig:memoize-factorization}).

\item {\bf Efficient Deduplication:}
  When an \sppl{} program specifies a generative model
  with conditional independence structure, the translated sum-product
  expression typically contains identical subexpressions that can be
  ``deduplicated'' into a single logical node in memory
  (Fig.~\ref{fig:hmm-spe-factorized}, Fig.~\ref{fig:memoize-deduplication}).

\item {\bf Efficient Caching:}
  Inference algorithms for sum-product expressions proceed from root
  to leaves to root, allowing intermediate results to be cached and
  reused at deduplicated internal subexpressions in a depth-first
  graph traversal.

\item {\bf Closure Under Conditioning:}
  Sum-product expressions are closed under probabilistic conditioning
  (Thm.~\ref{thm:closure}), which allows them to be reused across
  multiple datasets and inference queries about the same
  probabilistic program.

\item {\bf Linear-Time Exact Inferences:}
  For a well-defined class of common queries, inference scales
  linearly in the expression size (Thm.~\ref{thm:condition-linear});
  when \sppl{} delivers a ``small'' expression after
  factorization and deduplication, inference is also fast.
\end{itemize}

It is well-known that a very large class of tractable models can be
cast as sum-product networks~\citep[Thm.~2]{poon2011}.
\sppl{} automatically constructs these representations from generative
probabilistic programs that use standard constructs such as arrays,
if/else branches, for-loops, and numeric and logical operators.
To enable this translation, \sppl{} introduces new sum-product expressions and
inference algorithms that extend standard sum-product networks by supporting
(many-to-one) univariate transformations,
mixed-type base measures, and
pointwise and set-valued constraints.
These constructs make \sppl{} expressive enough to solve
prominent inference tasks in the PPL
literature~\citep{albarghouthi2017,nori2014,wu2018,laurel2020} for
which standard sum-product networks have not been previously used.
Example model classes include most finite discrete models, latent
variable models with discrete hidden states and arbitrary observed
states, and decision trees over discrete and continuous variables.
Taken together, these characteristics make \sppl{} particularly
effective for fast and scalable inference on tractable problems, with
low variance runtime and complete, usable answers to users.
Our experimental evaluation (Sec.~\ref{sec:evaluations}) indicates
that \sppl{} delivers these benefits on the problems it is designed to
solve, whereas more general and expressive techniques in previous
solvers~\citep{gehr2016,albarghouthi2017,bastani2019} typically
exhibit orders of magnitude worse performance on these problems,
runtime has higher variance, and/or results may be unusable, i.e.,
with unsimplified symbolic integrals.

\noindentparagraph{Key contributions}
We identify the following contributions:

\begin{itemize}[wide=0pt]
\item \textbf{New semantic domain for sum-product expressions}
(Sec.~\ref{sec:core}) that extends sum-product
networks~\citep{poon2011} by including mixed-type distributions,
numeric transforms, logical formulas, and events with pointwise and
set-valued constraints.

\item \textbf{Provably sound exact symbolic inference algorithms}
(Sec.~\ref{sec:condition})
based on a proof that sum-product expressions are closed under
conditioning on any event that can be specified in the domain.
We use these algorithms to build an efficient and multi-stage inference
architecture that separates model translation, conditioning, and
querying into distinct stages, enabling interactive workflows
and computation reuse.

\item \textbf{The Sum-Product Probabilistic Language} (Sec.~\ref{sec:translation}),
a PPL built on a novel translation semantics from generative code to
sum-product expressions, which are used to deliver exact inferences to
queries.
We present optimization techniques to improve scalability of
translation and inference by exploiting conditional independences and
repeated structure.

\item \textbf{Empirical measurements of efficacy} (Sec.~\ref{sec:evaluations})
on inference tasks from the literature that \sppl{} targets, which
show that it delivers substantial improvements over existing
baselines, including up to 3500x speedup over state-of-the-art
fairness verifiers~\citep{albarghouthi2017,bastani2019} and symbolic
integration~\citep{gehr2016}, as well as many orders of magnitude
speedup over sampling-based inference~\citep{milch2005} for computing
rare event probabilities.
\end{itemize}


\section{Overview}
\label{sec:overview}


\begin{figure*}[!t]

\centering

\begin{subfigure}[b]{.45\textwidth}
\begin{lstlisting}[style=sppl,numbers=left,frame=single,numbersep=5pt]
Nationality ~ choice({'India': 0.5, 'USA': 0.5})
if (Nationality == 'India'):
    Perfect ~ bernoulli(p=0.10)
    if Perfect:         GPA ~ atom(10)
    else:               GPA ~ uniform(0, 10)
else: # Nationality is 'USA'
    Perfect ~ bernoulli(p=0.15)
    if Perfect:         GPA ~ atom(4)
    else:               GPA ~ uniform(0, 4)
\end{lstlisting}
\captionsetup{aboveskip=0pt, belowskip=0pt}
\caption{Probabilistic Program}
\label{fig:indian-gpa-program}
\end{subfigure}\qquad %
\begin{subfigure}[b]{.45\textwidth}
\centering
\begin{lstlisting}[style=sppl,frame=single]
prob (Nationality == 'USA');
prob (Perfect == 1);
prob (GPA <= x/10) # for x = 0, ..., 120
\end{lstlisting}
\captionsetup{aboveskip=0pt, belowskip=4pt}
\caption{Example Queries on Marginal Probabilities}
\label{fig:indian-gpa-query-marginal}

\begin{lstlisting}[style=sppl,frame=single]
prob ((Perfect == 1)
  or (Nationality == 'India') and (GPA > 3))
\end{lstlisting}
\captionsetup{aboveskip=0pt, belowskip=0pt}
\caption{Example Query on Joint Probabilities}
\label{fig:indian-gpa-query-joint}
\end{subfigure}%

\tikzset{leaf/.style={inner sep=1pt,label={[label distance=-.1cm]below:{#1}}}}
\tikzset{branch/.style={circle,draw,inner sep = 1pt}}

\begin{subfigure}{\textwidth}
\begin{subfigure}[b]{.5\textwidth}
\begin{adjustbox}{max width=\linewidth}
\begin{tikzpicture}
\Tree
  [.\node[branch]{$+$};
    \edge node[auto=right]{.5};
    [.\node[branch]{$\times$};
      [.\node[branch]{$+$};
        \edge node[auto=right]{.1};
        [.\node[branch]{$\times$};
          \node[leaf=Perfect]{$\delta_{\rm True}$};
          \node[leaf=GPA]{$\delta_{10}$}; ]
        \edge node[auto=left]{.9};
        [.\node[branch]{$\times$};
          \node[leaf=Perfect]{$\delta_{\rm False}$};
          \node[leaf=GPA]{$U(0,10)$};] ]
      \node[leaf=Nationality]{$\delta_{\rm India}$}; ]
    \edge node[auto=left]{.5};
    [.\node[branch]{$\times$};
      [.\node[branch]{$+$};
        \edge node[auto=right]{.15};
        [.\node[branch]{$\times$};
          \node[leaf=Perfect]{$\delta_{\rm True}$};
          \node[leaf=GPA]{$\delta_{4}$};]
        \edge node[auto=left]{.85};
        [.\node[branch]{$\times$};
          \node[leaf=Perfect]{$\delta_{\rm False}$};
          \node[leaf=GPA]{$U(0,4)$};] ]
      \node[leaf=Nationality]{$\delta_{\rm USA}$};] ]
\end{tikzpicture}
\end{adjustbox}
\captionsetup{aboveskip=10pt, belowskip=0pt}
\caption{Prior Sum-Product Expression}
\label{fig:indian-gpa-spe-prior}
\end{subfigure}\hfill
\begin{subfigure}[b]{.5\textwidth}
\includegraphics[width=\textwidth]{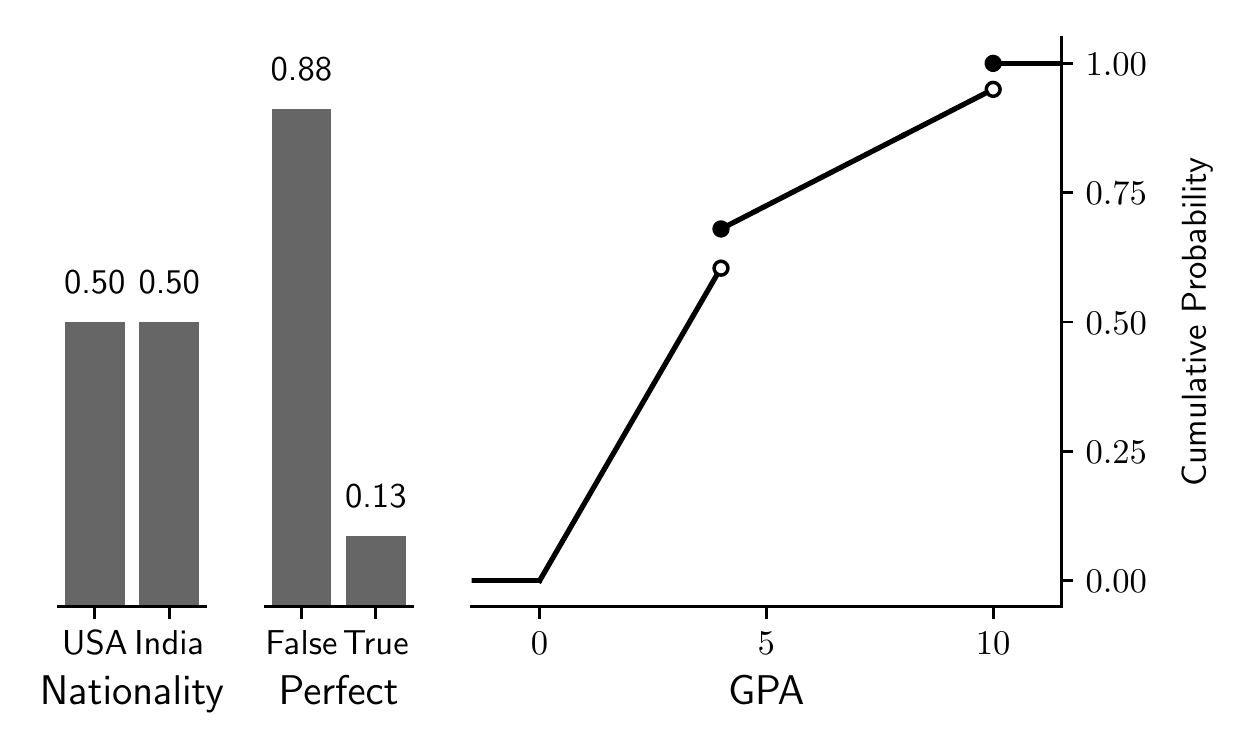}
\captionsetup{aboveskip=0pt, belowskip=0pt}
\caption{Prior Marginal Distributions}
\label{fig:indian-gpa-probs-prior}
\end{subfigure}
\end{subfigure}
\bigskip

\begin{subfigure}{.7\textwidth}
\centering
\begin{lstlisting}[style=sppl,frame=single]
condition ((Nationality == 'USA') and (GPA > 3)) or (8 < GPA < 10)
\end{lstlisting}
\captionsetup{aboveskip=0pt, belowskip=0pt}
\caption{Conditioning the Program}
\label{fig:indian-gpa-condition}
\end{subfigure}

\begin{subfigure}{\textwidth}
\begin{subfigure}[b]{.5\textwidth}
\begin{adjustbox}{max width=\linewidth}
\begin{tikzpicture}
\Tree
  [.\node[branch]{$+$};
    \edge node[auto=right]{.33};
    [.\node[branch]{$\times$};
      \node[leaf=Nationality]{$\delta_{\rm India}$};
      \node[leaf=Perfect]{$\delta_{\rm False}$};
      \node[leaf=GPA]{$U(8,10)$}; ]
    \edge node[auto=left]{.67};
    [.\node[branch]{$\times$};
      [.\node[branch]{$+$};
        \edge node[auto=right]{.41};
        [.\node[branch]{$\times$};
          \node[leaf=Perfect]{$\delta_{\rm True}$};
          \node[leaf=GPA]{$\delta_{4}$};]
        \edge node[auto=left]{.59};
        [.\node[branch]{$\times$};
          \node[leaf=Perfect]{$\delta_{\rm False}$};
          \node[leaf=GPA]{$U(3,4)$};] ]
      \node[leaf=Nationality]{$\delta_{\rm USA}$}; ] ]
\end{tikzpicture}
\end{adjustbox}
\captionsetup{aboveskip=10pt, belowskip=0pt}
\caption{Posterior Sum-Product Expression}
\label{fig:indian-gpa-spe-posterior}
\end{subfigure}\hfill
\begin{subfigure}[b]{.5\textwidth}
\includegraphics[width=\textwidth]{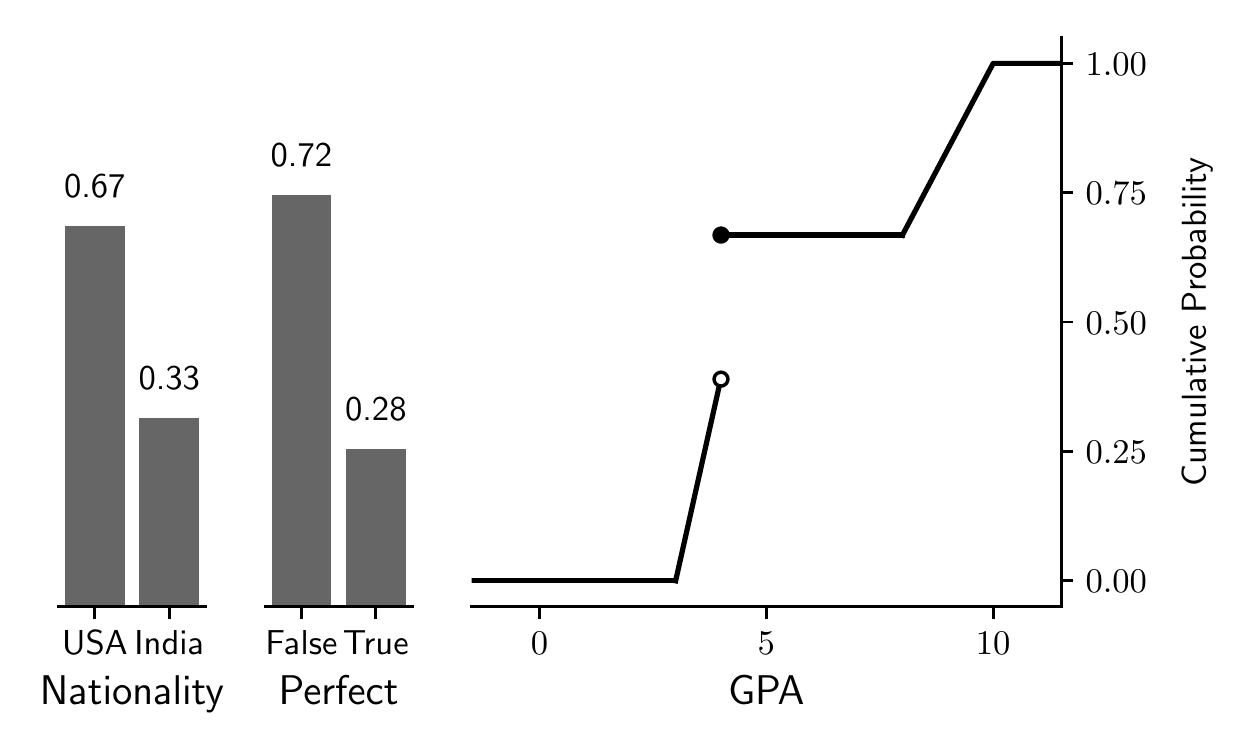}
\captionsetup{aboveskip=0pt, belowskip=0pt}
\caption{Posterior Marginal Distributions}
\label{fig:indian-gpa-probs-posterior}
\end{subfigure}
\end{subfigure}
\caption{Analyzing the Indian GPA problem in \sppl{}.}
\label{fig:indian-gpa}
\end{figure*}

We next describe two examples that illustrate the programming style in
\sppl{} and queries supported by the system.

\subsection{Indian GPA Problem}
\label{subsec:example-indian-gpa}

The Indian GPA problem is a canonical example that has been widely
considered in the probabilistic programming literature~\citep{nitti2016,srivastava2017,wu2018,riguzzi2018,narayanan2020}
for its use of a ``mixed-type'' random variable that takes both
continuous and discrete values, depending on the random branch taken
by the program.

\noindentparagraph{Specifying the Prior}
Fig.~\ref{fig:indian-gpa-program} shows the generative process for three
variables ($\texttt{Nationality}$, $\texttt{Perfect}$ and $\texttt{GPA}$)
of a student.
The student's nationality is either India or USA with equal
probability (line 1).
Students from India (line 2) have a $10\%$ probability of a perfect 10
GPA (lines 3-4), otherwise the GPA is uniform over $[0, 10]$ (line 5).
Students from USA (line 6) have a $15\%$ probability of a perfect 4
GPA (lines 6-7), otherwise the GPA is uniform over $[0, 4]$ (line 8).
%

\noindentparagraph{Prior Sum-Product Expression}
The graph in Fig.~\ref{fig:indian-gpa-spe-prior} shows the translated sum-product
expression for this program, which represents a sampler
for the distribution over program variables as follows:
\begin{enumerate*}[label=(\roman*)]
\item if a node is a sum ($+$), visit a random child with probability
  equal to the weight of the edge pointing to the child;
\item if a node is a product ($\times$), visit each child exactly
  once, in no specific order;
\item if a node is a leaf, sample a value from the distribution at the
  leaf and assign it to the variable at the leaf.
\end{enumerate*}
Similarly, the graph encodes the joint distribution of the
variables by treating
\begin{enumerate*}[label=(\roman*)]
\item each sum node as a probabilistic mixture;
\item each product node as a tuple of independent variables; and
\item each leaf node as a primitive random variable.
\end{enumerate*}
Thus, the prior distribution is:
\begin{align}
&\Pr[\mathtt{Nationality} = n, \mathtt{Perfect} = p, \mathtt{GPA} \le g]
  \label{eq:sonnetist-prior} \\
&=0.5 \big[
  \delta_{\rm India}(n)
    \cdot(0.1 [(\delta_{\rm True}(p) \cdot \bindicator{10 \le g})] \notag \\
&\;+ 0.9 [(\delta_{\rm False}(p) \cdot
  (g/10 \cdot \bindicator{0 \le g < 10} + \bindicator{10 \le g}))])
    \big] \notag \\
&+ 0.5 \big[
  \delta_{\rm USA}(n) \cdot(
    0.15 [(\delta_{\rm True}(p) \cdot \bindicator{4 \le g})] \notag \\
&\;+ 0.85 [(\delta_{\rm False}(p) \cdot
    (g/4 \cdot \bindicator{0 \le g < 4} + \bindicator{4 \le g}))])
  \big]. \notag
\end{align}
Fig.~\ref{fig:indian-gpa-query-marginal} shows \sppl{} queries for the
prior marginal distributions of the three variables, plotted
in Fig.~\ref{fig:indian-gpa-probs-prior}.
The two jumps in the cumulative distribution function
(CDF\footnote{
  For a real-valued random variable $X$, the cumulative distribution
  function $F\,{:}\,\dom{Real}\,{\to}\,[0,1]$ is given by $F(r) \defas
  \Pr[X \le r]$.})
of \texttt{GPA} at 4 and 10 correspond to the atoms
that occur when \texttt{Perfect} is true.
The piecewise linear behavior on $[0,4]$ and $[4,10]$ follows from the
conditional uniform distributions of \texttt{GPA}.

\noindentparagraph{Conditioning the Program} Fig.~\ref{fig:indian-gpa-condition}
shows an example of the $\kw{condition}$ query, which specifies an event $e$ on which
to constrain executions of the program.
An event is a predicate on (possibly transformed) program variables that
can be used for both $\kw{condition}$
(Fig.~\ref{fig:indian-gpa-condition})
and $\kw{prob}$ (Fig.~\ref{fig:indian-gpa-query-joint}).
\sppl{} is the first system with inference algorithms for sum-product
expressions that handle predicates of this form.
Given $e$, the object of inference is the full posterior distribution:
\begin{align}
\footnotesize
&\Pr[\texttt{Nationality}\,{=}\,n, \texttt{Perfect}\,{=}\,p, \texttt{GPA}\,{\le}\,g \mid e]
\label{eq:truss} \\
& \defas
    {\Pr[
      \texttt{Nationality}\,{=}\,n,
      \texttt{Perfect}\,{=}\,p,
      \texttt{GPA}\,{\le}\,g, e]}/{\Pr[e]}. \notag
\end{align}

\noindentparagraph{Posterior Sum-Product Expression}
Given the prior expression (Fig.~\ref{fig:indian-gpa-spe-prior})
and conditioning event $e$ (Fig.~\ref{fig:indian-gpa-condition}), \sppl{} produces a new
expression (Fig.~\ref{fig:indian-gpa-spe-posterior}) that specifies
a distribution which is precisely equal to Eq.~\eqref{eq:truss},
From Thm.~\ref{thm:closure}, conditioning an \sppl{} program on any
event that can be specified in the language
results in a posterior distribution that also admits an exact
sum-product expression.
Conditioning on $e$ performs several transformations on the prior
expression, which are:
\begin{enumerate}
\item Eliminating the subtree rooted at the parent of leaf $\delta_{10}$,
  which is inconsistent with the conditioning event.
\item Rescaling the distribution $U(0,10)$ at the leaf node
  in the India subtree to $U(8,10)$.
\item Rescaling the distribution $U(0,4)$ at the leaf node
  in the USA subtree to $U(3,4)$.
\item Reweighting the branch probabilities of the sum node in the USA subtree
  from $[.15, .85]$ to $[.41, .59]$, where $.41 = .15 / (.15 + .2125)$
  is the posterior probability of $(\texttt{Perfect} = 1, \texttt{GPA} = 4)$
  given the condition $e$.
\item Reweighting the branch probabilities at the root from $[.5, .5]$ to
  $[.33, .67]$ (same rules as in the previous item).
\end{enumerate}

Fig.~\ref{fig:indian-gpa-spe-posterior} shows the posterior expression
obtained by applying these transformations.
Using this expression, the right-hand side of
Eq.~\eqref{eq:truss}, which is the object of inference, is
\begin{align}
&\Pr[\mathtt{Nationality} = n, \mathtt{Perfect} = p, \mathtt{GPA} \le g \mid e]
  \label{eq:sonnetist-post} \\
&=.33 \big[
  \delta_{\rm India}(n) \cdot \delta_{\rm False}(p) \cdot
  (\textstyle\frac{g-8}{2}\cdot\bindicator{8 \,{\le}\, g < 10} + \bindicator{10 \,{\le}\, g})
  \big] \notag \\
&\quad + .67 \big[
  \delta_{\rm USA}(n) \cdot(
    .41 [(\delta_{\rm True}(p) \cdot \bindicator{4 \le g})] \notag\\
&\quad + .59 [(\delta_{\rm False}(p) \cdot
  (\textstyle\frac{g}{4} \cdot \bindicator{0 \,{\le}\, g < 4} + \bindicator{4 \,{\le}\, g}))])
  \big]. \notag
\end{align}
(Floats are shown to two decimal places.)
We can now run the $\kw{prob}$ queries in
Fig.~\ref{fig:indian-gpa-query-marginal} on the conditioned program to
plot the posterior marginal distributions, which
are shown in Fig.~\ref{fig:indian-gpa-probs-posterior}.
The example in Fig.~\ref{fig:indian-gpa} illustrates a typical modular
workflow in \sppl{} (Fig.~\ref{fig:system-diagram}), where modeling
(Fig.~\ref{fig:indian-gpa-program}), conditioning
(Fig.~\ref{fig:indian-gpa-condition}) and querying
(Figs.~\ref{fig:indian-gpa-query-marginal}--\ref{fig:indian-gpa-query-joint})
are separated into distinct and reusable stages that together express
the main components of Bayesian modeling and inference.

\subsection{Scalable Inference in a Hierarchical HMM}
\label{subsec:example-hmm}

\input{figures/hmm}

The next example shows how to
perform efficient smoothing in a hierarchical hidden Markov model (HMM~\citep{murphy2001})
and illustrates the optimization techniques used by
the \sppl{} translator (Sec.~\ref{subsec:translation-opt}), which exploit
conditional independence to ensure that the size of the sum-product expression
grows linearly in the number of timesteps.
The code box in Fig.~\ref{fig:hmm-program} shows a hierarchical hidden
Markov model with Bernoulli hidden states $Z_t$ and Normal--Poisson
observations $(X_t, Y_t)$.
The \texttt{separated} variable indicates whether the mean values of
$X_t$ and $Y_t$ at $Z_t=0$ and $Z_t=1$ are well-separated.
%
The \texttt{p\_transition} vector specifies that the current state
$Z_{t}$ switches from the previous state $Z_{t-1}$ with $20\%$
probability.
This example leverages the \sppl{} \kw{array}, \kw{for}, and
\kw{switch}-\kw{cases} statements, where the latter is a macro that
expands to $\kw{if}$-$\kw{else}$ statements (as in, e.g., the C language):
\begin{align}
&\kw{switch}\; x\; \kw{cases}\, \sexpr{x' \kw{in}\, \mli{values}}\, \settt{C}
  \label{eq:perquisitor} \\
&\overset{\rm desugar}{\rightsquigarrow} \begin{aligned}[t]
  &\kw{if}\; \texttt{(}x\, \kw{in}\, \mli{values}[0]\texttt{)}\; \settt{C[x'/\mli{values}[0]]} \notag \\
  &\kw{elif} \dots \notag \\
  &\kw{elif}\; \texttt{(} x\, \kw{in}\, \mli{values}[n{-}1]\texttt{)}\; \settt{C[x'/\mli{values}[n-1]]}, \notag
\end{aligned}
\end{align}
where $n$ is the length of $\mli{values}$ and $C[x/E]$ indicates syntactic
replacement of $x$ with expression $E$ in command $C$.

The top and middle plots in Fig.~\ref{fig:hmm-inference-marginal} show a
realization of $X$ and $Y$ that result from simulating the process
for 100 time steps.
The blue and orange regions along the x-axes indicate whether the true
hidden state $Z$ is 0 or 1, respectively (these ``ground-truth''
values of $Z$ are not observed but need to be inferred from $X$ and $Y$).
The bottom plot in Fig.~\ref{fig:hmm-inference-marginal} shows the
exact posterior marginal probabilities
$\Pr[Z_t = 1 \mid x_{0:99}, y_{0:99}]$
for each $t=0,\dots,99$ as inferred by $\sppl{}$ (an
inference known as ``smoothing'').
These probabilities track the true
hidden state, i.e., the posterior probabilities that $Z_t=1$ are low
in the blue and high in the orange regions.

Fig.~\ref{fig:hmm-spe-full} shows a ``naive'' sum-product expression
for the distribution of all program variables up to the first two time steps.
This expression is a sum-of-products, where the products in the
second level are an enumeration of all possible realizations of
program variables, so that the number of terms scales exponentially in
the number of time steps.
Fig.~\ref{fig:hmm-spe-factorized} shows the expression constructed by
\sppl{}, which is (conceptually) based on factoring and sharing common
terms in the two level sum-of-products in Fig.~\ref{fig:hmm-spe-full}.
These factorizations and deduplications exploit conditional
independences and repeated structure in the program
(Sec.~\ref{subsec:translation-opt}), which here delivers a expression
whose size scales linearly in the number of time points.
\sppl{} can also efficiently solve variants of smoothing, e.g.,
computing posterior marginals $\Pr[Z_t \mid x_{0:t}, y_{0:t}]$ (filtering)
or the posterior joint $\Pr[Z_{0:t} \mid x_{0:t}, y_{0:t}]$ for any $t$.


\section{A Core Calculus for Sum-Product Expressions}
\label{sec:core}

This section presents a semantic domain of sum-product expressions
that generalizes sum-product networks~\citep{poon2011} and enables
precise reasoning about them.
This domain will be used to
\begin{enumerate*}[label=(\roman*)]
\item establish the closure of sum-product expressions under
conditioning on events expressible in the calculus
(Thm.~\ref{thm:closure});

\item describe sound algorithms for exact Bayesian inference in
our system (Appx.~\ref{appx:condition}); and

\item describe a procedure for translating a probabilistic program
into a sum-product expression in the core language
(Sec.~\ref{sec:translation}).
\end{enumerate*}
Lst.~\ref{lst:core-semantics} shows denotations of the key syntactic
elements (Lst.~\ref{lst:core-syntax} in Appx.~\ref{appx:syntax-core-calculus})
in the calculus, which includes
real and nominal outcomes (Lst.~\ref{lst:core-semantics-outcomes});
real transforms (Lst.~\ref{lst:core-semantics-transformations});
predicates with pointwise and set-valued constraints (Lst.~\ref{lst:core-semantics-events});
primitive distributions (Lst.~\ref{lst:core-semantics-distributions}); and
multivariate distributions specified
  compositionally as sums and products of primitive
  distributions (Lst.~\ref{lst:core-semantics-sum-product}).


\begin{listing*}[!t]
\centering
\footnotesize
\FrameSep0pt
\begin{framed}
\begin{sublisting}[b]{.5\textwidth}
\mathleft{1pt}
\begin{align*}
&\dom{Outcome} \defas \dom{Real} + \dom{String} \\[-2pt]
&\valfunc{V}: \dom{Outcomes} \to \mathcal{P}(\dom{Outcome}) \span\span \\[-2pt]
&\Denotv{V}{\varnothing} \defas \varnothing && [\dom{Empty}] \\[-2pt]
&\Denotv{V}{\settt{s_1\dots s_m}^{b}}
  \defas
  \begin{aligned}[t]
    &\bif\; b\; \bthen\; \cup_{i=1}^{m}\set{(\inj[s_i]{\dom{String}}{\dom{Outcome}})} \\[-2pt]
    &\belse\; \set{(\inj[s]{\dom{String}}{\dom{Outcome}}) \mid \forall i.s \ne s_i}
    \end{aligned}
    && [\dom{FiniteStr}]\\[-2pt]
&\Denotv{V}{\settt{r_1\dots r_m}}
  \defas \cup_{i=1}^{m}\set{(\inj[r_i]{\dom{Real}}{\dom{Outcome}})}
  && [\dom{FiniteReal}]\\[-2pt]
&\Denotv{V}{\sexpr{\sexpr{b_1\,r_1}\,\sexpr{r_2\,b_2}}}
  \defas \begin{aligned}[t]
    &\set{
    (\inj[r]{\dom{Real}}{\dom{Outcome}}) \mid r_1 {<_{b_1}} r {<_{b_2}} r_2} \\[-2pt]
    &\mbox{where } <_{\ttrue}\defas<; <_{\tfalse}\defas\le; r_1 < r_2
    \end{aligned}
    && [\dom{Interval}]\\[-2pt]
&\Denotv{V}{v_1 \amalg \dots \amalg v_m}
  \defas \cup_{i=1}^{m} \Denotv{V}{v_i}
  && [\dom{Union}]
%
\end{align*}
\captionsetup{skip=-5pt, textfont=bf}
\caption{Outcomes}
\label{lst:core-semantics-outcomes}
\hrule

\begin{align*}
&\valfunc{T}: \dom{Transform} \to \dom{Real} \to \dom{Real} \\[-4pt]
&\Denot[\valfunc{T}]{\scall{Id}{x}}
  \defas \lambda r'.r';\;
\Denot[\valfunc{T}]{\scall{Reciprocal}{t}}
  \defas \lambda r'.1/\left({\Denot[\valfunc{T}]{t}(r')}\right); \\[-4pt]
&\Denotv{T}{\scall{Abs}{t}}
  \defas \lambda r'.|\Denotv{T}{t}(r')|;
  \Denotv{T}{\scall{Root}{t\, n}}
  \defas \lambda r'.\sqrt[n]{\Denot[\valfunc{T}]{t}(r')}; \\[-4pt]
&\Denotv{T}{\scall{Poly}{t\; r_0\; \dots\; r_m}}
  \defas \lambda r'.\textstyle\sum_{i=0}^{m} r_i\left({\Denotv{T}{t}(r')}\right)^i;
  \dots
\end{align*}
\captionsetup{skip=-6pt, textfont=bf}
\caption{Transformations {\normalfont (Lst.~\ref{lst:transform} in Appx.~\ref{appx:transforms-valuation})}}
\label{lst:core-semantics-transformations}
\hrule

\begin{align*}
&\valfunc{E}: \dom{Event} \to \dom{Var} \to \dom{Outcomes} \\[-4pt]
&\Denotv{E}{\sexpr{t\, \token{in}\, v}} x\, \defas
  \bif\, (\domfunc{vars}\, t) = \set{x} \, \bthen\, (\domfunc{preimg}\, t\, v) \, \belse\, \varnothing
  && [\dom{Contains}] \\[-4pt]
&\Denotv{E}{e_1 \sqcap \dots \sqcap e_m} x
  \defas (\domfunc{intersection}\, \Denotv{E}{\mli{e}_1}x\, \dots\, \Denotv{E}{\mli{e}_m}x)
    &&[\dom{Conjunction}] \\[-4pt]
&\Denotv{E}{e_1 \sqcup \dots \sqcup e_m} x
  \defas (\domfunc{union}\; \Denotv{E}{\mli{e}_1}x\; \dots\; \Denotv{E}{\mli{e}_m}x)
  &&[\dom{Disjunction}]
\end{align*}
\captionsetup{skip=-5pt, textfont=bf}
\caption{Events}
\label{lst:core-semantics-events}
\hrule

\begin{align*}
&\Denot[\mathbb{P}_0]{S}: \SPE \to \dom{Event} \to \dom{Natural} \times [0, \infty) \\[-4pt]
&\Denot[\mathbb{P}_0]{\scall{Leaf}{x\, d\, \sigma}}\;
  \sexpr{\scall{Id}{x}\,\token{in}\,\settt{\mli{rs}}} \defas \bmatch\, d && [\dom{Leaf}] \\[-4pt]
  &\quad \begin{aligned}[t]
    &\vartriangleright \scall{DistR}{F\, r_1\, r_2} \Rightarrow\, \bmatch\, {\mli{rs}} \\[-4pt]
    &\quad \begin{aligned}[t]
      &\vartriangleright r \Rightarrow
        (1, \bindicator{r_1 \le r \le r_2} F'(r) / \left[F(r_2) - F(r_1)\right]) \\[-4pt]
      &\vartriangleright s \Rightarrow (1, 0)
    \end{aligned} \\[-4pt]
    &\vartriangleright \belse \Rightarrow
      \blet\, w\, \bbe\, \Denotv{P}{{\scall{Leaf}{x\, d\, \sigma}}}\sexpr{\scall{Id}{x}\,\token{in}\,\settt{\mli{rs}}}
      \,\bin\, (\bindicator{w = 0}, w)
    \end{aligned} \span\span \\[-4pt]
&\Denot[\mathbb{P}_0]{\sexpr{S_1\, w_1}\oplus \dots \oplus \sexpr{S_m\,w_m}}\;
  \sqcap_{i=1}^{\ell} \sexpr{\scall{Id}{x_i}\,\token{in}\,\settt{\mli{rs_i}}}
  \defas \\[-4pt]
  &\quad \begin{aligned}[t]
    &\blet_{1 \le i \le m}\, (d_i, p_i) \,\bbe\,
        \Denot[\mathbb{P}_0]{S_i}
        \left( \sqcap_{i=1}^{\ell} \sexpr{\scall{Id}{x_i}\,\token{in}\,\settt{\mli{rs_i}}}\right)\\[-4pt]
    &\bin\, \begin{aligned}[t]
      &\bif\, \forall_{1 \le i \le m}.\ p_i = 0\; \bthen\; (1, 0) \\[-4pt]
      &\belse\, \begin{aligned}[t]
        &\blet\, d^* \,\bbe\, \min\set{d_i \mid 1\le i \le m, 0 < p_i} \\[-4pt]
        &\bin\,  (d^*, \textstyle\sum_{i=1}^{m}\bindicator{d_i = d^*}w_i p_i)
        \end{aligned}
      \end{aligned}
    \end{aligned} && [\dom{Sum}]\\[-4pt]
&\Denot[\mathbb{P}_0]{S_1 \otimes \dots \otimes S_m}\;
  \sqcap_{i=1}^{\ell} \sexpr{\scall{Id}{x_i}\,\token{in}\,\settt{\mli{rs_i}}}
  \defas && [\dom{Product}] \\[-4pt]
  &\quad \begin{aligned}[t]
    &\blet_{1 \le i \le m}\, (d_i, p_i) \,\bbe\,
      \bmatch\; \set{x_1, \dots, x_m} \cap (\domfunc{scope}\, S_i) \\[-4pt]
      &\quad \vartriangleright \set{n_1, \dots, n_k}
        \Rightarrow  \Denot[\mathbb{P}_0]{S_i}
          \sqcap_{t=1}^{k} \sexpr{\scall{Id}{x_{n_t}}\,\token{in}\,\settt{\mli{rs_t}}} \\[-4pt]
      &\quad \vartriangleright \set{} \Rightarrow (0, 1) \\[-4pt]
    &\bin\, (\textstyle \sum_{i=1}^{n}d_i, \textstyle\prod_{i=1}^{m} p_i)
    \end{aligned}
\end{align*}
\captionsetup{skip=0pt, textfont=bf}
\caption{Sum-Product Expressions (Density Semantics)}
\label{lst:core-semantics-sum-product-density}
\end{sublisting}\vrule%
\begin{sublisting}[b]{.5\textwidth}
\mathleft{2pt}
\begin{align*}
&\valfunc{D}: \dom{Distribution} \to \dom{Outcomes} \to [0,1]\\[-2pt]
&\Denotv{D}{\scall{DistS}{\sexpr{s_i\, w_i}_{i=1}^{m}}} v
  \defas && [\dom{DistStr}]\\[-2pt]
&\begin{aligned}[t]
  &\; \bmatch\, (\domfunc{intersection}\, v\, \settt{s_1 \dots s_m}^{\tfalse}) \\[-2pt]
  &\quad \vartriangleright
    \varnothing
      \gor \settt{r'_1 \dots r'_m}
      \gor \sexpr{\sexpr{b_1\,r_1}\, \sexpr{r_2\,b_2}} \Rightarrow 0 \\[-2pt]
  &\quad \vartriangleright v_1 \amalg \dots \amalg v_m \Rightarrow
    \textstyle\sum_{i=1}^{m}
      \Denotv{D}{\scall{DistS}{\sexpr{s_i\, w_i}_{i=1}^{m}}}v_i\\[-2pt]
  &\quad \vartriangleright
    \settt{s'_1\, \dots\, s'_k}^{b}
    \Rightarrow \begin{aligned}[t]
    &\blet\, w\, \bbe\,
      \textstyle \sum_{i=1}^{m}(w_i \,\bif\, s_i \in \set{s'_j}_{j=1}^k\, \belse\, 0)\\[-2pt]
    &\bin\, \bif\, \bar{b}\, \bthen\, w\, \belse\, 1-w
  \end{aligned}
\end{aligned} \span\span \\[-2pt]
&\Denotv{D}{\scall{DistR}{F\, r_1\, r_2}} v
  \defas \bmatch\, (\domfunc{intersection}\,
    \sexpr{\sexpr{\tfalse\,r_1}\, \sexpr{r_2\,\tfalse}}\, v) \hspace{-.25cm}
    && [\dom{DistReal}] \\[-2pt]
&\begin{aligned}[t]
  &\quad \vartriangleright
    \varnothing
      \gor \settt{r'_1 \dots r'_m}
      \gor \settt{s'_1\, \dots\, s'_k}^{b} \Rightarrow 0 \\[-2pt]
  &\quad \vartriangleright v_1 \amalg \dots \amalg v_m \Rightarrow
    \textstyle\sum_{i=1}^{m}
      \Denotv{D}{\scall{DistR}{F\, r_1\, r_2}}v_i\\[-2pt]
  &\quad \vartriangleright
    \sexpr{\sexpr{b'_1\,r'_1}\, \sexpr{r'_2\,b'_2}}
    \Rightarrow \frac{F(r'_2) - F(r'_1)}{F(r_2) - F(r_1)}
\end{aligned} \\[-2pt]
&\Denotv{D}{\scall{DistI}{F\, r_1\, r_2}} v
  \defas \bmatch\, (\domfunc{intersection}\,
    \sexpr{\sexpr{\tfalse\,r_1}\, \sexpr{r_2\,\tfalse}}\, v) \hspace{-.25cm}
    && [\dom{DistInt}] \\[-2pt]
&\begin{aligned}[t]
  &\quad \vartriangleright
    \varnothing
      \gor \settt{s'_1\, \dots\, s'_k}^{b} \Rightarrow 0 \\[-2pt]
  &\quad \vartriangleright v_1 \amalg \dots \amalg v_m \Rightarrow
    \textstyle\sum_{i=1}^{m}
      \Denotv{D}{\scall{DistI}{F\, r_1\, r_2}} v_i \\[-2pt]
  &\quad \vartriangleright \settt{r'_1 \dots r'_m}
    \Rightarrow \begin{aligned}[t]
    \frac
      {\displaystyle\sum_{i=1}^{m}
        \left[\begin{aligned}[m]
        &\bif\, (r'_i = \floor{r'_i})\, \wedge (r_1 \le r'_i \le r_2) \\[-2pt]
        &\bthen\, F(r') - F(r'-1) \,\belse\, 0
      \end{aligned}\right]}
      {F(\floor{r_2}) - F(\ceil{r_1}-1)}
  \end{aligned} \\[-2pt]
  &\quad \vartriangleright
    \sexpr{\sexpr{b'_1\,r'_1}\, \sexpr{r'_2\,b'_2}}
    \Rightarrow \begin{aligned}[t]
      &\blet\; \tilde{r}_1 \;\bbe\;
        \floor{r'_1} - \bindicator{(r'_1 = \floor{r'_1}) \wedge \bar{b'_1}} \\[-2pt]
      &\bin\,\blet\; \tilde{r}_2 \;\bbe\;
        \floor{r'_2} - \bindicator{(r'_2 = \floor{r'_2}) \wedge \bar{b'_2}} \\[-2pt]
      &\bin\, \frac{F(\tilde{r}_2) - F(\tilde{r}_1)}{F(\floor{r_2}) - F(\ceil{r_1}-1)}
    \end{aligned}
\end{aligned}\span\span
\end{align*}
\captionsetup{skip=4pt, textfont=bf}
\caption{Primitive Distributions}
\label{lst:core-semantics-distributions}
\hrule

\begin{align*}
&\valfunc{P}: \SPE \to \dom{Event} \to [0,1] \\[-3pt]
&\Denotv{P}{\scall{Leaf}{x\, d\, \sigma}}e
  \defas \Denotv{D}{d}(\Denotv{E}{(\domfunc{subsenv}\; e\; \sigma)}x)
  && [\dom{Leaf}] \\[-3pt]
&\Denotv{P}{\sexpr{S_1\, w_1}\oplus \dots \oplus \sexpr{S_m\,w_m}}e
  \defas \begin{aligned}[t]
    &\blet\, Z\, \bbe\, \textstyle\sum_{i=i}^m{w_i} \\[-3pt]
    &\bin\, \textstyle\sum_{i=1}^{m} (\Denotv{P}{S_i}e){w_i} / Z
  \end{aligned} && [\dom{Sum}]\\[-3pt]
&\Denotv{P}{S_1 \otimes \dots \otimes S_m}e \defas \bmatch\; (\domfunc{dnf}\, e) && [\dom{Product}] \\[-3pt]
&\quad \begin{aligned}[t]
  &\vartriangleright \sexpr{t \; \token{in}\; v} \Rightarrow
  \begin{aligned}[t]
    &\blet\; n \;\bbe\, \min\set{1\,{\le}i\,{\le}m \mid (\domfunc{vars}\, e) \subset (\domfunc{scope}\, S_i)} \\[-3pt]
    &\bin\; \Denotv{P}{S_n}e
    \end{aligned} \span\span\\[-3pt]
  &\vartriangleright \sexpr{e_1 \sqcap \dots \sqcap e_\ell} \Rightarrow \span\span\\[-3pt]
  &\qquad \prod_{1 \le i \le m} \left[\begin{aligned}[m]
    &\bmatch\; \set{1 \le j \le \ell \mid (\domfunc{vars}\, e_j) \subset (\domfunc{scope}\, S_i) }\span\span\\[-3pt]
    &\vartriangleright \set{n_1, \dots, n_k}
      \Rightarrow  \Denotv{P}{S_i}(e_{n_1} \sqcap \dots \sqcap e_{n_k}) \span\span\\[-4pt]
    &\vartriangleright \set{} \Rightarrow 1
    \end{aligned}\right]
  \span\span\\[-4pt]
  &\vartriangleright \sexpr{e_1 \sqcup \dots \sqcup e_\ell} \Rightarrow
  \sum\limits_{J \subset [\ell]}\left[
        (-1)^{|J|-1}\;\Denotv{P}{S_1 \otimes \dots \otimes S_m}
        (\sqcap_{i \in J}\, e_i)\right]
\end{aligned} \hspace{-.25cm}
\end{align*}
\captionsetup{belowskip=0pt,aboveskip=0pt, textfont=bf}
\caption{Sum-Product Expressions (Distribution Semantics)}
\label{lst:core-semantics-sum-product}
\end{sublisting}
\end{framed}

\captionsetup{aboveskip=2pt}
\caption{Syntax and semantics of a core calculus for sum-product expressions and related domains.}
\label{lst:core-semantics}
\end{listing*}

\noindentparagraph{Basic Outcomes} Random variables in the calculus take values in the
$\dom{Outcome} \defas \dom{Real} + \dom{String}$ domain.
The symbol $+$ here indicates a sum (disjoint-union) data type, whose
elements are formed by the injection operation, e.g.,
$\inj[r]{\dom{Real}}{\dom{Outcome}}$ for $r \in \dom{Real}$.
This domain is used to model mixed-type random variables, such
as $X$ in the following \sppl{} program:
\begin{lstlisting}[style=sppl,frame=single,basicstyle=\ttfamily\footnotesize]
Z ~ normal(0, 1)
if   (Z <= 0):    X ~ "negative" # string
elif (0 < Z < 4): X ~ 2*exp(Z)   # continuous real
elif (4 <= Z):    X ~ atomic(4)  # discrete real
\end{lstlisting}
The $\dom{Outcomes}$ domain
denotes a subset of $\dom{Outcome}$ as defined by the valuation
function $\valfunc{V}$ (Lst.~\ref{lst:core-semantics-outcomes}).
For example,
$\sexpr{\sexpr{b_1\,r_1}\,\sexpr{r_2\,b_2}}$ specifies a
(open, closed, or clopen) real interval
and $\settt{s_1\dots s_m}^{b}$ is a set of strings, where $b = \ttrue$
indicates the complement (meta-variables such as $m$
indicate an arbitrary but finite number of repetitions of
a domain variable or subexpression).
The operations $\domfunc{union}$, $\domfunc{intersection}$, and
$\domfunc{complement}$ operate on $\dom{Outcomes}$ in the usual way
(while preserving certain semantic invariants, see Appx.~\ref{appx:aux-fn})

\noindentparagraph{A Sigma Algebra of Outcomes}  To speak precisely about random
variables and measures on $\dom{Outcome}$,
we define a sigma-algebra $\mathcal{B}(\dom{Outcome}) \subset \mathcal{P}(\dom{Outcome})$ as follows:
\begin{enumerate}[wide=0pt]
\item Let $\tau_{\dom{Real}}$ be the usual topology on $\dom{Real}$.
\item Let $\tau_{\dom{String}}$ be the discrete topology on $\dom{String}$.
\item Let $\tau_{\dom{Outcome}} \defas \tau_{\dom{Real}} \uplus \tau_{\dom{String}}$
be the disjoint-union topology on $\dom{Outcome}$,
where $U$ is open iff
$\set{r \mid (\inj[r]{\dom{Real}}{\dom{Outcome}}) \in U}$ is open in $\dom{Real}$ and
$\set{s \mid (\inj[s]{\dom{String}}{\dom{Outcome}}) \in U}$ is open in $\dom{String}$.
\item Let $\mathcal{B}(\dom{Outcome})$ be the Borel sigma-algebra of $\tau_{\dom{Outcome}}$.
\end{enumerate}

\begin{remark}
\label{remark:outcomes-measurable}
As measures on $\dom{Real}$ are defined by their values on open
intervals and measures on $\dom{String}$ on singletons, we can speak
of probability measures on $\mathcal{B}(\dom{Outcome})$ as mappings
from $\dom{Outcomes}$ to $[0,1]$.
\end{remark}


\begin{figure*}[t]
\centering

\begin{subfigure}{\linewidth}
\begin{subfigure}[b]{.17\linewidth}
\begin{lstlisting}[style=sppl,frame=single,basicstyle=\ttfamily\scriptsize]
X ~ normal(0, 2)
if X < 1:
 Z ~ -X**3 + X**2 + 6*X
else:
 Z ~ 5*sqrt(X) + 11
\end{lstlisting}
\caption{Prior Program}
\label{subfig:poly-invert-prior-program}
\end{subfigure}\hfill
\begin{subfigure}[b]{.3\linewidth}
\begin{adjustbox}{max width=\linewidth}
\begin{tikzpicture}
\Tree
  [.\node[circle,draw,inner sep = 1pt,](root){$+$};
    \edge node[auto=right]{.69};
    [.\node[name=t1, label={[name=t1l,label distance=.25cm]above left:{$({-\infty}, 1)$}}]{$X{\sim}{N}(0,2)$};
      \edge[-stealth,thick,color=red] node[auto=right,color=black]{$Z$};
      $-X^3{+}X^2{+}6X$ ]
    \edge node[auto=left]{.31};
    [.\node[name=t2,label={[name=t2l,label distance=.25cm]above right:{$[1,\infty)$}}]{$X{\sim}N(0,2)$};
      \edge[-stealth,thick,color=red] node[auto=right,color=black]{$Z$};
      $5\sqrt{X}{+}11$ ] ]
\draw[-stealth] (t1l) -- (t1);
\draw[-stealth] (t2l) -- (t2);
\end{tikzpicture}
\end{adjustbox}
\caption{Prior Sum-Product Expression}
\label{subfig:poly-invert-prior-spe}
\end{subfigure}\hfill
\begin{subfigure}[b]{.1\linewidth}
\begin{lstlisting}[style=sppl,frame=single,basicstyle=\ttfamily\scriptsize]
condition
  Z**2 <= 4
  and Z >= 0
\end{lstlisting}
\caption{Condition}
\label{subfig:poly-invert-condition-program}
\end{subfigure}\hfill
\begin{subfigure}[b]{.37\linewidth}
\begin{adjustbox}{max width=\linewidth}
\begin{tikzpicture}
\tikzset{level 1/.style={level distance=2cm}}
\tikzset{level 2/.style={level distance=1.25cm}}
\Tree
  [.\node[circle,draw,inner sep = 1pt,](root){$+$};
    \edge node[auto=right]{.16};
    [.\node[name=t1,label={[name=t1l,label distance=.25cm]above left:$[-2.2,-2]$}]{$X{\sim}N(0,2)$};
      \edge[-stealth,thick,color=red] node[auto=right,color=black]{$Z$};
      $-X^3{+}X^2{+}6X$ ]
    \edge node[pos=.5, inner sep =.5pt, auto=right]{.49};
    [.\node[name=t2,label={[name=t2l,label distance=.15cm]above right:$[0,0.32]$}]{$X{\sim}N(0,2)$};
      \edge[-stealth,thick,color=red] node[auto=right,color=black]{$Z$};
      $-X^3{+}X^2{+}6X$ ]
    \edge node[auto=left]{.35};
    [.\node[name=t3,label={[name=t3l,label distance=.25cm]above right:$[3.2,4.8]$}]{$X{\sim}N(0,2)$};
      \edge[-stealth,thick,color=red] node[auto=right,color=black]{$Z$};
      $5\sqrt{X}{+}11$ ] ]
\draw[-stealth] (t1l) -- (t1);
\draw[-stealth] (t2l) -- (t2);
\draw[-stealth] (t3l) -- (t3);
\end{tikzpicture}
\end{adjustbox}
\caption{Conditioned Sum-Product Expression}
\label{subfig:poly-invert-post-spe}
\end{subfigure}
\vspace{-4pt}
\end{subfigure}

\FrameSep0pt
\begin{framed}
\begin{subfigure}[b]{.5\linewidth}
\includegraphics[width=\textwidth]{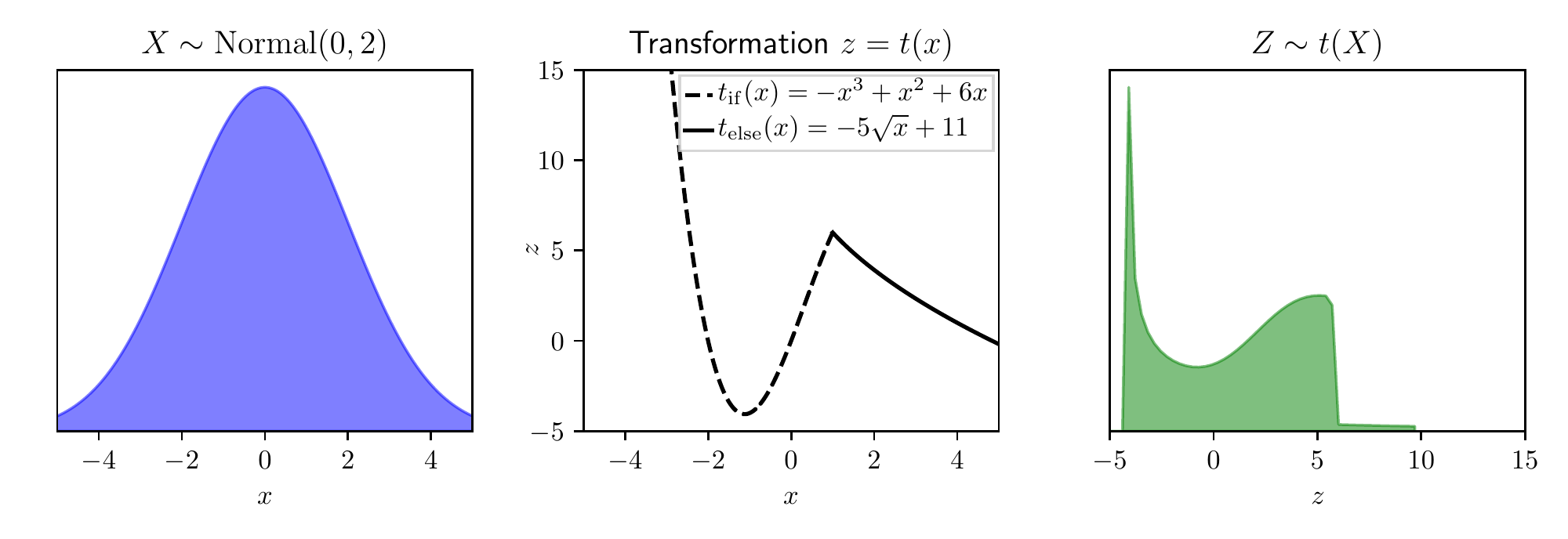}
\captionsetup{aboveskip=-5pt, belowskip=0pt}
\caption{Prior Marginal Distributions}
\label{subfig:poly-invert-prior-dist}
\end{subfigure}\vrule%
\begin{subfigure}[b]{.5\linewidth}
\includegraphics[width=\textwidth]{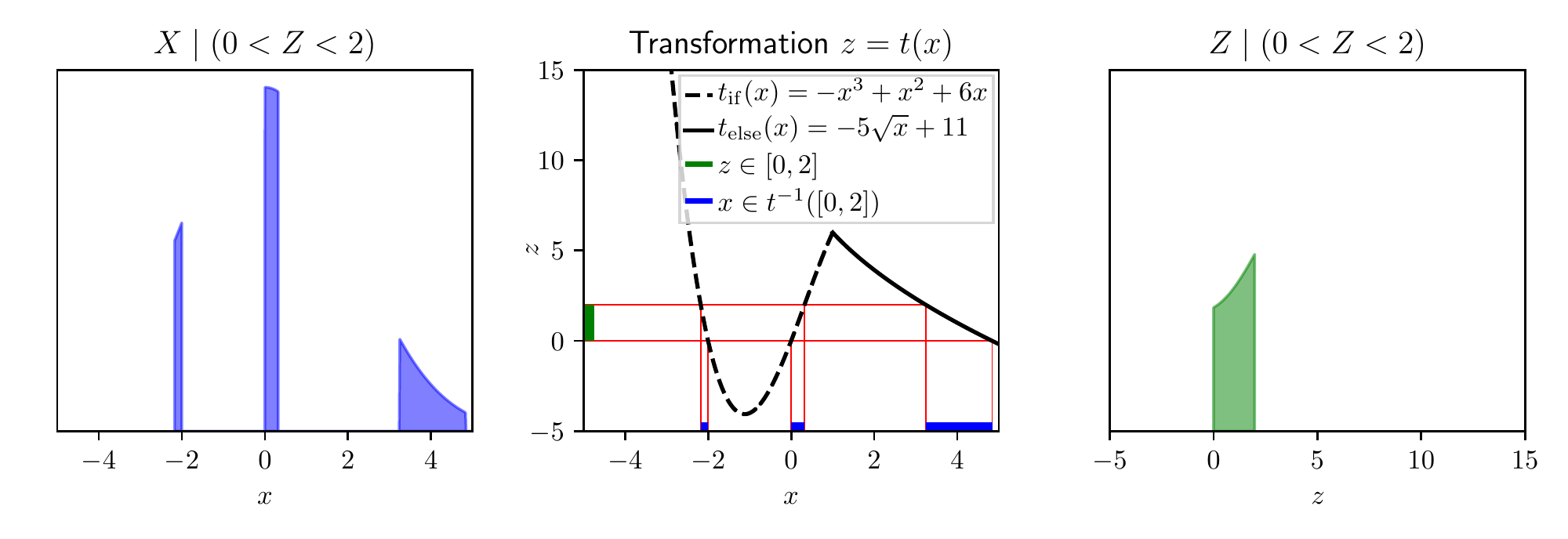}
\captionsetup{aboveskip=-5pt, belowskip=0pt}
\caption{Conditioned Marginal Distributions}
\label{subfig:poly-invert-post-dist}
\end{subfigure}
\end{framed}

\captionsetup{aboveskip=2pt,belowskip=0pt}
\caption{Inference on a stochastic many-to-one transformation of a real random variable in \sppl{}.}
\label{fig:poly-invert-main}
\label{fig:poly-invert}
\end{figure*}

\noindentparagraph{Real Transformations} Lst.~\ref{lst:core-semantics-transformations}
shows real transformations that can be applied to
variables in the calculus.
The $\dom{Identity}$ $\dom{Transform}$, written $\scall{Id}{x}$, is a
terminal subexpression of any $\dom{Transform}$ $t$ and contains a single
variable name that specifies the ``dimension'' over which $t$ is defined.
The list of all transforms is in
Appx.~\ref{appx:transforms-valuation}.
The key operation involving transforms is computing the preimage of $\dom{Outcomes}$
$v$ under $t$ using
$\domfunc{preimg} : \dom{Transform} \to \dom{Outcomes} \to \dom{Outcomes}$
which satisfies the following properties:
{\small\begin{align*}
(\inj[r]{\dom{Real}}{\dom{Outcome}}) \in \Denotv{V}{\domfunc{preimg}\; t\; v}
  &\iff \Denotv{T}{t}(r) \in \Denotv{V}{v}
\\
(\inj[s]{\dom{String}}{\dom{Outcome}}) \in \Denotv{V}{\domfunc{preimg}\; t\; v}
  &\iff (t \in \dom{Identity}) \wedge (s \in \Denotv{V}v).
\end{align*}}%
Appx.~\ref{appx:transforms-preimage} presents a symbolic solver
that implements $\domfunc{preimg}$ for each $\dom{Transform}$,
which is leveraged to enable exact probabilistic inferences on
transformed variables in \sppl{}.
Fig.~\ref{fig:poly-invert-main} and Appx.~\ref{appx:transforms-example}
show example inferences with transforms.
%

\noindentparagraph{Events} Lst.~\ref{lst:core-semantics-events} shows
the $\dom{Event}$ domain, which specifies predicates on variables.
The valuation $\Denotv{E}{e}: \dom{Var} \to \dom{Outcomes}$
of an $\dom{Event}$ takes a
variable $x$ and returns the set $v \in \dom{Outcomes}$ of elements
that satisfy the predicate along dimension $x$, leveraging the
properties of $\domfunc{preimg}$.
%
This domain specifies measurable sets of an $n$-dimensional distribution
on variables $\set{x_1, \dots, x_n}$ as follows:
let $\sigma_{\rm gen}(\set{A_1, A_2, \dots})$ be the sigma-algebra
generated by $A_1, A_2, \dots$, and define
$\mathcal{B}^n(\dom{Outcome})$
  $\defas \sigma_{\rm gen}(\set{\textstyle\prod_{i=1}^{n}U_i \mid \forall_{1 \le i \le n}.\ U_i \in \mathcal{B}(\dom{Outcomes})}).$
%
In other words, $\mathcal{B}^n(\dom{Outcome})$ is the
$n$-fold product sigma-algebra generated by open rectangles
of $\dom{Outcomes}$.
Any $e \in \dom{Event}$
specifies a measurable set $U$ in $\mathcal{B}^n(\dom{Outcome})$,
whose $i$th coordinate $U_i = \Denotv{E}{e}x_i$ if $x_i \in
\domfunc{vars}\, e$, and $U_i = \dom{Outcomes}$ otherwise.
Any $\dom{Transform}$ in $e$ is solved and any $\dom{Var}$ that does
not appear in $e$ is marginalized, as in the next example.
\begin{example}
\label{example:event-set}
Let $\set{\token{X}, \token{Y}, \token{Z}}$ be elements of $\dom{Var}$.
Then
\begin{equation*}
e \defas \scall{Reciprocal}{\scall{Id}{\token{X}}} \, \token{in}\, \sintvl{\tfalse\;1}{2\;\tfalse}
\end{equation*}
corresponds to the $\mathcal{B}^3(\dom{Outcome})$-measurable set
\begin{equation*}
\set{\left(\inj[r]{\dom{Real}}{\dom{Outcome}}\right) \mid 1/2 \le r  \le 1}
\times \dom{Outcomes} \times \dom{Outcomes}.
\end{equation*}
\end{example}

As in Remark~\ref{remark:outcomes-measurable}, we may speak about
probability distributions on $\mathcal{B}^n(\dom{Outcome})$
as mappings from $\dom{Event}$ to $[0,1]$.

\noindentparagraph{Primitive Distributions} Lst.~\ref{lst:core-semantics-distributions}
presents the primitive distributions out of which multivariate
distributions are constructed.
The $\dom{CDF}$ domain contains cumulative distribution
functions $F$, whose quantile function is denoted $F^{-1}$ and
derivative $F'$.
$\dom{CDF}$ is in 1-1 correspondence with all distributions and
  random variables on $\dom{Real}$~\citep[Thms~12.4, 14.1]{billingsley1986}.
The $\dom{Distribution}$ domain specifies continuous real,
  atomic real (on the integers) and nominal distributions.
The denotation $\Denotv{D}{d}$ of a $\dom{Distribution}$
is a distribution
on $\dom{Outcomes}$ (recall Remark~\ref{remark:outcomes-measurable}).
For example, $\scall{DistR}{F\, r_1\, r_2}$ is the restriction of
$F$ to a positive measure interval $[r_1, r_2]$.
The distributions specified by $\token{DistR}$ and
$\token{DistI}$ can be simulated using a variant of the integral
probability transform
(Prop.~\ref{prop:inverse-cdf} in Appx.~\ref{appx:syntax-core-calculus}),
which also defines their sampling semantics.

\noindentparagraph{Sum-Product Expressions}
Lsts.~\ref{lst:core-semantics-sum-product-density}
and~\ref{lst:core-semantics-sum-product} show the probability density and
distribution semantics of the $\SPE$ domain, respectively, whose
elements are probability measures.
The following conditions specify well-definedness for $\SPE$:
\begin{enumerate}[wide=3pt,labelsep=2pt,align=left,leftmargin=*,label=(C\arabic*)]
\item \label{item:definedness-env-base}
  $\forall\, \scall{Leaf}{x\; d\; \sigma}$.
  $x \in \sigma$ and $\sigma(x) = \token{Id}\sexpr{x}$.

\item \label{item:definedness-env-topo}
  $\forall\, \scall{Leaf}{x\; d\; \sigma}$.
  If $\mathrm{dom}(\sigma) = \set{x, x_1, \dots, x_m}$ for some $m > 0$
  then $\forall_{1 \le t \le m}.\ (\domfunc{vars}\; \sigma(x_t)) \subset \set{x, x_1, \dots, x_{t-1}}$.

\item \label{item:definedness-prod-scope}
  $\forall \sexpr{S_1 \otimes \dots \otimes S_m}$.
  $\forall i {\ne} j$.
  $(\domfunc{scope}\, S_i) \cap (\domfunc{scope}\, S_j) = \varnothing$.

\item \label{item:definedness-sum-scope}
$\forall \sexpr{S_1\, w_1}\oplus \dots \oplus \sexpr{S_m\,w_m}$.
  $\forall i$. $(\domfunc{scope}\, S_i) = (\domfunc{scope}\, S_1)$.

\item \label{item:definedness-sum-weights}
$\forall \sexpr{S_1\, w_1}\oplus \dots \oplus \sexpr{S_m\,w_m}$.
  $w_1 + \dots + w_n > 0$.
\end{enumerate}

For $\dom{Leaf}$,
\ref{item:definedness-env-base} ensures that
  $\sigma$ maps the leaf variable $x$ to the $\dom{Identity}$ $\dom{Transform}$ and
\ref{item:definedness-env-topo} ensures
  there are no cyclic dependencies or undefined variables in $\dom{Environment}$ $\sigma$.
Condition
  \ref{item:definedness-prod-scope} ensures the scopes of all children
  of a $\dom{Product}$ are disjoint and
  \ref{item:definedness-sum-scope} ensures the scopes of all children
  of a $\dom{Sum}$ are identical,
  which together ensure completeness and
  decomposability from sum-product networks~\citep[Defs.~4, 5]{poon2011}.

In Lst.~\ref{lst:core-semantics-sum-product}, the denotation
$\Denotv{P}S$ of $S\in\SPE$ is a map from $e \in \dom{Event}$ to its
probability under the $n$-dimensional distribution defined by $S$,
where $n \defas \abs{\domfunc{scope}\, S}$ is the number of variables
in $S$.
A terminal node $\token{Leaf}\sexpr{x\, d\, \sigma}$ is comprised of a
$\dom{Var}$ $x$, $\dom{Distribution}$ $d$, and $\dom{Environment}$ $\sigma$
that maps other variables to a $\dom{Transform}$ of $x$,
e.g., $\mathtt{Z} \mapsto \scall{Poly}{\scall{Root}{\scall{Id}{\mathtt{X}}\; 2}\; [11, 5]}$.
%

When assessing the probability of $e$ at a $\dom{Leaf}$,
$\domfunc{subsenv}$ (Lst.~\ref{lst:core-semantics-auxfn-subsenv} in Appx.~\ref{appx:syntax-core-calculus}) rewrites $e$
as an $\dom{Event}$ $e'$ on one
variable $x$, so that the probability of $\dom{Outcomes}$ that satisfy
$e$ is exactly $\Denotv{D}{d}(\Denotv{E}{e'}x)$.
The $\domfunc{scope}$ function
(Lst.~\ref{lst:core-semantics-auxfn-scope} in Appx.~\ref{appx:syntax-core-calculus})
returns the list of variables in $S$.
For a $\dom{Sum}$, the probability of $e$ is a weighted average
of the probabilities under each subexpression.
For a $\dom{Product}$, the semantics are defined in terms of
$(\domfunc{dnf}\, e)$ (Lst.~\ref{lst:dnf-appx} in Appx.~\ref{appx:aux-fn}),
leveraging inclusion-exclusion.

In Lst.~\ref{lst:core-semantics-sum-product-density}, the denotation
$\Denot[\mathbb{P}_0]{S}$ defines the density semantics of $\SPE$,
used for measure zero events such as $\set{X=3, Y=\pi, Z=\dquote{foo}}$
under a mixed-type base measure.
These semantics, which define the density as a pair, adapt
``lexicographic likelihood-weighting'', an approximate inference
algorithm for discrete-continuous Bayes Nets~\citep{wu2018}, to exact
inference using ``lexicographic enumeration'' for $\SPE$.


\section{Conditioning Sum-Product Expressions}
\label{sec:condition}

We next present the main theoretical result for exact inference on
probability distributions defined by an expression $S \in \SPE$ and describe the
inference algorithm for conditioning on an $\dom{Event}$
(Lst.~\ref{lst:core-semantics-events}) in the core calculus,
which includes transformations and predicates with set-valued constraints.
\begin{restatable}[Closure under conditioning]{theorem}{condclosed}
\label{thm:closure}
Let $S\,{\in}\,\SPE$ and $e\,{\in}\,\dom{Event}$ be given, where
$\Denotv{P}{S}e\,{>}\,0$.
There exists an algorithm which, given $S$ and $e$, returns
$S'\,{\in}\,\SPE$ such that, for all $e'\,{\in}\,\dom{Event}$, the probability
of $e'$ according to $S'$ is equal to the conditional probability
of $e'$ given $e$ according to $S$, i.e.,
\begin{align}
\Denotv{P}{S'}e' \equiv
  \Denotv{P}{S}(e' \mid e) \defas
  \frac{\Denotv{P}{S}({e \sqcap e'})}{\Denotv{P}{S}e}.
\label{eq:sensize}
\end{align}
\end{restatable}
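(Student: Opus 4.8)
The plan is to prove the theorem by structural induction on $S \in \SPE$, maintaining as invariants that the constructed $S'$ again satisfies the well-definedness conditions \ref{item:definedness-env-base}--\ref{item:definedness-sum-weights}, that $\domfunc{scope}\, S' = \domfunc{scope}\, S$, and that $\Denotv{P}{S'}e' = \Denotv{P}{S}(e' \mid e)$ for \emph{every} $e' \in \dom{Event}$. Throughout I would use that $\Denotv{P}{S}$ is a genuine probability measure on $\mathcal{B}^n(\dom{Outcome})$ (so monotone and finitely additive over disjoint events), which is exactly what completeness and decomposability, i.e.\ \ref{item:definedness-prod-scope} and \ref{item:definedness-sum-scope}, buy us \citep{poon2011}. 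I would also freely use that events are closed under conjunction and that the complement of an atomic event $\sexpr{t\, \token{in}\, v}$ is the atomic event $\sexpr{t\, \token{in}\, \bar v}$, since $\dom{Outcomes}$ is closed under complement. Without loss of generality assume $\domfunc{vars}\, e \subseteq \domfunc{scope}\, S$ (any other variable is marginalized, i.e.\ constrained only by $\dom{Outcomes}$, and conditioning on it is trivial).

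\textbf{Base case ($\dom{Leaf}$).} For $S = \scall{Leaf}{x\, d\, \sigma}$ I would first apply $\domfunc{subsenv}$ to rewrite $e$ as an equivalent one-variable event on $x$ and extract $v \defas \Denotv{E}{\domfunc{subsenv}\, e\, \sigma}x \in \dom{Outcomes}$, so that $\Denotv{P}{S}e = \Denotv{D}{d}(v)$; correctness of this reduction rests on the stated characterizing properties of $\domfunc{preimg}$ and the symbolic solver of Appx.~\ref{appx:transforms-preimage}. I then condition the primitive distribution $d$ on $v$ by intersecting $v$ with the support of $d$ and renormalizing: for $\scall{DistS}{\cdot}$ this rescales the string weights and returns a single $\dom{Leaf}$; for $\scall{DistR}{F\, r_1\, r_2}$ (resp.\ $\scall{DistI}{F\, r_1\, r_2}$) the restriction $v \cap [r_1,r_2]$ reduces---after discarding its measure-zero atomic part, which is legitimate precisely because $\Denotv{D}{d}(v) > 0$---to a finite union of positive-measure subintervals (resp.\ contiguous integer blocks), and I return the $\dom{Sum}$ over the corresponding restricted $\dom{Leaf}$s weighted by their $F$-mass, keeping the same $\sigma$. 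A direct computation against the $\dom{DistStr}$/$\dom{DistReal}$/$\dom{DistInt}$ clauses gives the probability identity, and the well-definedness conditions and scope preservation are immediate.

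\textbf{Sum case.} For $S = \sexpr{S_1\, w_1} \oplus \dots \oplus \sexpr{S_m\,w_m}$, let $I \defas \set{i \mid \Denotv{P}{S_i}e > 0}$, nonempty since $\Denotv{P}{S}e > 0$. For $i \in I$ the induction hypothesis supplies $S_i'$, and I set $S' \defas \bigoplus_{i \in I}\sexpr{S_i'\; (w_i\,\Denotv{P}{S_i}e)}$. Using the $\dom{Sum}$ clause of $\Denotv{P}{\cdot}$, monotonicity to reinstate the $i \notin I$ terms as zero, and the law of total probability, one checks $\Denotv{P}{S'}e' = \Denotv{P}{S}(e' \mid e)$. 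The new normalizer is $(\sum_i w_i)\cdot \Denotv{P}{S}e > 0$, so \ref{item:definedness-sum-weights} holds; the common scope is unchanged, so \ref{item:definedness-sum-scope} holds.

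\textbf{Product case, and the main obstacle.} For $S = S_1 \otimes \dots \otimes S_m$ I compute $\domfunc{dnf}\, e$. If it is a single atom $\sexpr{t\, \token{in}\, v}$, then $\domfunc{vars}\, t = \set{x}$ is a singleton, so by \ref{item:definedness-prod-scope} exactly one factor $S_n$ has $x \in \domfunc{scope}\, S_n$; I recursively condition $S_n$ and leave the others, with decomposability giving the probability identity. If it is a conjunction $e_1 \sqcap \dots \sqcap e_\ell$ of atoms, each $e_j$ constrains a single variable and hence sits in a unique factor's scope; grouping the $e_j$ by factor and conditioning each factor independently (again by decomposability) on the conjunction of ``its'' atoms works. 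The genuinely hard case is a \emph{disjunction} $e_1 \sqcup \dots \sqcup e_\ell$: conditioning an independent product on a disjunction does not factor, and the naive inclusion--exclusion route of the $\dom{Product}$ semantics would produce negative mixture weights, violating \ref{item:definedness-sum-weights}. My plan is to \emph{atomize}: collect the finitely many atomic subevents $a_1,\dots,a_N$ occurring anywhere in $\domfunc{dnf}\, e$, and for each sign pattern $c \in \set{0,1}^N$ form the conjunction $g_c \defas a_1^{c_1} \sqcap \dots \sqcap a_N^{c_N}$, where $a_k^1 \defas a_k$ and $a_k^0$ is its (atomic) complement. The $g_c$ are pairwise mutually exclusive, and $e$ is equivalent to the disjoint union $\bigsqcup_{c \in C} g_c$ over the set $C$ of patterns that entail some $e_i$. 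Letting $C' \defas \set{c \in C \mid \Denotv{P}{S}g_c > 0}$, I return $S' \defas \bigoplus_{c \in C'}\sexpr{(S \mid g_c)\; \Denotv{P}{S}g_c}$, where each $S \mid g_c$ is produced by the already-handled pure-conjunction subcase (a structural recursion into the $S_j$). Finite additivity of $\Denotv{P}{S}$ over the disjoint $g_c$ gives $\Denotv{P}{S'}e' = \Denotv{P}{S}(e' \mid e)$, and $\sum_{c \in C}\Denotv{P}{S}g_c = \Denotv{P}{S}e > 0$ secures \ref{item:definedness-sum-weights}. This construction is exponential in $N$, which is acceptable for an existence claim; the separate linear-time guarantee (Thm.~\ref{thm:condition-linear}) is what governs efficiency on the restricted query class. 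The points I expect to require the most care are (a) that $\domfunc{subsenv}$ together with $\domfunc{preimg}$ genuinely reduces transformed, multi-atom events to one-variable $\dom{Outcomes}$, and (b) that every constructor above preserves \ref{item:definedness-env-base}--\ref{item:definedness-sum-weights} and the scope, so that the induction is well-founded and the returned object really lies in $\SPE$.
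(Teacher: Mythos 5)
Your proposal is correct and follows the same skeleton as the paper's proof in Appx.~\ref{appx:condition}: structural induction with $\dom{Leaf}$ as the base case (reduce $e$ to a one-variable $\dom{Outcomes}$ via $\domfunc{subsenv}$/$\domfunc{preimg}$, intersect with the support, renormalize, and emit a $\dom{Sum}$ of restricted leaves when the intersection is a $\dom{Union}$), the same reweighting $w_i \mapsto w_i\,\Denotv{P}{S_i}e$ for $\dom{Sum}$, and the same conjunction-dispatch-by-scope for $\dom{Product}$. The one place you genuinely diverge is the subroutine that turns a disjunction into a disjoint union of conjunctive clauses before conditioning a $\dom{Product}$. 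You atomize: enumerate all $2^N$ sign patterns over the atoms of $\domfunc{dnf}\,e$ and keep the patterns entailing some clause. The paper instead uses $\domfunc{disjoin}$ (Lst.~\ref{lst:dnf-disjoin}), which recursively replaces $e_i$ by $e_i \sqcap \bigsqcap_j (\domfunc{negate}\,e_j)$ over only the \emph{overlapping} earlier clauses $e_j$, and proves termination with the bound $\ell \le (2m-1)^h$ (Prop.~\ref{prop:disjoin-dnf}). Both yield a valid partition and hence a nonnegative-weight $\dom{Sum}$-of-$\dom{Product}$, so both suffice for the existence claim; the paper's version is tighter (it avoids splitting on atoms irrelevant to a given clause, which matters for the implementation and for the complexity analysis feeding Thm.~\ref{thm:condition-linear}), while yours is simpler to justify via finite additivity. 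One detail you should tighten: the complement of a literal $\sexpr{t\,\token{in}\,v}$ is a single $\dom{Containment}$ only because $\dom{Union}$ is a constructor of $\dom{Outcomes}$; the resulting $\bar v$ is generally a $\dom{Union}$ (e.g., the complement of an interval), so your $g_c$ are not in solved DNF until you re-expand unions into disjunctions and merge literals sharing a variable by intersection --- exactly what the paper's $\domfunc{normalize}$ does --- after which your pure-conjunction subcase applies as claimed.
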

Thm.~\ref{thm:closure} is a structural conjugacy
property~\citep{diaconis1979} for the family of probability
distributions defined by the $\SPE$ domain, where both the prior and
posterior are identified by elements of $\SPE$.
We establish Eq.~\eqref{eq:sensize} constructively, by
describing a new algorithm
$\domfunc{condition}: \SPE \to \dom{Event} \to \SPE$ that satisfies
\begin{align}
\Denotv{P}{(\domfunc{condition}\, S\, e)}e' = \Denotv{P}{S}(e' \mid e)
  \label{eq:zemstvo}
\end{align}
for all $e, e' \in \dom{Event}$ with $\Denotv{P}{S}e > 0$.
Refer to Appx.~\ref{appx:condition} for the proof.
Fig.~\ref{fig:hyperrectangle} shows a conceptual example of how
$\domfunc{condition}$ works, where the prior distribution is a
$\dom{Product}$ $S$ and the conditioned distribution is a
$\dom{Sum}$-of-$\dom{Product}$ $S$'.
Fig.~\ref{fig:poly-invert-main} shows an example of the closure
property when the expression has transformed variables
(details in Appx.~\ref{appx:transforms}).

\begin{remark}
\label{remark:condition-measure-zero}
Thm.~\ref{thm:closure} refers to a positive probability $\dom{Event}$
$e$. As with sum-product networks, $\SPE$ is also
closed under conditioning on a $\dom{Conjunction}$ of possibly measure zero
equality constraints on non-transformed variables, which appear in many
PPL interfaces~\citep{saad2016,towner2019,molina2020}.
Appx.~\ref{subsec:condition-spe-equality} presents the
$\domfunc{condition}_0$ algorithm for inference on such events,
leveraging the generalized mixed-type density semantics in
Lst.~\ref{lst:core-semantics-sum-product-density}.
\end{remark}

The next result, Thm.~\ref{thm:condition-linear}, states a sufficient
requirement for inference using $(\domfunc{condition}\, S\, e)$ to
scale linearly in the size of $S$, which holds for both zero and
positive measure events.
\begin{restatable}{theorem}{condlinear}
\label{thm:condition-linear}
The runtime of $(\domfunc{condition}\, S\, e)$ scales linearly in the
number of nodes in the graph representing $S$ whenever $e$ is a single $\dom{Conjunction}$
$\sexpr{t_1\,\token{in}\,v_1} \sqcap \dots \sqcap \sexpr{t_m\,\token{in}\,v_m}$
of $\dom{Containment}$ constraints on non-transformed variables.
\end{restatable}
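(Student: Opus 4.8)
Correctness of $\domfunc{condition}$ is already supplied by Thm.~\ref{thm:closure}; it remains to bound its runtime. The plan is to show that, for conditioning events of the stated form, both $\domfunc{condition}$ and the probability valuation $\Denotv{P}{S}e$ that it invokes can be evaluated by two depth-first passes over the directed acyclic graph of $S$, each doing an amount of work at every node that is bounded independently of the number of nodes $n$ of $S$ (the bound may depend on the length $m$ of the conjunction and on the symbolic descriptions of the sets $v_i$, but these belong to the query, not to $S$). First I would fix the syntactic class $\mathcal{E}$ of events that are finite conjunctions $\sexpr{\scall{Id}{x_1}\,\token{in}\,v_1} \sqcap \dots \sqcap \sexpr{\scall{Id}{x_m}\,\token{in}\,v_m}$ of $\dom{Containment}$ constraints on non-transformed variables (we may take $x_1,\dots,x_m$ distinct, merging duplicates via intersection of the corresponding $v_i$), and prove two closure facts: $\domfunc{dnf}(e)$ contains no $\dom{Disjunction}$ for $e\in\mathcal{E}$, since a conjunction of atoms is already in disjunctive normal form; and $\mathcal{E}$ is closed under the restriction map $e \mapsto \sqcap\{\sexpr{\scall{Id}{x_i}\,\token{in}\,v_i} : x_i \in W\}$ for any variable set $W$. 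The first fact is crucial: it guarantees that the $\dom{Product}$ clause of Lst.~\ref{lst:core-semantics-sum-product} only ever enters the single-$\dom{Containment}$ or $\dom{Conjunction}$ branch, never the $\dom{Disjunction}$ branch with its subset-summed inclusion--exclusion, so a $\dom{Product}$ is conditioned into a $\dom{Product}$ (not, as in the general case, into a $\dom{Sum}$-of-$\dom{Product}$s) and a $\dom{Leaf}$ into a $\dom{Leaf}$ or a $\dom{Sum}$ of at most $\abs{v_i}$ leaves.

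The key step is a structural invariant, proved by induction over a topological order of the DAG: whenever $\domfunc{condition}$ (or $\Denotv{P}{}$) reaches a node $N$, the event it carries is exactly $e_N \defas \sqcap\{\sexpr{\scall{Id}{x_i}\,\token{in}\,v_i} : x_i \in \domfunc{scope}\,N\}$, which depends only on $N$ and not on the path by which $N$ was reached. For a $\dom{Sum}$ this uses definedness condition~\ref{item:definedness-sum-scope}: the whole event is handed unchanged to every child and all children share the scope; for a $\dom{Product}$ it uses~\ref{item:definedness-prod-scope}: since the children's scopes are disjoint, each atomic constraint of $e_N$ goes to the unique child whose scope contains its variable; the $\dom{Leaf}$ case, including the rewriting performed by $\domfunc{subsenv}$ at leaves owning several variables, is immediate. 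Because $e_N$ is path-independent, the recursion can be memoized on $N$ alone, so each of the $n$ nodes is visited once and deduplicated subexpressions are never re-expanded --- this is exactly where the linear (rather than exponential) bound comes from. Concretely, I would run a first bottom-up pass caching $\Denotv{P}{N}e_N$ at every node (bounded work per node: one primitive-distribution evaluation at a $\dom{Leaf}$, a weighted average of cached children's values at a $\dom{Sum}$, a product of cached children's values at a $\dom{Product}$), then a second pass building $S'$.

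For the construction pass I would bound per-node cost, writing $c_N$ for the number of children of $N$: a $\dom{Leaf}$ is replaced by its restriction computed from the leaf's CDF, in $O(\abs{v_i})$; a $\dom{Sum}$ is replaced by a $\dom{Sum}$ whose branch weights are rescaled by the cached $\Denotv{P}{S_i}e$, with zero-probability branches dropped and the rest renormalized, in $O(c_N)$; a $\dom{Product}$ is replaced by a $\dom{Product}$ of its conditioned children after routing the $\le m$ constraints of $e_N$ to children by scope, in $O(m + c_N)$. Since $\sum_N c_N = n-1$, the whole algorithm runs in $O(n)$ time --- and outputs $S'$ with $\abs{S'} = O(n)$ --- up to the query-dependent constants, which is the claim. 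The measure-zero variant $\domfunc{condition}_0$ of Remark~\ref{remark:condition-measure-zero}, built on the density semantics of Lst.~\ref{lst:core-semantics-sum-product-density}, admits the identical analysis, since that semantics too dispatches a $\dom{Sum}$ uniformly to all children and a $\dom{Product}$ by disjoint scope.

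I expect the main obstacle to be making the memoization argument airtight: proving that the sub-event delivered to a shared node is genuinely path-independent and hence that deduplicated nodes are processed only once. Everything hinges on this, because the $\dom{Sum}$ clause copies the event into every child, so a naive traversal would take time proportional to the number of root-to-leaf paths, which is exponential in $n$ for precisely the factorized, deduplicated expressions that \sppl{} is built to produce (e.g.\ Fig.~\ref{fig:hmm-spe-factorized}). The argument is not deep, but it must invoke the definedness conditions~\ref{item:definedness-prod-scope}, \ref{item:definedness-sum-scope}, and~\ref{item:definedness-env-topo} with care and track the syntactic form of the event through $\domfunc{subsenv}$ and $\domfunc{preimg}$.
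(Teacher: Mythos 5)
Your proposal is correct and rests on the same key observation as the paper's proof: because $e$ is a single conjunction of containments on non-transformed variables, $\domfunc{disjoin}$ (equivalently, your DNF argument) yields a single conjunction, so the $\dom{Product}$ case matches the one-hyperrectangle pattern and makes exactly one subcall per child, while $\dom{Sum}$ and $\dom{Leaf}$ trivially make at most one subcall per child and zero subcalls respectively. The only substantive difference is that the paper's proof concludes by assuming each node has exactly one parent (a tree), deferring the shared-node case to the separate memoization optimization of Sec.~\ref{subsec:translation-opt}, whereas you fold the path-independence and memoization argument for deduplicated DAGs directly into the proof --- a more careful treatment of the same idea, not a different route.
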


\input{figures/hyperrectangle}


\section{Translating Probabilistic Programs to Sum-Product Expressions}
\label{sec:translation}


\begin{listing}[t]
\footnotesize
\begin{sublisting}[b]{\linewidth}
\FrameSep1pt
\begin{framed}
\begin{align*}
&x\in\dom{Var};\;
  y\,{\in}\,\dom{ArrayVar};\;
  n\,{\in}\,\dom{Natural};\;
  b\,{\in}\,\dom{Boolean};\;
  r\,{\in}\,\dom{Real};\;
  s\,{\in}\,\dom{String}; \\
&o_{\rm arith} \,{\in}\,\set{\texttt{+}, \texttt{-}, \texttt{*}, \texttt{/}, \texttt{**}};\quad
  o_{\rm bool}\,{\in}\,\set{\kw{and}, \kw{or}};\quad
  o_{\rm neg}\,{\in}\,\set{\kw{not}};\\
&o_{\rm rel}\,{\in}\,\set{\texttt{<=}, \texttt{<}, \texttt{>}, \texttt{>=}, \texttt{==}, \texttt{in}};\quad
D          \,{\in}\,\set{\token{normal}, \token{poisson}, \token{choice}, \dots}; \\
&E \begin{aligned}[t]&\in \dom{Expr}
  \defas x \gor n \gor b \gor r \gor s \gor y\bracktt{E}
  \gor D\sexpr{E^*}
  \gor \sexpr{E_1, \dots, E_m} \\
  &\gor E_1\, o_{\rm arith}\, E_2
  \gor o_{\rm neg}\, E
  \gor E_1\, o_{\rm bool}\, E_2
  \gor E_1\, o_{\rm rel}\, E_2
\end{aligned} \\
&C \begin{aligned}[t]&\in \dom{Command}
  \defas x\, \texttt{=}\, E
  \gor y\bracktt{E_1}\,\texttt{=}\,E_2
  \gor x\,\texttt{\textasciitilde}\,E
  \gor y\bracktt{E_1}\,\texttt{\textasciitilde}\,E_2
  \gor y\,\texttt{=}\,\kw{array}\sexpr{E} \\
  &\gor \kw{skip}
  \gor C_1\texttt{;} C_2
  \gor \kw{if}\; E\; \settt{C_1}\; \kw{else}\; \settt{C_2}
  \gor \kw{condition}\sexpr{E} \\
  &\gor \kw{for}\; x\;\kw{in}\; \kw{range}\sexpr{E_1, E_2}\;\settt{C}
  \gor \kw{switch}\; x_1\; \kw{cases}\; \sexpr{x_2 \; \kw{in}\; E}\; \settt{C}
\end{aligned}
\end{align*}
\end{framed}
\end{sublisting}%
\captionsetup{aboveskip=2pt, belowskip=0pt}
\caption{Source syntax of $\sppl{}$.}
\label{lst:sppl-syntax}
\medskip
\footnotesize
\staterule{Sample}
{E \Downarrow d; \hfill \mbox{where } x \not\in \domfunc{scope}\, S}
{\langle x\,\texttt{\textasciitilde}\,E, S \rangle \translate S \otimes (x\; d\; \set{x\mapsto\token{Id}\sexpr{x}})}

\staterule{Transform-Leaf}
{E \Downarrow t; \quad \textrm{where}\ \domfunc{vars}\, t\,{\in}\,\mathrm{dom}(\sigma), x \not\in \mathrm{dom}(\sigma)}
{\langle x\,\texttt{=}\,E, \scall{Leaf}{x'\, d\, \sigma} \rangle \translate \scall{Leaf}{x'\; d\; (\sigma \cup \set{x\mapsto t})}}

\staterule{Transform-Sum}
{E \Downarrow t, \forall_{1\le i\le m}. \langle x = E, S_i\rangle \translate S'_i}
{\langle x = E, \oplus_{i=1}^{m}\sexpr{S_i\, w_i} \translate \oplus_{i=1}^{m}\sexpr{S'_i\, w_i}}

\staterule{Transform-Prod}
{E \Downarrow t, \langle x = E, S_j \rangle \translate S_j';
\quad \mbox{where } j \defas \min\set{i {\mid} (\domfunc{vars}\, E)\,{\in}\,\domfunc{scope}\; S_i} > 0}
{\langle x\,\texttt{=}\,E, \otimes_{i=1}^{m} S_i \rangle \translate \otimes_{i=1,i\ne j}^{m} S_i \otimes S'_j}


\staterule{IfElse}
{E \Downarrow e,
    \langle C_1, \domfunc{condition}\, S\, e \rangle \translate S_1,
    \langle C_2, \domfunc{condition}\, S\, (\domfunc{negate}\, e) \rangle \translate S_2
    }
{\langle\kw{if}\; E\; \settt{C_1}\; \kw{else}\; \settt{C_2}, S \rangle
  \translate \sexpr{S_1\, \Denotv{P}{S}e} \oplus \sexpr{S_2\, (1-\Denotv{P}{S}e)}}




\staterule{For-Repeat}
{E_1 \Downarrow n_1, E_2 \Downarrow n_2;
\hfill \mbox{where } n_1 < n_2
}
{\begin{aligned}[t]
  &\langle \kw{for}\;x\;\kw{in}\;\kw{range}\sexpr{E_1, E_2}\;\settt{C}, S \rangle \\[-2pt]
  &\qquad\translate
    \langle C[x/n_1]\texttt{;} \;
    \kw{for}\;x\;\kw{in}\;\kw{range}\sexpr{n_1\;{+}\;1, E_2}\;\settt{C},
    S
    \rangle
  \end{aligned}}
\smallskip
\hrule
\captionsetup{aboveskip=2pt}
\caption{Example rules for translating an \sppl{} command $C$
(Lst.~\ref{lst:sppl-syntax}) to an element of $\SPE$
(Lst.~\ref{lst:core-semantics-sum-product}).}
\label{lst:sppl-translation}

\vspace{-.4cm}
\end{listing}

We next present a probabilistic language called
\sppl{} and show how to translate each program in the language to an
element $S\in\SPE$ that symbolically represents
(via $\Denotv{P}{S}$)
the probability distribution specified by the program.
As in Fig.~\ref{fig:system-diagram},
$S$ can then be used to answer queries about an $\dom{Event}$ $e$:
\begin{enumerate}[wide=0pt]
\item[\kw{simulate}:] Samples from the distribution defined by $\Denotv{P}{S}$;

\item[\kw{prob}:] Computes the probability of $e$,
  using $\Denotv{P}{S}e$ (Lst.~\ref{lst:core-semantics-sum-product});

\item[\kw{condition}:] Conditions on $e$,
  using $\domfunc{condition}$ (Eq.~\eqref{eq:zemstvo}).
\end{enumerate}
Lst.~\ref{lst:sppl-syntax} shows the source syntax of \sppl{}, which
contains standard constructs of an imperative language
such as \kw{array} data structures, \kw{if}-\kw{else} statements, and
bounded \kw{for} loops.
The \kw{switch}-\kw{case} macro is defined in Eq.~\eqref{eq:perquisitor}.
Random variables are defined using ``sample''
(\texttt{\textasciitilde}) and $\kw{condition}(E)$ can be used to
restrict executions to those for which
$E \in \dom{Expr}$ evaluates to $\ttrue$
as part of the prior definition.
Lst.~\ref{lst:sppl-translation} defines a relation
  $\langle C, S \rangle \translate S'$, which translates a ``current''
  $S\in\SPE$ and $C\in\dom{Command}$ into $S'\in\SPE$,
  where the initial step operates on an ``empty'' $S$.
(Lst.~\ref{lst:spe-translation} in Appx.~\ref{appx:translation-reverse} defines
a semantics-preserving inverse of $\translate$).
The $\Downarrow$ relation evaluates $E\in\dom{Expr}$ to
other domains in the core calculus using straightforward rules.
%
We briefly describe key rules of $\translate$:
\begin{enumerate}[align=left,leftmargin=*]
\item[\ref{Transform-Leaf}] updates the environment $\sigma$ at each $\dom{Leaf}$.
\item[\ref{Transform-Sum}]  delegates to all subexpressions.
\item[\ref{Transform-Prod}] delegates to the subexpression whose scope contains the transformed variable.
\item[\ref{For-Repeat}]  unrolls a \kw{for} loop into a $\dom{Command}$ sequence.
\item[\ref{IfElse}] returns a $\dom{Sum}$ with two subexpressions, where
  the \kw{if} branch is conditioned on the test $\dom{Event}$ and the
  $\kw{else}$ branch is conditioned on the negation of the test
  $\dom{Event}$.
  This translation step involves running inference
  (using $\domfunc{condition}$, Eq.~\eqref{eq:zemstvo})
  on the current $S\in\SPE$ translated so far.
\end{enumerate}
The rule for $\kw{condition}\sexpr{E}$ (not shown) calls
$(\domfunc{condition}\, S\, e)$ (Eq.~\ref{eq:zemstvo}) or
$(\domfunc{condition}_0\, S\, e)$ (Remark~\ref{remark:condition-measure-zero}),
where $E \Downarrow e$.
To ensure \sppl{} programs translate to well-defined
element of $\SPE$, per \ref{item:definedness-env-base}--\ref{item:definedness-sum-weights}),
each program must satisfy these restrictions:
\begin{enumerate}[label=(R\arabic*), wide=0pt]
\item \label{item:restriction-prod}
   Variables $x$ in $x\,\texttt{\textasciitilde}\,E$~\ref{Sample}
   and $x\,\texttt{=}\,E$~\ref{Transform-Leaf}
   must be fresh (ensures conditions~\ref{item:definedness-env-base},
   \ref{item:definedness-env-topo}
   and~\ref{item:definedness-prod-scope}).

\item \label{item:restriction-sum} The branches in an $\kw{if}$-$\kw{else}$
  statement must define identical variables (ensures
  conditions~\ref{item:definedness-sum-scope}
  and~\ref{item:definedness-sum-weights}).

\item \label{item:restriction-transforms}
  Derived random variables are obtained via (many-to-one) \textit{univariate}
  transformations (Lst.~\ref{lst:core-semantics-transformations}).

\item \label{item:restriction-rvs}
  Parameters of distributions $D$ or $\kw{range}$ must be either
  constants or random variables with finite support.
\end{enumerate}

\ref{item:restriction-transforms} is required since the distribution
of a multivariate transform (e.g., $Z = X/Y^2$) is typically
intractable and does not factor into $\dom{Sum}$ and $\dom{Product}$
expressions.
\ref{item:restriction-rvs} is required to ensure a finite-size
$\SPE$: distributional parameters with infinite support require
integrals (uncountable support) or infinite series (countable
support), which are not in $\SPE$.
Lst.~\ref{lst:sppl-valid-invalid} shows an example of using
\kw{switch} and \kw{condition} to work around these restrictions by
discretization and truncation.


\begin{listing}[t]
\begin{subfigure}{\linewidth}
\captionsetup{skip=0pt}
\caption{Invalid program (translates to an infinite-sized $\SPE$)}
\begin{lstlisting}[style=sppl,frame=single,basicstyle=\ttfamily\footnotesize]
mu ~ beta(a=4, b=3, scale=7)
num_loops ~ poisson(mu)       # invalid (real integral)
for i in range(0, num_loops): # invalid (infinite series)
  [... commands ... ]
\end{lstlisting}
\end{subfigure}

\begin{subfigure}{\linewidth}
\captionsetup{skip=0pt}
\caption{Valid program (translates to a finite-sized $\SPE$)}
\begin{lstlisting}[style=sppl,frame=single,basicstyle=\ttfamily\footnotesize]
mu ~ beta(a=4, b=3, scale=7)
# binspace partitions [0,7] into 10 intervals
switch (mu) cases (m in binspace(0, 7, n=10)):
  num_loops ~ poisson(m.mean()) # discretization
condition (num_loops < 50)      # truncation
switch num_loops cases (n in range(50)):
  for i in range(0, n):
    [... commands ... ]
\end{lstlisting}
\end{subfigure}%
\captionsetup{skip=0pt}
\caption{Examples of valid and invalid \sppl{} programs.}
\label{lst:sppl-valid-invalid}
\vspace{-.5cm}
\end{listing}

\subsection{Building Compact Sum-Product Expressions}
\label{subsec:translation-opt}

As discrete Bayesian networks can be encoded as \sppl{} programs,
it is possible to write programs where exact inference is
NP-Hard~\citep{cooper1990}, which corresponds to an element of $\SPE$
that is exponentially large.
It is well known that the complexity of exact inference in Bayesian
networks is worst case exponential in the treewidth, which is the only
structural restriction that can ensure tractability~\citep{chandrasekeran2008}.
As computing treewidth is NP-Complete~\citep{arnborg1987complexity},
for fundamental theoretical reasons we cannot generally check
conditions needed for even simple \sppl{} programs, such as those that
only use if/else statements on binary variables, to translate into a
``small'' expression.

However, many models of interest contain (conditional) independence
relationship~\citep{koller2009}
that induce a compact factorization of the model into tractable subparts,
as in, e.g., Sec.~\ref{subsec:example-hmm}.
$\sppl{}$ uses several optimization techniques to improve scalability of
translation (Lst.~\ref{lst:sppl-translation}) and inference
(Eq.~\eqref{eq:zemstvo}) by automatically exploiting
independences and repeated structure, when they exist,
to build compact sum-product expressions.
%

\noindentparagraph{Factorization} Using standard algebraic manipulations,
a sum-product expression can be made smaller
without changing its semantics (Lst.~\ref{lst:core-semantics-sum-product}) by
``factoring out'' common terms (Fig.~\ref{fig:memoize-factorization}),
provided that the new expression satisfies~\ref{item:definedness-env-base}--\ref{item:definedness-sum-weights}.
Factorization plays a key role in the \ref{IfElse} rule of
$\translate$: since all statements before the \kw{if}-\kw{else}
are shared by the bodies of the $\kw{if}$ and
$\kw{else}$ branches, statements outside the branch
that are independent of statements inside
the branch often produce subexpressions that can be factored out.


\begin{figure}[t]
\centering

\tikzset{leaf/.style={inner sep=1pt,label={[label distance=-.1cm]below:{#1}}}}
\tikzset{branch/.style={circle,draw,inner sep = 1pt}}
\begin{subfigure}[b]{.5\linewidth}
\begin{adjustbox}{max width=\linewidth}
\begin{tikzpicture}
\node[
  name=original,
  label={[name=ali, label distance=0pt]above:\scriptsize\bfseries Original}
]{
  \begin{tikzpicture}[scale=.8]
    \Tree[.\node[name=root0, branch]{$+$};
      [.\node[branch]{$\times$};
          [.$S$ \edge[fill=blue!50,roof]; {\bfseries \phantom{subtree}} ]
          $S_1$ ]
      [.\node[branch]{$\times$};
          [.$S$ \edge[fill=blue!50,roof]; {\bfseries \phantom{subtree}} ]
          $S'_1$ ]]
    \end{tikzpicture}
};

\node[
  name=factorized,
  right=3 of original.north,
  anchor=north,
  label={[name=salman, label distance=0pt]above:\scriptsize\bfseries Factorized}
] {
    \begin{tikzpicture}[scale=.8]
    \Tree[.\node[name=root1, branch]{$\times$};
        [.$S$ \edge[fill=blue!50,roof]; {\bfseries \phantom{subtree}} ]
        [.\node[branch]{$+$};
            $S_1$
            $S'_1$ ] ]
    \end{tikzpicture}
};
\draw[-latex, red, line width=1] (ali) -- (salman);
\end{tikzpicture}
\end{adjustbox}
\captionsetup{skip=0pt}
\caption{Factorization}
\label{fig:memoize-factorization}
\end{subfigure}\vrule%
\begin{subfigure}[b]{.5\linewidth}
\centering
\begin{adjustbox}{max width=\linewidth}
\begin{tikzpicture}

\node[
  name=original,
  label={[name=ali, label distance=0pt]above:\scriptsize\bfseries Original}
]{
  \begin{tikzpicture}[scale=.5]
  \Tree[.\node[branch]{$+$};
      [.\node[branch]{$\times$};
          $X$
          [.\node[branch]{$+$};
              $\dots$
              [.\node[branch]{$\times$};
                  $Y$
                  [.$S$ \edge[fill=blue!50,roof]; {\bfseries \phantom{subtree}} ] ]
               ] ]
      [.\node[branch]{$\times$};
          [.$S$ \edge[fill=blue!50,roof]; {\bfseries \phantom{subtree}} ]
          $X$
          $Y$ ]]
  \end{tikzpicture}
};

\node[
  name=deduplicated,
  right = 3.5 of original.north,
  anchor=north,
  label={[name=salman, label distance=0pt]above:\scriptsize\bfseries Deduplicated}
]{
  \begin{tikzpicture}[scale=.5]
  \Tree[.\node[branch]{$+$};
      [.\node[branch]{$\times$};
          $X$
          [.\node[branch]{$+$};
              $\dots$
              [.\node[branch]{$\times$};
                  $Y$
                  [.\node[](foo){$S$}; \edge[fill=blue!50,roof]; {\bfseries \phantom{subtree}} ] ]
               ] ]
      [.\node[branch](bar){$\times$};
          \edge[draw=none];
          [.\node{}; \edge[draw=none,roof]; {\bfseries \phantom{subtree}} ]
          $X$
          $Y$ ]]
  \draw[dashed, line width=.5] (bar.south) -- (foo);
  \end{tikzpicture}
};
\draw[-latex, red, line width=1] (ali) -- (salman);
\end{tikzpicture}
\end{adjustbox}
\captionsetup{skip=7pt}
\caption{Deduplication}
\label{fig:memoize-deduplication}
\end{subfigure}
\captionsetup{aboveskip=2pt, belowskip=0pt}
\caption{Exploiting independences and repeated
structure during translation of \sppl{} programs to build
compact sum-product expressions. Blue subtrees are identical
components.}
\label{fig:memoize}
\vspace{-.3cm}
\end{figure}

\begin{table}[t]
\captionsetup{skip=0pt}
\caption{Measurements of $\SPE$ graph size with and without
the factorization and deduplication optimizations in Fig.~\ref{fig:memoize}.}
\label{table:compression}
\begin{adjustbox}{max width=\linewidth}
\begin{tabular}{@{}lrrr}
\toprule
\multirow{2}{*}{\bfseries Benchmark}
  & \multicolumn{2}{c}{\bfseries No.\ of Nodes in Translated $\SPE$}
  & \multirow{2}{*}{\bfseries \begin{tabular}{c}Data Compression \\ Ratio (unopt/opt)\end{tabular}}
  \\ \cmidrule(lr){2-3}
~ & Unoptimized
  & Optimized
  & \\ \midrule
Hiring~\citep{albarghouthi2017} & 33 & 27 & 1.2x\\
Alarm~\citep{nori2014} & 58 & 45 & 1.3x \\
Grass~\citep{nori2014} & 130 & 59 & 2.2x \\
Noisy OR~\citep{nori2014} & 783 & 132 & 4.1x \\
Clinical Trial~\citep{nori2014} & 43761 & 4131 & 10.6x \\
Heart Disease~\citep{spiegelhalter1993} & 1041235 & 6257 & 166.4x \\
Hierarchical HMM (Sec.~\ref{subsec:example-hmm}) & 29273397577908185 & 1787 & 16381308101795x \\ \bottomrule
\end{tabular}
\end{adjustbox}
\vspace{-.25cm}
\end{table}

\noindentparagraph{Deduplication}
When a sum-product expression contains duplicate subexpressions that cannot be factored out without
violating the definedness conditions, we instead resolve duplicates into a
single physical representative.
Fig.~\ref{fig:memoize-deduplication} shows an example where the
left and right components of the original expression contain an
identical subexpression $S$ (in blue), but factorization would lead
to an invalid sum-product expression.
The optimizer represents the computation graph of this expression
using a single data structure $S$ shared by the left and right subtrees
(see also Figs.~\ref{fig:hmm-spe-full}--\ref{fig:hmm-spe-factorized}).

\noindentparagraph{Memoization} While deduplication reduces
memory overhead, memoization is used to reduce runtime overhead.
Consider either $\SPE$ in Fig.~\ref{fig:memoize-deduplication}: calling
$\domfunc{condition}$ on the $\dom{Sum}$ root will dispatch the query
to the left and right subexpressions (Lst.~\ref{lst:condition-sum}).
We cache the results of
$(\domfunc{condition}\; S\; e)$ or $\Denotv{P}{S}e$ when $S$ is
visited in the left subtree to avoid recomputing the result when
$S$ is visited again in the right subtree via a depth-first traversal.
Memoization delivers large runtime gains not only for solving
queries but also for detecting duplicates returned by
$\domfunc{condition}$ in the \ref{IfElse} translation step.

\noindentparagraph{Measurements}
Table~\ref{table:compression} shows measurements of performance gains
delivered by the factorization and deduplication optimizations on
seven benchmarks.
Compression ratios range between $1.2$x to $1.64\times 10^{13}$x and
are highest in the presence of independence or repeated structure.
The deduplication and memoization optimizations together enable fast
detection of duplicate subtrees by comparing logical memory addresses
of internal nodes in $O(1)$ time, instead of computing hash functions
that require an expensive subtree traversal.


\begin{table*}
\centering
\footnotesize
\captionsetup{skip=0pt}
\caption{Runtime measurements and speedup for 15 fairness verification tasks using \sppl{},
\fairsquare{}~\citep{albarghouthi2017}, and \verifair{}~\citep{bastani2019}.}
\label{table:fairness}
\begin{tabular*}{\linewidth}{|l@{\extracolsep{\fill}}lrlrrrrr|}
\hline
\textbf{Decision}
   & \textbf{Population}
   & \textbf{Lines}
   & \textbf{Fairness}
   & \multicolumn{3}{c}{\bfseries Wall-Clock Runtime (seconds)}
   & \multicolumn{2}{c|}{\bfseries \sppl{} Speedup Factor}
   \\ \cline{5-7} \cline{8-9}
\textbf{Program}
   & \textbf{Model}
   & \textbf{of Code}
   & \textbf{Judgment}
   & \fairsquare{}
   & \verifair{}
   & \sppl{}
   & vs.\ \fairsquare{}
   & vs.\ \verifair{}
   \\ \hline
~                           & Independent  & 15  & Unfair & 1.4   & 16.0 & 0.01 & \textbf{140x}  & \textbf{1600x} \\
$\mathrm{DT}_4$             & Bayes Net.~1 & 25  & Unfair & 2.5   & 1.27 & 0.03 & \textbf{83x}   & \textbf{42x}   \\
~                           & Bayes Net.~2 & 29  & Unfair & 6.2   & 0.91 & 0.03 & \textbf{206x}  & \textbf{30x}   \\ \hline
~                           & Independent  & 32  & Fair   & 2.7   & 105  & 0.03 & \textbf{90x}   & \textbf{3500x} \\
$\mathrm{DT}_{14}$          & Bayes Net.~1 & 46  & Fair   & 15.5  & 152  & 0.07 & \textbf{221x}  & \textbf{2171x} \\
~                           & Bayes Net.~2 & 50  & Fair   & 70.1  & 151  & 0.08 & \textbf{876x}  & \textbf{1887x} \\ \hline
~                           & Independent  & 36  & Fair   & 4.1   & 13.6 & 0.03 & \textbf{136x}  & \textbf{453x}  \\
$\mathrm{DT}_{16}$          & Bayes Net.~1 & 49  & Unfair & 12.3  & 1.58 & 0.08 & \textbf{153x}  & \textbf{19x}   \\
~                           & Bayes Net.~2 & 53  & Unfair & 30.3  & 2.02 & 0.08 & \textbf{378x}  & \textbf{25x}   \\ \hline
~                           & Independent  & 62  & Fair   & 5.1   & 2.01 & 0.06 & \textbf{85x}   & \textbf{33x}   \\
$\mathrm{DT}^{\alpha}_{16}$ & Bayes Net.~1 & 58  & Fair   & 15.4  & 21.6 & 0.12 & \textbf{128x}  & \textbf{180x}  \\
~                           & Bayes Net.~2 & 45  & Fair   & 53.8  & 24.5 & 0.12 & \textbf{448x}  & \textbf{204x}  \\ \hline
~                           & Independent  & 93  & Fair   & 15.6  & 23.1 & 0.05 & \textbf{312x}  & \textbf{462x}  \\
$\mathrm{DT}_{44}$          & Bayes Net.~1 & 109 & Unfair & 264.1 & 19.8 & 0.09 & \textbf{2934x} & \textbf{220x}  \\
~                           & Bayes Net.~2 & 113 & Unfair & t/o   & 20.1 & 0.09 & ---            & \textbf{223x}  \\ \hline
\end{tabular*}
\end{table*}


\section{Evaluation}
\label{sec:evaluations}

We implemented a prototype of \sppl{}%
\footnote{Available in supplement and online at \url{https://github.com/probcomp/sppl}.}
and evaluated its performance on
benchmark problems from the literature.
Sec.~\ref{subsec:evaluations-fairness} compares the runtime
  of verifying fairness properties of decision trees
  using \sppl{} to \fairsquare{}~\citep{albarghouthi2017}
  and \verifair{}~\citep{bastani2019}, two state-of-the-art
  fairness verification tools.
Sec.~\ref{subsec:evaluations-psi} compares the runtime of conditioning
  and querying probabilistic programs using \sppl{} to \psii{}~\citep{gehr2016}, a
  state-of-the-art tool for exact symbolic inference.
Sec.~\ref{subsec:evaluations-blog} compares the runtime of computing
  exact rare event probabilities in \sppl{} to sampling-based
  estimation in \blog{}~\citep{milch2005}.
Experiments were run on Intel i7-8665U 1.9GHz CPU with
16GB RAM.

\subsection{Fairness Benchmarks}
\label{subsec:evaluations-fairness}

Characterizing the fairness of classification algorithms is a growing application area
in machine learning~\citep{dwork2012}.
Recently,~\citet{albarghouthi2017} precisely cast the problem of
verifying the fairness of a classifier in terms of computing ratios of
conditional probabilities in a probabilistic program that specifies
the data generating and classification processes.
Briefly, if
\begin{enumerate*}[label=(\roman*)]
\item $D$ is a decision program that classifies whether applicant $A$
  should be hired;
\item $H$ is a population program that generates random applicants; and
\item $\phi_{\rm m}$ (resp.\ $\phi_{\rm q}$)
  is a predicate on $A$ that is true if the applicant is a minority
  (resp.\ qualified),
\end{enumerate*}
then $D$ is $\epsilon$-fair on $H$ (where $\epsilon>0$) if
\begin{align}
\frac
  {\Pr_{A \sim H}\left[D(A) \mid \phi_{\rm m}(A) \wedge \phi_{\rm q}(A) \right]}
  {\Pr_{A \sim H}\left[D(A) \mid \neg\phi_{\rm m}(A) \wedge \phi_{\rm q}(A) \right]}
  > 1 - \epsilon,
\label{eq:sheeplike}
\end{align}
i.e., the probability of hiring a qualified minority applicant is
$\epsilon$-close to that of hiring a qualified non-minority applicant.

In this evaluation, we compare the runtime needed by \sppl{} to obtain
a fairness judgment (Eq.~\eqref{eq:sheeplike}) for machine-learned
decision and population programs against the
\fairsquare{}~\citep{albarghouthi2017} and
\verifair{}~\citep{bastani2019} solvers.
We evaluate performance on the decision tree benchmarks from
\citet[Sec.~6.1]{albarghouthi2017}, which are one-third of the full
benchmark set.
\sppl{} cannot solve the neural network and
support-vector machine benchmarks, as they contain multivariate
transforms which do not have exact tractable solutions and are ruled
out by the \sppl{} restriction \ref{item:restriction-transforms}.
\fairsquare{} and \verifair{} can express these benchmarks as they have
approximate inference.

Table~\ref{table:fairness} shows the results.
The first column shows the decision making program (DT$_n$ means
``decision tree'' with $n$ conditionals); the second column shows the
population model used to generate data; the third column shows the
lines of code (in \sppl{}); and the fourth column shows the result of
the fairness analysis (\fairsquare{}, \verifair{}, and \sppl{} produce
the same judgment on all fifteen benchmarks).
The remaining columns show the runtime and speedup factors.
We note that \sppl{}, \verifair{}, and \fairsquare{} are all
implemented in \python{}, which allows for a fair comparison.
The measurements indicate that \sppl{} consistently obtains
probability estimates in milliseconds, whereas the two baselines
can each require over 100 seconds.
The \sppl{} speedup factors are up to 3500x (vs.\ \verifair{})
and 2934x (vs.\ \fairsquare{}).
We further observe that the runtimes in \fairsquare{} and \verifair{}
vary significantly.
For example, \verifair{} uses rejection sampling to estimate
Eq.~\eqref{eq:sheeplike} with a stopping rule to determine when
the estimate is close enough, leading to unpredictable runtime
(e.g., ${>}100$ seconds for $\mathrm{DT}_{14}$ but ${<}1$ second for
$\mathrm{DT}_4$, Bayes Net.~2).
\fairsquare{}, which uses symbolic volume computation and
hyperrectangle sampling to approximate Eq.~\eqref{eq:sheeplike},
is faster than \verifair{} in some cases
(e.g., $\mathrm{DT}_{14}$), but times out in others
($\mathrm{DT}_{44}$, Bayes Net.~2).
In contrast, \sppl{}, computes exact probabilities for
Eq.~\eqref{eq:sheeplike} and its runtime does not vary significantly
across the various benchmark problems.
%
The performance--expressiveness trade-off here is that
\sppl{} computes exact probabilities and is substantially faster on
the decision tree problems that it can express.
\fairsquare{} and \verifair{} compute approximate probabilities that
enable them to express more fairness problems, at the cost by of a
higher and less predictable runtime on the decision trees.

\subsection{Comparison to Symbolic Integration}
\label{subsec:evaluations-psi}
\input{figures/workflow}

We next compare
\sppl{} to \psii{}~\citep{gehr2016}, a state-of-the-art symbolic
inference engine, on benchmark problems that include discrete,
continuous, and transformed random variables.
\psii{} can express more inference problems than \sppl{}, as it uses
general computer algebra without having
restrictions~\ref{item:restriction-transforms} and
\ref{item:restriction-rvs} in \sppl{}.
As a result, \sppl{} can solve 14/21 benchmarks listed in
\citep[Table~1]{gehr2016}.
We first discuss key architecture novelties in \sppl{} that contribute
to its performance gains.

\noindentparagraph{Workflow Comparison}
In \sppl{}, the multi-stage modeling and inference workflow
(Fig.~\ref{fig:architecture-comparison-modular}) involves three steps
that reflect the key elements of a Bayesian inference problem:
\begin{enumerate}[label=(S\arabic*),wide=0pt]
\item \label{item:sppl-stage-1} Translating the model program into a prior
   $\dom{SP}$ $S$.

\item \label{item:sppl-stage-2} Conditioning $S$ on data to
obtain a posterior $\dom{SP}$ $S'$.

\item \label{item:sppl-stage-3} Querying $S'$, using, e.g., \kw{prob}
  or \kw{simulate}.
\end{enumerate}
An advantage of this multi-stage workflow is that multiple tasks
can be run at a given stage without rerunning previous stages.
For example, multiple datasets can be observed
in~\ref{item:sppl-stage-2} without translating the prior
expression in~\ref{item:sppl-stage-1} once per dataset; and,
similarly, multiple queries can be run in~\ref{item:sppl-stage-3}
without conditioning on data in~\ref{item:sppl-stage-2}
once per query.
In contrast, \psii{} adopts a single-stage workflow
(Fig.~\ref{fig:architecture-comparison-monolithic}), where a single
program contains the prior distribution over variables, ``observe''
(i.e., ``condition'') statements for conditioning on a dataset, and a
``return'' statement for the query.
\psii{} converts the program into a
symbolic expression for the distribution over the return value: if
this expression is ``complete'' (i.e., no unevaluated symbolic
integrals) it can be used to obtain interpretable answers (e.g.,
for plotting or tabulating); otherwise, the result is ``partial'' and is
too complex to be used for practical purposes.
A consequence of the single-stage workflow in a system like
\psii{} is that the entire solution is recomputed from scratch
on a per-dataset or per-query basis.

\noindentparagraph{Runtime Comparison}
Table~\ref{table:sppl-psi} compares the runtime of
\sppl{} and \psii{} on seven benchmarks problems:
Digit Recognition~\citep{gehr2016};
TrueSkill~\citep{laurel2020};
Clinical Trial~\citep{gehr2016};
Gamma transforms (described below);
Student Interviews~\citep{laurel2020} (two variants); and
Markov Switching (two variants, from Sec.~\ref{subsec:example-hmm});
The second column shows the distributions in each benchmark, which
include continuous, discrete, and transformed variables.
The third column shows the number of datasets on which to condition
the program.
The next three columns show the time needed to translate the program
(stage~\ref{item:sppl-stage-1}), condition the program on a dataset
(stage~\ref{item:sppl-stage-2}), and query the posterior
(stage~\ref{item:sppl-stage-3})---entries in the latter two columns
are written as $n \times t$, where $n$ is the number of datasets and
$t$ the average time per dataset.
For \psii{}: \begin{enumerate*}[label=(\roman*)]
\item modeling and observing data are a single stage, shown in
the merged gray cell; and
\item querying the posterior times out whenever the system returns a
  result with unsimplified integrals ($\ltimes$).
\end{enumerate*}
The last column shows the overall runtime for solving all $n$ tasks.

For benchmarks that both systems solve completely,
\sppl{} realizes speedups between 3x (Digit Recognition) to 3600x
(Markov Switching$_{3}$).
In addition, the measurements show the advantage of our multi-stage workflow;
for example, in TrueSkill, which uses a Poisson--Binomial
distribution, \sppl{} translation ($3.4$ seconds) is more expensive
than both conditioning on data ($0.7$ seconds) and querying ($0.1$
seconds), which highlights the benefit of amortizing the translation
cost over several datasets or queries.
In \psii{}, solving TrueSkill takes $2 \times 41.6$ seconds, but the
solution contains unsimplified integrals and is thus unusable.
The Markov Switching and Student Interviews benchmarks show that
\psii{} may not perform well in the presence of many discrete random
variables.

\begin{table}[t]
\footnotesize
\captionsetup{skip=0pt}
\caption{Distribution of end-to-end inference runtime for four
benchmarks from Table~\ref{table:sppl-psi} using
\psii{}~\citep{gehr2016} and \sppl{}.
}
\label{table:dispersion}

\begin{tabularx}{\linewidth}{@{\extracolsep{\fill}}lrr}
\toprule
\multirow{2}{*}{\bfseries Benchmark} & \multicolumn{2}{c}{\bfseries Mean/Std Runtime (sec/sec)}
  \\ \cmidrule[.5pt]{2-3}
~ & PSI & \sppl{} \\ \hline
Digit Recognition & 26.5/1.3 & 15.9/0.5 \\
Markov Switching & 22.5/3.8 & 0.1/0.0 \\
Student Interviews & 539/663 & 7.8/0.2 \\
Clinical Trial & 107.3/153.2 & 12.7/0.3 \\ \bottomrule
\end{tabularx}
\vspace{-.25cm}
\end{table}


\begin{table*}[t]
\captionsetup{belowskip=0pt,aboveskip=0pt}
\caption{Runtime comparison of $\psii{}$~\citep{gehr2016} and $\sppl{}$.}
\label{table:sppl-psi}

\centering

\newcommand{\cellPrior}[1]{\cellcolor{\fillPrior}{#1}}
\newcommand{\cellCondt}[1]{\cellcolor{\fillCondt}{#1}}
\newcommand{\cellQuery}[1]{\cellcolor{\fillQuery}{#1}}
\newcommand{\cellPrCdt}[1]{\cellcolor{gray!10}{#1}}

\begin{subtable}{.75\linewidth}
\begin{adjustbox}{max width=\linewidth}
\begin{tabular}{|llrlrrrr|}
\hline
\multirow{2}{*}{\bfseries Benchmark}
  & \multirow{2}{*}{\bfseries  Distribution}
  & \multirow{2}{*}{\bfseries\shortstack{Datasets ${\diamond}$}}
  & \multirow{2}{*}{\bf System}
  & \multicolumn{3}{c}{\bfseries Wall-Clock Runtime of Inference Stages}
  & \multirow{2}{*}{\bfseries \shortstack{Overall \\ Time}}
  \\ \cline{5-7}
~
  & ~
  & ~
  & ~
  & \multicolumn{1}{c}{Translation ${\dagger}$}
  & \multicolumn{1}{c}{Conditioning ${\ddagger}$}
  & \multicolumn{1}{c}{Querying ${\star}$}
  & ~
  \\ \hline
Digit
  & \multirow{2}{*}{$\mathsf{C}{\times}\mathsf{B}^{784}$}
  & \multirow{2}{*}{10}
  & \sppl{}
  & \cellPrior{6.9 sec}
  & \cellCondt{$10 \times 7.7$ sec}
  & \cellQuery{$10 \times ({<}0.01 \mbox{ sec})$}
  & 84 sec \\
Recognition
  & ~
  & ~
  & \psii{}
  & \multicolumn{2}{c}{\cellPrCdt{$10 \times 24.3$ sec}}
  & \cellQuery{$10 \times ({<}0.01 \mbox{ sec})$}
  & 244 sec \\ \hline
\multirow{2}{*}{TrueSkill}
  & \multirow{2}{*}{$\mathsf{P}{\times}\mathsf{Bi}^2$}
  & \multirow{2}{*}{2}
  & \sppl{}
  & \cellPrior{$3.4$ sec}
  & \cellCondt{$2 \times 0.7$ sec}
  & \cellQuery{$2 \times 0.1 \mbox{ sec}$}
  & $4.9$ sec \\
~ & ~
  & ~
  & \psii{}
  & \multicolumn{2}{c}{\cellPrCdt{$2 \times 41.60$ sec}}
  & \cellQuery{$\ltimes$}
  & $\oslash$ \\ \hline
Clinical
  & \multirow{2}{*}{
    \shortstack[l]{
      $\mathsf{B}{\times}\mathsf{U}^{3}$\\
      ${\times}\mathsf{B}^{50}{\times}\mathsf{B}^{50}$
  }}
  & \multirow{2}{*}{10}
  & \sppl{}
  & \cellPrior{$9.5$ sec}
  & \cellCondt{$10 \times 2.2$ sec}
  & \cellQuery{$10 \times ({<}0.01 \mbox{ sec})$}
  & $31$ sec \\
Trial & ~
  & ~
  & \psii{}
  & \multicolumn{2}{c}{\cellPrCdt{$10 \times 107.3$ sec}}
  & \cellQuery{$10 \times ({<}0.01 \mbox{ sec})$}
  & 1073 sec \\ \hline
Gamma
  & \multirow{2}{*}{
    \shortstack[l]{
      $\mathsf{G}{\times}\mathsf{T}$\\
      ${\times}(\mathsf{T}{+}\mathsf{T})$
  }}
  & \multirow{2}{*}{5}
  & \sppl{}
  & \cellPrior{$0.02$ sec}
  & \cellCondt{$5 \times 0.52$ sec}
  & \cellQuery{$5 \times 0.03 \mbox{ sec}$}
  & $2.8$ sec \\
Transforms &
  & ~
  & \psii{}
  & \multicolumn{2}{c}{\cellPrCdt{$5 \times 0.68 \mbox{ sec}$; i/e}}
  & \cellQuery{$\ltimes$}
  & $\oslash$ \\ \hline
Student
  & \multirow{2}{*}{
    \shortstack[l]{
      $\mathsf{P}{\times}\mathsf{B}^{2}{\times}\mathsf{Bi}^{4}$\\
      $\times\mathsf{(A{+}Be)}^{2}$
  }}
  & \multirow{2}{*}{10}
  & \sppl{}
  & \cellPrior{$4.0$ sec}
  & \cellCondt{$10 \times 0.7$ sec}
  & \cellQuery{$10 \times \mbox{0.2 sec}$}
  & $13.5$ sec \\
Interviews$_{2}$ & ~
  & ~
  & \psii{}
  & \multicolumn{2}{c}{\cellPrCdt{$10 \times 540 \mbox{ sec}$; h/m (35GB)}}
  & \cellcolor{\fillQuery}{$\ltimes$}
  & $\oslash$ \\ \hline
Student
  & \multirow{2}{*}{
    \shortstack[l]{
      $\mathsf{P}{\times}\mathsf{B}^{10}{\times}\mathsf{Bi}^{20}$\\
      $\times\mathsf{(A{+}Be)}^{10}$
  }}
  & \multirow{2}{*}{10}
  & \sppl{}
  & \cellPrior{$24.6$ sec}
  & \cellCondt{$10 \times 3.9$ sec}
  & \cellQuery{$10 \times 1.2 \mbox{ sec}$}
  & $75$ sec \\
Interviews$_{10}$ & ~
  & ~
  & \psii{}
  & \multicolumn{2}{c}{\cellcolor{gray!10}o/m (64GB+)}
  & \cellcolor{\fillQuery}{$\oslash$}
  & $\oslash$ \\ \hline
Markov
  & \multirow{2}{*}{
    \shortstack[l]{
      $\mathsf{B}{\times}\mathsf{B}^{3}$\\
      $\times\mathsf{N}^{3}{\times}\mathsf{P}^{3}$
  }}
  & \multirow{2}{*}{10}
  & \sppl{}
  & \cellPrior{$0.05$ sec}
  & \cellCondt{$10 \times ({<}0.01 \mbox{ sec})$}
  & \cellQuery{$10 \times ({<}0.01 \mbox{ sec})$}
  & $0.5$ sec \\
Switching$_{3}$ &
  & ~
  & \psii{}
  & \multicolumn{2}{c}{\cellPrCdt{$10 \times 182.9$ sec}}
  & \cellQuery{$10 \times ({<}0.01 \mbox{ sec})$}
  & $1829$ sec \\ \hline
Markov
  & \multirow{2}{*}{
    \shortstack[l]{
      $\mathsf{B}{\times}\mathsf{B}^{100}$\\
      $\times\mathsf{N}^{100}{\times}\mathsf{P}^{100}$
  }}
  & \multirow{2}{*}{10}
  & \sppl{}
  & \cellPrior{$4.1$ sec}
  & \cellCondt{$10 \times 6.5$ sec}
  & \cellQuery{$10 \times 0.5$ sec}
  & $74$ sec \\
Switching$_{100}$ &
  & ~
  & \psii{}
  & \multicolumn{2}{c}{\cellPrCdt{o/m (64GB+)}}
  & \cellcolor{\fillQuery}{$\oslash$}
  & $\oslash$ \\ \hline
\end{tabular}
\end{adjustbox}
\end{subtable}\hfill
\begin{subtable}{.24\linewidth}
\captionsetup{aboveskip=5pt,belowskip=0pt,textfont={footnotesize,bf}}
\caption*{Legend}
\scriptsize
\begin{tabular}{@{}l@{\extracolsep{4pt}}l@{}l}
  \textsf{A}:  Atomic
  & \textsf{B}:  Bernoulli
  & \textsf{Be}: Beta \\
  \textsf{Bi}: Binomial
  & \textsf{C}:  Categorical \\
  \textsf{N}:  Normal
  & \textsf{G}:  Gamma
  & \textsf{P}:  Poisson \\
  \textsf{T}:  Transform
  & \textsf{U}:  Uniform \span
\end{tabular}
\\[5pt]
$\diamond$ Number of distinct datasets on which to condition the benchmark
  program.
\\[5pt]
${\dagger,\ddagger}$ Runtime of first two phases
  in Fig.~\ref{fig:architecture-comparison};
  $\psii{}$ implements these phases in a single computation.
\\[5pt]
${\star}$ Runtime of final phase
  in Fig.~\ref{fig:architecture-comparison};
  same query used for all datasets of a given benchmark program.
\\[5pt]
\begin{tabular}{@{}l@{\extracolsep{5pt}}l@{}}
  h/m &High-Memory \\
  o/m &Out-of-Memory \\
  i/e &Integration Error \\
  $\ltimes$ &Unsimplified Symbolic Integrals \\
  $\oslash$ &No Value
\end{tabular}
\end{subtable}
\vspace{-.3cm}
\end{table*}

The Gamma Transform benchmark tests the robustness of many-to-one
transforms of random variables (Lst.~\ref{lst:core-semantics-transformations}),
where $X\,{\sim}\,\mathrm{Gamma}(3,1)$; $Y\,{=}\,1/\exp{X^2}$ if $X <1$ and $Y\,{=}\,1/\ln{X}$
otherwise; and $Z\,{=}\,-Y^3 + Y^2 + 6Y$. Each of the $n=5$ datasets
specifies a different constraint $\phi(Z)$ and a query about
the posterior $Y\,{\mid}\,\phi(Z)$, which needs to compute and
integrate out $X\,{\mid}\,\phi(Z)$.
\psii{} reports that there is an error in its answer for all
five datasets, whereas \sppl{}, using the symbolic transform
solver from Appx.~\ref{appx:transforms-preimage}, solves all five
problems effectively.

Table~\ref{table:dispersion} compares the runtime variance
using \sppl{} and \psii{} for four of the benchmarks in
Table~\ref{table:sppl-psi}, repeating one query over 10 datasets.
In all benchmarks, the \sppl{} variance is lower than that
of \psii{}, with a maximum standard deviation $\sigma=0.5~\mathrm{sec}$.
In contrast, the spread of \psii{} runtime is high for
  Student Interviews ($\sigma=540~\mathrm{sec}$,
    range 64--1890 $\mbox{sec}$)
  and Clinical Trial ($\sigma=153~\mathrm{sec}$,
    range 2.75--470 $\mbox{sec}$).
In \psii{}, the symbolic analyses are sensitive to the numeric values
in the dataset, leading to unpredictable runtime across different
datasets, even for a fixed query pattern.
In \sppl{}, the runtime depends only on the query
pattern not the observed data and therefore behaves predictably across
different datasets.

As with the fairness benchmarks in
Sec.~\ref{subsec:evaluations-fairness},
\psii{} trades off expressiveness with efficacy on tractable problems,
and our measurements show that its runtime and memory do not scale well or
are unpredictable on benchmarks that \sppl{} solves very efficiently.
Moreover, the evaluations show that \psii{} can return unusable inference
results to the user and that it needs to recompute entire symbolic solutions
from scratch for each new dataset or query, whereas \sppl{} is less
expressive than \psii{} but carries neither of these limitations.

\subsection{Comparison to Sampling-Based Estimates}
\label{subsec:evaluations-blog}

We next compare the runtime and accuracy of estimating
probabilities of rare events in a canonical Bayesian network~\citep{koller2009}
using \sppl{} and \blog{}~\citep{milch2005}.
As discussed by \citet[Sec~12.13]{koller2009}, rare events
are the rule, not the exception, in many applications, as the
probability of a predicate $\phi(X)$ decreases exponentially with the
number of observed variables in $X$.
Small estimation errors can magnify substantially when, e.g.,
taking ratios of probabilities.

In Fig.~\ref{fig:blog}, each subplot shows the runtime and probability
estimates for a low-probability predicate $\phi$.
In BLOG, the rejection sampler estimates the
probability of $\phi$ by computing the fraction of times it holds in a
size $n$ i.i.d.\ random sample from the prior.
The horizontal red line shows the ``ground truth'' probability.
The x marker shows the runtime needed by \sppl{} to (exactly)
compute the probability and the dots show the estimates from \blog{}
with increasing runtime (i.e., more samples $n$).
\sppl{} consistently returns an exact answer in less than 2ms.
The accuracy of \blog{} estimates improve as the runtime increases: by
the strong law of large numbers, these estimates converge to the true
value, but the fluctuations for any single run can be large (the
standard error decays as $1/\sqrt{n}$).
Each ``jump'' corresponds to a new sample $X^{(j)}$ that satisfies
$\phi(X^{(j)})$, which increases the estimate.
Without ground truth, it is hard to predict how much
computation is needed for \blog{} to obtain accurate results:
estimates for predicates with $\log{p}=-12.73$ and $\log{p}=-17.32$
did not converge within the allotted time, while those for
$\log{p} = -14.48$ converged after 180 seconds.


\begin{figure*}[t]
\includegraphics[width=\linewidth]{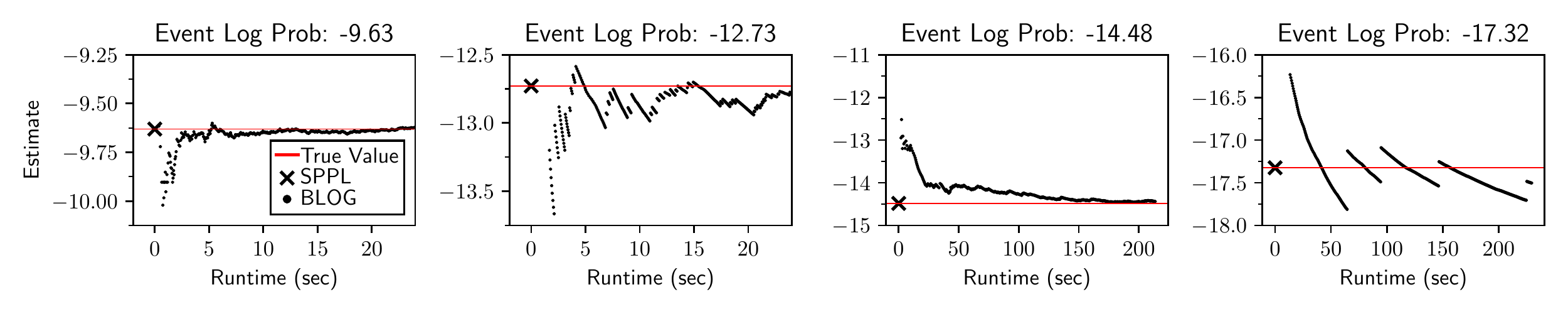}
\captionsetup{skip=-2pt,belowskip=-10pt}
\caption{Runtime comparison for computing probabilities using
exact inference in \sppl{} and rejection sampling in \blog{}.
%
}
\label{fig:blog}
\end{figure*}


\vspace{-.25cm}

\section{Related Work}
\label{sec:related}

\sppl{} is distinguished by being the first system to deliver exact
symbolic inference by translating probabilistic programs to
sum-product expressions, which extend and generalize sum-product networks.
We briefly discuss related approaches.

\noindent{\bfseries Symbolic Integration}
Several systems deliver exact inference by translating
a probabilistic program and observed dataset into a symbolic
expression whose solution is the answer to the
query~\citep{bhat2013,narayanan2016,gehr2016,shan2016,zhang2019}.
Our approach to exact inference, which uses sum-product expressions
instead of general computer algebra, enables effective performance on
a range of models and queries, primarily at the expense of the
expressiveness of the language on continuous priors.
The state-of-the-art solver, \psii{}~\citep{gehr2016}, can
effectively solve many inference problems that \sppl{} cannot express
due to restrictions~\ref{item:restriction-prod}--\ref{item:restriction-rvs},
including higher-order programs~\citep{gehr2020}.
However, comparisons on benchmarks that \sppl{} targets
(Sec.~\ref{subsec:evaluations-psi}) find \psii{} has less scalable and
higher variance runtime, and can return partial results with
unsimplified symbolic integrals.
In contrast, \sppl{} exploits conditional independences, when they
exist, to improve scalability (Sec.~\ref{subsec:translation-opt}) and
delivers complete, usable answers to users.
Moreover, \sppl{}'s multi-stage workflow
(Fig.~\ref{fig:architecture-comparison}) allows expensive computations
such as translation and conditioning to be amortized
over multiple datasets or queries, whereas \psii{} recomputes the
symbolic solution from scratch each time.
Hakaru~\citep{narayanan2016} is a symbolic solver that delivers
exact inference in a multi-stage workflow based on
program transformations, and can disintegrate against a variety of
base measures~\citep{narayanan2020}.
This paper compares against PSI because the reference Hakaru implementation
crashes or delivers incorrect or partial results on several
benchmark problems~\citep[Table 1]{gehr2016}, and, as mentioned by the
developers, does not support constructs such as arrays needed to
support dozens or hundreds of observations.

\noindent{\bfseries Symbolic Execution and Volume Computation}:
Previous work has addressed the problem of computing the probability
of a predicate by integrating a distribution defined by a
program~\citep{geldenhuys2012,sankaranarayanan2013,toronto2015,albarghouthi2017}.
For example, \citet{geldenhuys2012} present a probabilistic symbolic
execution technique that uses model counting to compute path
probabilities, assuming that all program variables are discrete and
uniformly distributed.
While \sppl{} can model a variety of distributions, due to
restriction~\ref{item:restriction-transforms} it only supports
predicates that specify rectangular regions, whereas several of the
aforementioned systems can (approximately) handle non-rectangular regions.
More specifically, predicates in \sppl{} may include combinations of nonlinear
transforms, each of a single variable, which are solved into linear
expressions that specify unions of disjoint hyperrectangles
(Appx.~\ref{appx:transforms-preimage}).
Table~\ref{table:fairness} shows that \sppl{} delivers substantial
speedup on the hyperrectangular regions specified by the important
class of decision trees, which are widely used in interpretable
machine learning applications.

\noindent{\bfseries Sum-Product Networks}:
The \spflow{} library~\citep{molina2020} is an object-oriented
``graphical model toolkit'' in Python for constructing and querying
sum-product networks.
\sppl{} leverages a new and more general sum-product representation
(Lst.~\ref{lst:core-semantics}) and solves probability and
conditioning queries that are not supported by \spflow{}
(Thm.~\ref{thm:closure}), which include mixed random variables,
numeric transforms, and logical predicates with set-valued constraints.
In addition, we introduce a novel translation strategy
(Sec.~\ref{sec:translation}) that allows users to specify models as
generative code in a PPL (using e.g., variables, arrays, arithmetic
and logical expressions, loops, branches) without having to manually
manipulate low-level data structures.
``Factored sum-product networks''~\citep{stuhlmuller2012} have been
used as intermediate representations for converting a probabilistic
program and any functional interpreter into a system of equations
whose solution is the marginal probability of the program's return
value.
These algorithms handle recursive procedures and leverage dynamic
programming, but only apply to discrete variables, cannot handle
transforms, and require solving fixed-points.
Moreover, they have not been quantitatively evaluated on PPL benchmark problems.

\noindent{\bfseries Weighted Model Counting/Integration}:
A common approach to probabilistic inference is using algorithmic
reductions from probabilistic programs to weighted-model counting
(WMC) or integration (WMI) via knowledge
compilation~\citep{darwiche2002,chavira2008,fierens2011,vlasselaer2015,belle2015}.
For example, Symbo~\citep{pedro2019} leverages WMI for exact inference
in hybrid domains, using sentinel decision diagrams as the
representation and the PSI solver to symbolically integrate over
continuous variables.
Dice~\citep{holtzen2020} leverages WMC for scaling exact inference in
discrete probabilistic programs and uses binary decision diagram
representations that automatically exploit program structure to
factorize inference.
The representations in Dice enable substantial computation reuse for
querying and/or conditioning, such as computing ``all-marginal''
probabilities by reusing the same compiled representation multiple
times.
\sppl{} also leverages factorization and computation reuse, but uses a
different representation based on sum-product expressions that handle
additional computations such as numeric transforms and continuous and
mixed-type random variables.

\noindent{\bfseries Probabilistic Program Synthesis}:
Existing PPL synthesis systems for tabular
data~\citep{chasins2017,saad2019bayesian} produce programs in
languages that are subsets of \sppl{}, which enable
automatic synthesis of full \sppl{} programs from data.
\sppl{} can also unify and extend custom PPL query engines used in
these systems for tasks such as similarity search and dependence
detection~\citep{saad2016probabilistic,saad2017dependencies,saad2017search}.
It may also be fruitful to use structure discovery methods for
time series~\citep{saad2018trcrpm,ahmed2008dynamic} or relational
data~\citep{kemp2006} to synthesize \sppl{} programs for these
domains.


\section{Conclusion}
\label{sec:conclusion}

We have presented \sppl{}, a new system that automatically delivers
exact answers to a range of probabilistic inference queries.
A key insight in \sppl{} is to impose restrictions on probabilistic
programs that enable them to be translated to sum-product expressions, which are
highly effective representations for inference.
Our evaluation highlights the efficacy of \sppl{} on inference tasks
in the literature and underscores the importance of key design
decisions, including the multi-stage inference workflow and techniques
used to build compact expressions by exploiting probabilistic
structure.
In addition to its efficacy as a standalone language, we further
anticipate that \sppl{} could be useful as an embedded domain-specific
language within more expressive PPLs, combining the benefits of exact
and approximate inference.

\bibliography{paper}

\clearpage
\appendix

\section{Syntax of Core Calculus}
\label{appx:syntax-core-calculus}

The metalanguage in this paper follows that of~\citet[Appx.~A]{turbak2008}.
For completeness, Lst.~\ref{lst:core-syntax} shows the syntax of
the core calculus whose semantics are given in
Lst.~\ref{lst:core-semantics} from the main text.
Prop.~\ref{prop:inverse-cdf} below establishes that distributions specified
by $\dom{DistInt}$ and $\dom{DistReal}$ (Lst.~\ref{lst:core-semantics-distributions})
with $\dom{CDF}$ $F$
can be sampled using a variant of the integral probability transform.

\begin{proposition}
\label{prop:inverse-cdf}
Let $F$ be a $\dom{CDF}$ and $r_1$, $r_2$ real numbers
such that $F(r_1)<F(r_2)$.
Let $U\sim\dist{Uniform}(F(r_1), F(r_2))$ and
define the random variable
$X \defas F^{-1}(U)$. Then for all real numbers $r$,
\begin{align}
\tilde{F}(r) \defas \Pr[X \le r] = \begin{cases}
0 & \mbox{if } r < r_1 \\
\displaystyle \frac{F(r)-F(r_1)}{F(r_2)-F(r_1)} & \mbox{if } r_1 \le r \le r_2 \\
1 & \mbox{if } r_2 < r
\end{cases}
\end{align}
\end{proposition}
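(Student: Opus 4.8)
The plan is to reduce the entire claim to the standard ``Galois connection'' between a $\dom{CDF}$ $F$ and its quantile function $F^{-1}(u) \defas \inf\set{x \mid F(x) \ge u}$, namely that for every $u \in (0,1]$ and every real $r$,
\[
F^{-1}(u) \le r \iff u \le F(r).
\]
First I would prove this equivalence. The direction $(\Leftarrow)$ is immediate: if $u \le F(r)$ then $r$ lies in the set $\set{x \mid F(x) \ge u}$, whose infimum is $F^{-1}(u)$, so $F^{-1}(u) \le r$. For $(\Rightarrow)$ I would use that $F$ is non-decreasing and \emph{right-continuous}: the super-level set $\set{x \mid F(x) \ge u}$ is then upward-closed and closed, hence equal to $[x_0, \infty)$ with $x_0 = F^{-1}(u)$ and $F(x_0) \ge u$; combining $x_0 \le r$ with monotonicity gives $F(r) \ge F(x_0) \ge u$. (The hypothesis $F(r_1) < F(r_2)$ is used only to guarantee that $\dist{Uniform}(F(r_1),F(r_2))$ is a genuine distribution with nonzero denominator $F(r_2)-F(r_1)$.)

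Next I would check that $X = F^{-1}(U)$ is almost surely a well-defined real number: since $U > F(r_1) \ge 0$ almost surely, $U \in (0,1]$, for which $\set{x \mid F(x) \ge U}$ is nonempty ($F \to 1$) and bounded below ($F \to 0$). Then I would apply the equivalence directly to compute the CDF of $X$:
\[
\Pr[X \le r] = \Pr[F^{-1}(U) \le r] = \Pr[U \le F(r)].
\]
Because $U$ is uniform on an interval whose endpoints carry zero probability (so it does not matter whether the interval is open, closed, or half-open), the right-hand side equals $0$ when $F(r) \le F(r_1)$, equals $(F(r)-F(r_1))/(F(r_2)-F(r_1))$ when $F(r_1) \le F(r) \le F(r_2)$, and equals $1$ when $F(r) \ge F(r_2)$.

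Finally I would push these conditions on $F(r)$ back to conditions on $r$ via monotonicity of $F$: $r < r_1$ forces $F(r) \le F(r_1)$ (probability $0$); $r_1 \le r \le r_2$ forces $F(r_1) \le F(r) \le F(r_2)$, giving the stated ratio (which consistently evaluates to $0$ at $r = r_1$ and to $1$ at $r = r_2$, matching the boundary cases); and $r_2 < r$ forces $F(r) \ge F(r_2)$ (probability $1$). This reproduces the claimed piecewise formula for $\tilde F$ exactly.

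The only delicate point --- hence the ``main obstacle'' --- is establishing the equivalence $F^{-1}(u) \le r \iff u \le F(r)$ with the correct hypotheses: one must invoke right-continuity of $F$ (which every $\dom{CDF}$ enjoys) to know that the super-level set is closed, rather than silently assuming left-continuity or settling for a strict inequality on one side. Everything after that is routine bookkeeping with monotonicity and the elementary uniform law.
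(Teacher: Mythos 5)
Your proposal is correct and follows essentially the same route as the paper, whose proof simply declares the result ``immediate from $\Pr[X \le r] = \Pr[U \le F(r)]$ and the uniformity of $U$''; you supply the details behind that first identity (the Galois connection $F^{-1}(u) \le r \iff u \le F(r)$, justified by right-continuity of the c\`adl\`ag $F$) and then carry out the same routine case analysis.
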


\begin{proof}
Immediate from $\Pr[X \le r] = \Pr[ U \le F(r)]$ and the uniformity of $U$
on $[r_1, r_2]$.
\end{proof}


\section{Definitions of Auxiliary Functions}
\label{appx:aux-fn}

Sec.~\ref{sec:core} refers to the following operations on the
$\dom{Outcomes}$ domain:
\begin{align}
\domfunc{union} &: \dom{Outcomes}^* \to \dom{Outcomes}, \\
\domfunc{intersection} &: \dom{Outcomes}^* \to \dom{Outcomes}, \\
\domfunc{complement} &: \dom{Outcomes} \to \dom{Outcomes}.
\end{align}
Any implementation satisfies the following invariants:
\begin{align}
&\begin{aligned}[m]
  &v_1 \amalg \dots \amalg v_m = \domfunc{union}\, v^* \\
  &\quad \iff \forall i \ne j. \domfunc{intersection}\, v_i\, v_j = \varnothing,
  \label{eq:invariant-union-union}
\end{aligned} \\
&\begin{aligned}[m]
  &v_1 \amalg \dots \amalg v_m = \domfunc{intersection}\, v^* \\
  &\quad \iff \forall i \ne j. \domfunc{intersection}\, v_i\, v_j = \varnothing,
  \label{eq:invariant-union-intersection}
\end{aligned} \\
&\begin{aligned}
  &v_1 \amalg \dots \amalg v_m = \domfunc{complement}\, v \\
  &\quad \iff \forall i \ne j. \domfunc{intersection}\, v_i\, v_j = \varnothing.
  \label{eq:invariant-union-complement}
\end{aligned}
\end{align}
Lst.~\ref{lst:complement} shows an implementation of the
$\domfunc{complement}$ function, which operates separately on the $\dom{Real}$
and $\dom{String}$ components; $\domfunc{union}$ and $\domfunc{intersection}$
are implemented similarly.
Lst.~\ref{lst:vars} shows the $\domfunc{vars}$ function, which returns the
variables in a $\dom{Transform}$ or $\dom{Event}$ expression.
Lst.~\ref{lst:negate} shows the $\domfunc{negate}$ function, which returns
the logical negation of an $\dom{Event}$.


\section{Transforms of Random Variables}
\label{appx:transforms}

This appendix describes the $\dom{Transform}$ domain in the core
calculus (expanding Lst.~\ref{lst:core-semantics-transformations}),
which is used to express numerical transformations of real random
variables.

\subsection{Valuation of Transforms}
\label{appx:transforms-valuation}
Lst.~\ref{lst:transform} shows the valuation function $\valfunc{T}$
which defines each $t$ as a $\dom{Real}$ function on $\dom{Real}$.
Each real function $\Denot{T}{t}$ is defined on an input $r'$ if and
only if $\left(\inj[r']{\dom{Real}}{\dom{Outcome}}\right) \in
(\domfunc{domainof}\, t)$.
Lst.~\ref{lst:domainof} shows the implementation of
$\domfunc{domainof}$.

\subsection{Preimage Computation}
\label{appx:transforms-preimage}

Lst.~\ref{lst:preimage} shows the algorithm that implements
\begin{align}
\domfunc{preimg}: \dom{Transform} \to \dom{Outcomes} \to \dom{Outcomes},
\end{align}
which, as discussed in Sec.~\ref{sec:core} of the main text, satisfies
{\small\begin{align*}
(\inj[r]{\dom{Real}}{\dom{Outcome}}) \in \Denotv{V}{\domfunc{preimg}\; t\; v}
  &\iff \Denotv{T}{t}(r) \in \Denotv{V}{v},
  \\
(\inj[s]{\dom{String}}{\dom{Outcome}}) \in \Denotv{V}{\domfunc{preimg}\; t\; v}
  &\iff (t \in \dom{Identity}) \wedge (s \in \Denotv{V}v).
\end{align*}}
The implementation of $\domfunc{preimg}$ uses several helper
functions:
\begin{enumerate}[wide, leftmargin=*]
\item[(Lst.~\ref{lst:finv})] $\domfunc{finv}$: computes the preimage of
each $t \in \dom{Transform}$ at a single $\dom{Real}$.

\item[(Lst.~\ref{lst:polyLim})] $\domfunc{polyLim}$: computes
the limits of a polynomial at the infinites.

\item[(Lst.~\ref{lst:polySolve})] $\domfunc{polySolve}$: computes the
set of values at which a polynomial is equal to a given value
(possibly positive or negative infinity).

\item[(Lst.~\ref{lst:polyLte})] $\domfunc{polyLte}$: computes the
set of values at which a polynomial is less than or equal a given
value.
\end{enumerate}

In addition, we assume access to a general root finding algorithm
$\domfunc{roots}: \dom{Real}^+ \to \dom{Real}^*$ (not shown), that
returns a (possibly empty) list of roots of the degree-$m$
polynomial with specified coefficients.
In the reference implementation of $\sppl{}$, the $\domfunc{roots}$ function uses
symbolic analysis for polynomials whose degree is less than or equal
to two and semi-symbolic analysis for higher-order polynomials.

\subsection{Example of Exact Inference on a Many-to-One Random Variable Transformation}
\label{appx:transforms-example}

This appendix shows how \sppl{} enables exact inference on many-to-one
transformations of real random variables described in the previous
section, where the transformation is itself determined by a stochastic
branch (Fig.~\ref{fig:poly-invert} in main text).
Fig.~\ref{subfig:poly-invert-prior-program} shows an \sppl{}
program that defines a pair of random variables $(X,Z)$, where $X$ is
normally distributed; and $Z = -X^3 + X^2 + 6X$ if $X < 1$, otherwise
$Z=5\sqrt{X} +1$.
The first plot of Fig.~\ref{subfig:poly-invert-prior-dist} shows the prior distribution
of $X$; the middle plot shows the transformation $t$ that defines
$Z = t(X)$, which is a piecewise sum of $t_{\rm if}$ and $t_{\rm else}$;
and the final plot shows the distribution of $Z = t(X)$.
Fig.~\ref{subfig:poly-invert-prior-spe} shows the sum-product expression
representing this program, where the root node is a sum whose left and right
children have weights $0.691...$ and $0.309...$, which corresponds to the
prior probabilities of $\set{X < 1}$ and $\set{1 \le X}$.
Nodes labeled $X~\sim N(\mu,\sigma)$ with an incoming directed edge
from a node labeled $(r_1, r_2)$ denotes that the random variable
is constrained to the interval $(r_1, r_2)$ (and similarly for closed intervals).
Deterministic transformations are denoted by using
red directed edges from a leaf node (i.e., $X$) to a numeric expression
(e.g., $5\sqrt{X} + 11$), with the name of the transformed variable
along the edge (i.e., $Z$).

Fig.~\ref{subfig:poly-invert-condition-program} shows an \sppl{} query
that conditions the program on an event
$\set{Z^2 \le 4} \cap \set{Z \ge 0}$ involving the transformed variable $Z$.
The inference engine performs the following analysis on the query:
\begin{align}
&\set{Z^2 \le 4}\,{\cap}\,\set{Z \ge 0} \\
&\equiv \set{Z \in [0,2]}
    \label{eq:formation}
    \\
&\equiv \set{X \in t^{-1}([0,2])}
    \qquad \qquad \mbox{recall } (Z \defas t(X)) \\
&\equiv \set{X \in t_{\rm if}^{-1}([0,2])} \cup \set{X \in t_{\rm else}^{-1}([0,2])}
  \label{eq:phenomenology} \\
&\equiv
  \begin{aligned}[t]
  &\underbrace{\set{-2.174... \le X \le -2} \cup \set{0 \le X \le .321...}}_{\mbox{\rm \footnotesize constraints from left subtree}} \\
  &\qquad \cup \underbrace{\set{81/25 \le X \le 121/25}}_{\mbox{\rm \footnotesize constraint from right subtree}}
  \end{aligned} \label{eq:examinership}
\end{align}
Eq.~\eqref{eq:formation} shows the first stage of inference, which
solves any transformations in the conditioning event and yields
$\set{0 \le Z \le 2}$.
The conditional distribution of $Z$ is shown in the final plot of
Fig.~\ref{subfig:poly-invert-post-dist}.
The next step is to dispatch the simplified event to the left and
right subtrees.
Each subtree will compute the constraint on $X$ implied by the event
under the transformation in that branch, as shown in
Eq.~\eqref{eq:phenomenology}.
The middle plot of Fig.~\eqref{subfig:poly-invert-post-dist} shows the
preimage computation under $t_{\rm if}$ from the left subtree, which gives two
intervals, and $t_{\rm else}$ from the right subtree, which gives one
interval.

The final step is to transform the prior expression (Fig.~\ref{subfig:poly-invert-prior-spe})
by conditioning each subtree on the intervals in Eq.~\eqref{eq:examinership},
which gives the posterior expression (Fig.~\ref{subfig:poly-invert-post-spe}).
The left subtree in Fig.~\ref{subfig:poly-invert-prior-spe},
which originally corresponded to
$\set{X < 1}$, is split in Fig.~\ref{subfig:poly-invert-post-spe} into two subtrees that represent the events
$\set{-2.174... \le X \le -2}$ and $\set{0 \le X \le 0.321...}$, respectively, and whose
weights $0.159...$ and $0.494...$ are the (renormalized) probabilities of
these regions under the prior distribution
(first plot of Fig.~\ref{subfig:poly-invert-prior-dist}).
The right subtree in Fig.~\ref{subfig:poly-invert-prior-spe}, which
originally corresponded to $\set{1 \le X}$, is now restricted to
$\set{81/25 \le X \le 121/25}$ in Fig.~\ref{subfig:poly-invert-post-spe}
and its weight $0.347...$ is again the (renormalized) prior probability of
the region.
The graph in Fig.~\ref{subfig:poly-invert-post-spe}
represents the distribution of $(X,Z)$ conditioned on the query
in Eq.~\eqref{eq:formation}.
The new sum-product expression be used to run further queries, such as
using $\kw{simulate}$ to generate $n$ i.i.d.~random samples
$\set{(x_i,z_i)}_{i=1}^{n}$ from the posterior distributions in
Fig.~\ref{subfig:poly-invert-post-dist} or $\kw{condition}$ to
condition the program on further events.


\section{Conditioning Sum-Product Expressions}
\label{appx:condition}

This section presents algorithms for exact inference, that is,
conditioning the distribution defined by an element of $\SPE$
(Lst.~\ref{lst:core-semantics-sum-product}).
Sec.~\ref{subsec:condition-spe} focuses on a positive probability
$\dom{Event}$ (Lst.~\ref{lst:core-semantics-events}) and
Sec.~\ref{subsec:condition-spe-equality} focuses on a
$\dom{Conjunction}$ of equality constraints on non-transformed
variables, such as $\set{X=3} \cap \set{Y=4}$ (see also
Remark~\ref{remark:condition-measure-zero} in the main text).
We will first prove Thm.~\ref{thm:closure} from the main text, which
establishes that $\SPE$ is closed under conditioning on any positive
probability $\dom{Event}$.
For completeness, we restate the Thm.~\ref{thm:closure} below.
\condclosed*
Thm.~\ref{thm:closure} is a structural conjugacy
property~\citep{diaconis1979} for the family of probability
distributions defined by the $\SPE$ domain, where both the prior and
posterior are identified by elements of $\SPE$.
In Sec.~\ref{subsec:condition-spe}, we present the domain function
$\domfunc{condition}$ (Eq.~\eqref{eq:zemstvo}, main text)
which proves Thm.~\ref{thm:closure} by construction.
We first discuss several preprocessing algorithms that are key
subroutines used by $\domfunc{condition}$.

\subsection{Algorithms for Event Preprocessing}
\label{subsec:condition-dnf}


\noindentparagraph{Normalizing an Event}
The $\domfunc{dnf}$ function (Lst.~\ref{lst:dnf-appx})
converts an $\dom{Event}$ $e$ to DNF, which we define below.

\begin{definition}
\label{def:dnf}
An $\dom{Event}$ $e$ is said to be in disjunctive normal form (DNF) if
and only if one of the following holds:
\begin{enumerate}[label={(\ref*{def:dnf}.\arabic*)},wide=4pt]
\item\label{eq:noneducation-0} $e \in \dom{Containment}$
\item\label{eq:noneducation-1} $\begin{aligned}[t]
  e &= e_1 \sqcap \dots \sqcap e_m \in \dom{Conjunction} \\
  &\implies \forall_{1 \le i \le m}.\ e_i \in \dom{Containment}
  \end{aligned}$
\item\label{eq:noneducation-2} $\begin{aligned}[t]
  e &= e_1 \sqcup \dots \sqcup e_m \in \dom{Disjunction} \\
  &\implies \forall_{1 \le i \le m}.\ e_i \in \dom{Containment} \cup \dom{Conjunction}
  \end{aligned}$
\end{enumerate}
Terms $e$ and $e_i$ in~\ref{eq:noneducation-0} and~\ref{eq:noneducation-1}
are called ``literals'' and terms $e_i$ in~\ref{eq:noneducation-2} are
called ``clauses''.
\end{definition}

We next define the notion of an $\dom{Event}$ in ``solved'' DNF.
\begin{definition}
\label{def:dnf-solved}
An $\dom{Event}$ $e$ is in solved DNF if all the following conditions hold:
\begin{enumerate*}[label=(\roman*)]
\item $e$ is in DNF;
\item all literals within a clause $e_i$ of $e$ have different variables; and
\item each literal $\sexpr{t\,\token{in}\,v}$ of $e$ satisfies
$t \in \dom{Identity}$ and $v \not\in \dom{Union}$.
\end{enumerate*}
\end{definition}

\begin{example}
Using informal notation, the solved DNF form of the event
$\set{X^2 \ge 9} \cap \set{\abs{Y} < 1}$
is a disjunction with two conjunctive clauses:
$[\set{X \in (-\infty, -3)} \cap \set{Y \in (-1,1)}]
\cup [\set{X \in (3, \infty)} \cap \set{Y \in (-1,1)}]$.
\end{example}

Lst.~\ref{lst:dnf-normalize} shows the $\domfunc{normalize}$
operation, which converts an $\dom{Event}$ $e$ to solved DNF.
In particular, predicates with (possibly nonlinear) arithmetic expressions are converted to predicates
that contain only linear expressions (which is a property of $\dom{Transform}$
and $\domfunc{preimg}$; Appx.~\ref{appx:transforms}); e.g., as in
Eqs.~\eqref{eq:formation}--\eqref{eq:examinership}.
The next result, Prop.~\ref{prop:normalize}, follows from
$\Denotv{E}e = \Denotv{E}{\domfunc{dnf}\, e}$ and
denotations of $\dom{Union}$ (Lst.~\ref{lst:core-semantics-outcomes})
and $\dom{Disjunction}$ (Lst.~\ref{lst:core-semantics-events}).

\begin{proposition}
\label{prop:normalize}
$\forall e \in \dom{Event}$, $\Denotv{E}{e} \equiv \Denotv{E}{(\domfunc{normalize}\, e)}$.
\end{proposition}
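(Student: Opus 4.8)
The plan is to decompose $\domfunc{normalize}$ (Lst.~\ref{lst:dnf-normalize}) into a short sequence of rewrite phases, verify each phase separately to be valuation-preserving with respect to $\Denotv{E}{\cdot}$, and then chain the resulting equalities. First I would establish the hinted lemma $\Denotv{E}{e} \equiv \Denotv{E}{\domfunc{dnf}\, e}$: the $\domfunc{dnf}$ function (Lst.~\ref{lst:dnf-appx}) only applies the standard Boolean laws (commutativity, associativity, and distribution of $\sqcap$ over $\sqcup$), and by the $\dom{Conjunction}$ and $\dom{Disjunction}$ clauses of Lst.~\ref{lst:core-semantics-events} the valuation $\Denotv{E}{\cdot}x$ pushes through these connectives as $\domfunc{intersection}$ and $\domfunc{union}$ on $\dom{Outcomes}$, which obey the same algebraic laws. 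A structural induction on the derivation of $\domfunc{dnf}$ closes this step.

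Second, I would handle elimination of non-identity $\dom{Transform}$s inside literals. For a literal $\sexpr{t\,\token{in}\,v}$ with $\domfunc{vars}\, t = \set{x}$, the rewrite replaces it by $\sexpr{\scall{Id}{x}\,\token{in}\,(\domfunc{preimg}\, t\, v)}$; by the defining property of $\domfunc{preimg}$ recalled in Sec.~\ref{sec:core} together with the $\dom{Contains}$ clause of $\valfunc{E}$, both literals have the same valuation at $x$ (and $\varnothing$ at every other variable). Since $\domfunc{preimg}$ may return a $\dom{Union}$, I then split each literal $\sexpr{\scall{Id}{x}\,\token{in}\,(v_1 \amalg \dots \amalg v_m)}$ into $\bigsqcup_{i=1}^{m}\sexpr{\scall{Id}{x}\,\token{in}\,v_i}$, which is valuation-preserving by the $\dom{Union}$ clause of $\valfunc{V}$ (Lst.~\ref{lst:core-semantics-outcomes}) and the $\dom{Disjunction}$ clause of $\valfunc{E}$; re-normalizing via $\domfunc{dnf}$ restores DNF structure. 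Because restriction~\ref{item:restriction-transforms} forces every transform to be univariate, one preimage pass removes all transforms (preimages under $\dom{Identity}$ being trivial), so this phase terminates.

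Third, I would merge literals on a shared variable: a clause containing $\sexpr{\scall{Id}{x}\,\token{in}\,v}$ and $\sexpr{\scall{Id}{x}\,\token{in}\,v'}$ is rewritten to contain $\sexpr{\scall{Id}{x}\,\token{in}\,(\domfunc{intersection}\, v\, v')}$, which is valuation-preserving by the $\dom{Conjunction}$ clause of $\valfunc{E}$. After this step the event satisfies all three conditions of Def.~\ref{def:dnf-solved}, hence equals $\domfunc{normalize}\, e$; composing the three phase equalities yields $\Denotv{E}{e} \equiv \Denotv{E}{(\domfunc{normalize}\, e)}$.

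The main obstacle is the interaction between the transform-solving and union-splitting steps and the bookkeeping of the DNF invariant: solving a transform can reintroduce a $\dom{Union}$, splitting it breaks the clause structure, and re-normalization must not create fresh transforms. I expect the cleanest discharge is to present $\domfunc{normalize}$ as an iterated rewrite system equipped with a lexicographic termination measure (number of non-identity transform occurrences, then number of $\dom{Union}$ occurrences in literals, then number of repeated-variable pairs within clauses), with each rewrite rule shown valuation-preserving by exactly one denotation clause as above; the proposition then follows from soundness of each rule plus termination.
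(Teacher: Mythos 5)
Your proof is correct and rests on exactly the ingredients the paper cites for this proposition --- the semantics-preservation of $\domfunc{dnf}$, the defining property of $\domfunc{preimg}$ combined with the $\dom{Contains}$ clause of $\valfunc{E}$, and the $\dom{Union}$/$\dom{Disjunction}$ denotations --- so it is essentially the paper's (one-sentence) argument spelled out in detail. The only over-engineering is the iterated-rewrite/termination-measure framing: $\domfunc{normalize}$ as defined in Lst.~\ref{lst:dnf-normalize} is a plain structural recursion (each literal's transform is solved exactly once at the base case and the resulting disjunction is recombined by $\domfunc{dnf}$ at the parent), so termination is immediate and a direct structural induction on $e$ suffices; this also avoids your slightly loose final step, since satisfying the conditions of Def.~\ref{def:dnf-solved} does not by itself identify your rewrite output with $\domfunc{normalize}\, e$.
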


\noindentparagraph{Disjoining an Event}
Suppose that $e\in\dom{Event}$ is in DNF and has $m \ge 2$ clauses.
A key inference subroutine is to rewrite $e$ in solved DNF
(Def.~\ref{def:dnf-solved}) where all the clauses are disjoint.
%

\begin{definition}
\label{def:dnf-disjoint}
Let $e \in \dom{Event}$ be in DNF. Two clauses $e_i$ and $e_j$ of $e$
are said to be disjoint if both $e_i$ and $e_j$ are in solved DNF and
at least one of the following conditions holds:
\begin{align}
\exists x \in (\domfunc{vars}\,e_i).\
  \Denotv{E}{e_{ix}} x &\equiv \varnothing \\
\exists x \in (\domfunc{vars}\,e_j).\
  \Denotv{E}{e_{jx}} x &\equiv \varnothing \\
\exists x \in (\domfunc{vars}\,e_i) \cap (\domfunc{vars}\,e_j).\
  \Denotv{E}{e_{ix} \sqcap e_{jx}} x &\equiv \varnothing \label{eq:disjoint-3}
\end{align}
where $e_{ix}$ denotes the unique literal of $e_i$ that contains variable $x$
(for each $x \in \domfunc{vars}\, e_i$), and similarly for $e_j$.
\end{definition}

Lst.~\ref{lst:disjoint?} shows the $\domfunc{disjoint?}$ procedure,
which given a pair of clauses $e_i$ and $e_j$ that are in solved DNF
(as produced by $\domfunc{normalize}$), returns true if and only if one of
the conditions in Def.~\ref{def:dnf-disjoint} hold.
Lst.~\ref{lst:dnf-disjoin} presents the main algorithm
$\domfunc{disjoin}$, which decomposes an arbitrary $\dom{Event}$ $e$
into solved DNF whose clauses are mutually disjoint.
Prop.~\ref{prop:disjoin-dnf} establishes the correctness and worst-case
complexity of $\domfunc{disjoin}$.


\begin{listing*}[!t]
\footnotesize
\FrameSep0pt
\begin{framed}
\begin{sublisting}[b]{.45\textwidth}
\mathleft{1pt}
\begin{align*}
&\domfunc{normalize}: \dom{Event} \to \dom{Event} \\[-3pt]
&\domfunc{normalize}\; \sexpr{t\, \token{in}\, v} \defas
\bmatch\, \domfunc{preimg}\, t\, v \\[-3pt]
&\quad\begin{aligned}[t]
    &\vartriangleright v'_1 \amalg \dots \amalg v'_m \Rightarrow
      \sqcup_{i=1}^{m} \sexpr{\scall{Id}{x}\,\token{in}\,v'_i} \\[-3pt]
    &\vartriangleright v' \Rightarrow \sexpr{\scall{Id}{x}\,\token{in}\,v'},
    \qquad \mbox{where } \set{x} \defas \domfunc{vars}\, t
  \end{aligned} \\[-3pt]
&\domfunc{normalize}\; (e_1 \sqcap \dots \sqcap e_m) \defas
  \domfunc{dnf}\,
    \sqcap_{i=1}^{m}(\domfunc{normalize}\, e_i) \\[-3pt]
&\domfunc{normalize}\; (e_1 \sqcup \dots \sqcup e_m) \defas
  \domfunc{dnf}\,
    \sqcup_{i=1}^{m}(\domfunc{normalize}\, e_i)
\end{align*}
\captionsetup{textfont=bf, aboveskip=-8pt}
\caption{normalize}
\label{lst:dnf-normalize}
\end{sublisting}\vrule\hfill
\begin{sublisting}[b]{.54\textwidth}
\mathleft{5pt}
\begin{subequations}
\begin{align}
&\domfunc{disjoin}: \dom{Event} \to \dom{Event}  \notag\\[-3pt]
&\domfunc{disjoin}\, e \defas
  \blet\; (e_1 \sqcup \dots \sqcup e_m) \;\bbe\; \domfunc{normalize}\, e
  \label{eq:reincarnationist-1} \\[-3pt]
&\quad\bin\, \blet_{2\le{i}\le{m}}\; \tilde{e} \;\bbe\;
\bigsqcap_{\substack{1\le j<i \; \mid\; \neg(\domfunc{disjoint?}\, \langle e_j, e_i\rangle)}}
    (\domfunc{negate}\,e_j)
  \label{eq:reincarnationist-2} \\[-3pt]
&\quad\bin\, \blet_{2\le{i}\le{m}} \;
  \tilde{e}_i \;\bbe\; (\domfunc{disjoin}\, (e_i \sqcap \tilde{e}_i))
  \label{eq:reincarnationist-3}
\\[-3pt]
&\quad\bin\; e_1 \sqcup \tilde{e}_2 \sqcup \dots \sqcup \tilde{e}_m
\notag
\end{align}
\end{subequations}
\captionsetup{textfont=bf,aboveskip=-7pt}
\caption{disjoin}
\label{lst:dnf-disjoin}
\end{sublisting}
\end{framed}
\captionsetup{aboveskip=2pt,belowskip=-8pt}
\caption{$\dom{Event}$ preprocessing algorithms used by $\domfunc{condition}$.}
\label{lst:dnf}
\end{listing*}

\begin{proposition}
\label{prop:disjoin-dnf}
Let $e$ be an $\dom{Event}$ with
$h \defas \abs{\domfunc{vars}\, e}$ variables,
and suppose that
$e_1 \sqcup \dots \sqcup e_m \defas (\domfunc{normalize}\, e)$
has exactly $m\ge 1$ clauses.
Put $\tilde{e} \defas (\domfunc{disjoin}\, e)$.
Then:
\begin{enumerate}[label={(\ref*{prop:disjoin-dnf}.\arabic*)},wide]
\item\label{item:disjoin-solved-dnf} $\tilde{e}$ is in solved DNF.
\item\label{item:disjoin-disjoint} $\forall_{1 \le i \ne j \le \ell}.\ \domfunc{disjoint?}\, \langle e_i, e_j \rangle$.
\item\label{item:disjoin-semantics} $\Denotv{E}{e} = \Denotv{E}{\tilde{e}}$.
\item\label{item:disjoin-bound}
  The number $\ell$ of clauses in $\tilde{e}$ satisfies
  $\ell \le (2m-1)^h$.
\end{enumerate}
\end{proposition}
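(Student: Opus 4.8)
The plan is to establish \ref{item:disjoin-solved-dnf}--\ref{item:disjoin-bound} simultaneously by induction along the recursion of $\domfunc{disjoin}$. The only subroutine facts I invoke are Prop.~\ref{prop:normalize} (that $\domfunc{normalize}$ preserves $\Denotv{E}{\cdot}$ and returns solved DNF), soundness of $\domfunc{negate}$ (Lst.~\ref{lst:negate}: $\domfunc{negate}\, e$ denotes the complement of the region of $e$), and soundness of $\domfunc{disjoint?}$ (Lst.~\ref{lst:disjoint?}: $\domfunc{disjoint?}\,\langle e_i, e_j\rangle$ implies the regions of $e_i$ and $e_j$ are disjoint, which is immediate from Def.~\ref{def:dnf-disjoint}). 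Write $R_k$ for the region (measurable set) denoted by $e_k$, and disambiguate the paper's notation in Lst.~\ref{lst:dnf-disjoin} by letting $\tilde e_i^{(0)} \defas \bigsqcap_{j \in S_i}(\domfunc{negate}\, e_j)$ with $S_i \defas \set{j < i \mid \neg\,\domfunc{disjoint?}\,\langle e_j, e_i\rangle}$, and $\tilde e_i \defas \domfunc{disjoin}(e_i \sqcap \tilde e_i^{(0)})$ (final), with the convention $\tilde e_1 \defas e_1$. I first prove \ref{item:disjoin-semantics}: by Prop.~\ref{prop:normalize} we may assume $e = e_1 \sqcup \dots \sqcup e_m$ is the solved DNF returned by $\domfunc{normalize}$; by the inductive hypothesis for \ref{item:disjoin-semantics} on the recursive call and soundness of $\domfunc{negate}$, the region of $\tilde e_i$ is $R_i \cap \bigcap_{j \in S_i}\overline{R_j}$, and since $j < i$ with $j \notin S_i$ gives $R_j \cap R_i = \varnothing$ by soundness of $\domfunc{disjoint?}$, this region is exactly $R_i \setminus \bigcup_{j < i} R_j$; summing over $i$ yields $R_1 \cup \bigcup_{i \ge 2}\bigl(R_i \setminus \bigcup_{j<i} R_j\bigr) = \bigcup_i R_i$, the region of $e$.

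For \ref{item:disjoin-solved-dnf}: each recursive call returns solved DNF by the inductive hypothesis, so each $\tilde e_i$ is a disjunction of solved-DNF clauses; after flattening nested disjunctions, $e_1 \sqcup \tilde e_2 \sqcup \dots \sqcup \tilde e_m$ is a disjunction whose clauses are conjunctions of $\dom{Containment}$ literals on distinct variables with identity transforms and non-$\dom{Union}$ outcomes, i.e.\ solved DNF. For \ref{item:disjoin-disjoint}: take distinct output clauses $c, c'$ arising from $\tilde e_p$ and $\tilde e_q$ with $p \le q$. If $p = q$, then $c, c'$ are disjoint by the inductive hypothesis for \ref{item:disjoin-disjoint} on that recursive call. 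If $p < q$, then (using the inductive hypothesis for \ref{item:disjoin-semantics}, every clause of $\tilde e_k$ lies in $R_k$) either $\domfunc{disjoint?}\,\langle e_p, e_q\rangle$ holds, whence $c \subseteq R_p$, $c' \subseteq R_q$, and $R_p \cap R_q = \varnothing$; or $p \in S_q$, whence $c' \subseteq R_q \cap \overline{R_p} \subseteq \overline{R_p} \subseteq \overline{c}$. Either way $c \cap c' = \varnothing$.

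The delicate part is \ref{item:disjoin-bound}. For each variable $x \in \domfunc{vars}\, e$ let $v_{ix}$ be the $\dom{Outcomes}$ that $e_i$ constrains $x$ to (the whole axis if $x \notin \domfunc{vars}\, e_i$), and let $\mathcal A_x$ be the finite partition of the $\dom{Outcome}$ axis of $x$ into atoms of the Boolean algebra generated by $\set{v_{1x}, \dots, v_{mx}}$. I claim the invariant: every clause produced anywhere in the recursion constrains each $x$ to a union of members of $\mathcal A_x$. This holds because $\domfunc{normalize}$, $\domfunc{negate}$, $\domfunc{complement}$, $\domfunc{intersection}$ and $\domfunc{union}$ all map outcomes assembled from the endpoints/elements of the $v_{ix}$ to outcomes with the same property, so no recursion level ever introduces a new cut point. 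The invariant also gives termination: each recursive call $\domfunc{disjoin}(e_i \sqcap \tilde e_i^{(0)})$ is applied only when $S_i \ne \varnothing$, and then (distributing the first $\domfunc{negate}\, e_j$ against $e_i$) every clause of $\domfunc{normalize}(e_i \sqcap \tilde e_i^{(0)})$ is a sub-box of $e_i$ that drops at least one $\mathcal A_x$-atom in some coordinate, so the number of cells of the fixed finite partition $\prod_x \mathcal A_x$ contained in a clause strictly decreases down the recursion and is bounded below by $1$.

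Consequently every clause of $\tilde e$ is, up to $\Denotv{E}{\cdot}$-equality, a union of cells of $\prod_x \mathcal A_x$; by \ref{item:disjoin-disjoint} the clauses are pairwise disjoint and by \ref{item:disjoin-semantics} their union is $\bigcup_i R_i$, so $\ell$ is at most the number of cells of $\prod_x \mathcal A_x$ contained in $\bigcup_i R_i$. Such a cell $\prod_x A_x$ must lie inside a single $R_k$ (else, for each $k$ some coordinate $A_{x_k}$ is disjoint from $v_{k,x_k}$, contradicting that any point of the cell lies in some $R_k$), so each $A_x$ is an $\mathcal A_x$-atom contained in $\bigcup_i v_{ix}$; hence the cell count is at most $\prod_x \#\set{A \in \mathcal A_x \mid A \subseteq \bigcup_i v_{ix}}$. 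Finally, the elementary fact that $m$ intervals on a line generate at most $2m-1$ atomic pieces inside their union (an induction on $m$; tight when the intervals pairwise overlap with empty total intersection), together with the analogous count for finite-set and string outcomes, bounds each of the $h$ factors by $2m-1$, giving $\ell \le (2m-1)^h$. The main obstacle is exactly this last part: pinning down the ``no new cut point'' invariant so that it simultaneously delivers termination, cell-alignment of all intermediate clauses, and the per-axis $2m-1$ count; parts \ref{item:disjoin-solved-dnf}--\ref{item:disjoin-semantics} are then routine inductions given soundness of $\domfunc{negate}$ and $\domfunc{disjoint?}$.
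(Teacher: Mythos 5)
Your proposal is correct and follows the same overall architecture as the paper's proof: parts \ref{item:disjoin-solved-dnf}--\ref{item:disjoin-semantics} via structural induction on the recursion, using the telescoping identity $\cup_i R_i = \cup_i \bigl(R_i \cap \bigcap_{j<i}\overline{R_j}\bigr)$ together with the observation that previously-disjoint clauses can be dropped from the intersection (the paper's Eq.~\eqref{eq:muckthrift-4}), and part \ref{item:disjoin-bound} via the grid of at most $2m-1$ pieces per axis (the paper's Fig.~\ref{fig:partition}). Where you genuinely diverge is in how \ref{item:disjoin-bound} and termination are discharged. The paper identifies a claimed worst-case configuration (previous clauses pairwise disjoint and strictly contained in $e_m$) and asserts the resulting partition count without arguing that this configuration is in fact extremal or that the recursion terminates in other configurations; your ``no new cut points'' invariant---that every clause produced at any recursion depth is aligned to the fixed per-axis refinement generated by the original constraints $v_{1x},\dots,v_{mx}$---does real work here, since it simultaneously yields a strictly decreasing integer measure (cells contained in a clause) for termination and the cell-counting argument for the bound. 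This is a more complete argument than the paper's. Two small caveats: your final step ``$\ell$ is at most the number of cells contained in $\cup_i R_i$'' tacitly assumes every returned clause is semantically nonempty (hence contains at least one cell); the algorithm as written does not explicitly filter empty clauses, though the same issue is present, unaddressed, in the paper's proof. Second, since solved-DNF literals carry non-$\dom{Union}$ outcomes, the per-axis pieces you should count are the connected runs of the common refinement rather than Boolean-algebra atoms (which may be disconnected); the $2m-1$ count holds for either, so nothing breaks, but the definition of $\mathcal{A}_x$ should be stated accordingly. Note also that the paper's proof body writes the bound as $2(m-1)^h$, which is a typo for the $(2m-1)^h$ of the statement (confirmed by the $m=4$, $h=2$, $49$-cell example); your version states it correctly.
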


\begin{proof}
Suppose first that $(\domfunc{normalize}\, e)$ has $m=1$ clause $e_1$.
Then $\tilde{e} = e_1$, so
  \ref{item:disjoin-solved-dnf} holds since $e_1 = \domfunc{normalize}\, e$;
  \ref{item:disjoin-disjoint} holds trivially;
  \ref{item:disjoin-semantics} holds by Prop.~\ref{prop:normalize}; and
  \ref{item:disjoin-bound} holds since $\ell = (2-1)^h = 1$.
Suppose now that $(\domfunc{normalize}\, e)$ has $m > 1$ clauses.
To employ set-theoretic reasoning, fix some $x \in \dom{Var}$
and define $\Denotv{E'}{e} \defas \Denotv{V}{\Denotv{E}{e}x} \subset \dom{Outcome}$,
for all $e \in \dom{Event}$.
We have
\begin{align}
&\Denotv{E'}{e_1 \sqcup \dots \sqcup e_m} \\
  &= \cup_{i=1}^{m} \Denotv{E'}{e_i} \\
  &= \cup_{i=1}^{m}\left(\Denotv{E'}{e_i} \cap
        \neg \left[ \cup_{j=1}^{i-1} (\Denotv{E'}{e_j}) \right]\right)
        \label{eq:muckthrift-2}\\
  &= \cup_{i=1}^{m}\left(\Denotv{E'}{e_i} \cap
        \left[ \cap_{j=1}^{i-1} (\neg\Denotv{E'}{e_j}) \right]\right) \label{eq:muckthrift-3}\\
  &= \cup_{i=1}^{m}\left(\Denotv{E'}{e_i} \cap
        \left[ \cap_{j\in k(i)} (\neg\Denotv{E'}{e_j}) \right]\right) \label{eq:muckthrift-4}
\end{align}
where we define for each $i = 1, \dots, m$,
\begin{align*}
k(i) \defas \big\{ 1 \le j \le i-1  \mid
    \Denotv{E'}{e_i}
    \cap \Denotv{E'}{e_j} \ne \varnothing \big\}.
\end{align*}
Eq.~\eqref{eq:muckthrift-4} follows from the fact that
for any $i=1,\dots,m$ and $j < i$, we have
\begin{align}
j \notin k(i) \implies \left[
  \left(\Denotv{E'}{e_i} \cap \neg \Denotv{E'}{e_j} \right)
  \equiv \Denotv{E'}{e_i}
\right].
\end{align}
As $\domfunc{negate}$ (Lst.~\ref{lst:negate})
computes set-theoretic complement $\neg$ in the $\dom{Event}$ domain and
$j \notin k(i)$ if and only if $(\domfunc{disjoint?}\, e_j \, e_i)$,
it follows that the
$\dom{Event}$s $e'_i \defas e_i \sqcap \tilde{e}_i$ ($i=2,\dots,m$)
in Eq.~\eqref{eq:reincarnationist-3} are pairwise disjoint
and are also disjoint from $e_1$, so that
$\Denotv{E}{e} = \Denotv{E}{e_1 \sqcup e'_2 \sqcup \dots \sqcup e'_m}$.
Thus, if $\domfunc{disjoin}$ halts, then
all of~\ref{item:disjoin-solved-dnf}--\ref{item:disjoin-semantics} follow by
induction.

We next establish that $\domfunc{disjoin}$
halts by upper bounding the number of clauses $\ell$ returned
by any call to $\domfunc{disjoin}$.
Recalling that $h \defas \abs{\domfunc{vars}\, e}$,
we assume without loss of generality that
all clauses $e_i$ $(i=1,\dots,n)$ in
Eq.~\eqref{eq:reincarnationist-1} have the same variables
$\set{x_1, \dots, x_h}$, by ``padding'' each $e_i$ with vacuously true
literals of the form
$\sexpr{\scall{Id}{x_i}\,\token{in}\,\dom{Outcomes}}$.
Next, recall that clause $e_i$ in Eq.~\eqref{eq:reincarnationist-1}
is in solved DNF and has $m_i \ge 1$ literals
$e_{ij} = \sexpr{\scall{Id}{x_{ij}}\,\token{in}\,v_{ij}}$ where $v_{ij} \notin \dom{Union}$
(Def.~\ref{def:dnf-solved}).
Thus, $e_i$ specifies exactly one hyperrectangle in $h$-dimensional
space, where $v_{ij}$ is the ``interval'' (possibly infinite)
along the dimension specified by $x_{ij}$
in literal $e_{ij}$ $(i=1,\dots,m; j = 1, \dots, m_i)$.
%
A sufficient condition to produce the worst-case number of pairwise disjoint
primitive sub-hyperrectangles that partition the region
$e_1 \sqcup \dots \sqcup e_m$ is when the previous clauses $e_1, \dots, e_{m-1}$
\begin{enumerate*}[label=(\roman*)]
\item are pairwise disjoint (Def.~\ref{def:dnf-disjoint}); and
\item are strictly contained in $e_m$,
  i.e., $\forall x.\ \Denotv{E}{e_j} \subsetneq \Denotv{E}{e_m}$,
  $(j=1,\dots,m-1)$.
\end{enumerate*}
If these two conditions hold, then $\domfunc{disjoin}$ partitions the
interior of the $h$-dimensional hyperrectangle specified by $e_m$ into no more
than $2(m-1)^h$ sub-hyperrectangles that do not intersect one another (and thus,
produce no further recursive calls), thereby establishing~\ref{item:disjoin-bound}.
\end{proof}


\begin{listing*}[!t]
\footnotesize
\FrameSep0pt
\begin{framed}
\begin{sublisting}[b]{.5\textwidth}
\mathleft{10pt}
\begin{align*}
&\domfunc{condition}\;
  \token{Leaf}\sexpr{x\, d\, \sigma}\, e\; \defas
  \blet\; v \;\bbe\; \Denotv{E}{(\domfunc{subsenv}\, e\, \sigma)}x\; \bin\; \bmatch\; d \\[-3pt]
&\;\begin{aligned}[t]
  &\vartriangleright \token{DistS}\sexpr{\sexpr{s_i\, w_i}_{i=1}^{m}}
    \Rightarrow \bmatch\; v \\[-3pt]
  &\quad\begin{aligned}[t]
    &\vartriangleright \settt{s'_1\dots s'_{l}}^{b} \Rightarrow
    \begin{aligned}[t]
    &\blet_{1 \le i \le m}\; w'_i \;\bbe\;
      \begin{aligned}[t]
        &\bif\; \bar{b}\; \bthen\; w_i \mathbf{1}[{\exists_{1 \le j \le \ell}.s'_j = s_i}] \\[-3pt]
        &\belse\; w_i \mathbf{1}[{\forall_{1 \le j \le \ell}.s'_j \ne s_i}]
      \end{aligned} \\[-3pt]
    &\bin\; \token{Leaf}\sexpr{x\; \token{DistS}\sexpr{\sexpr{s_i\; w'_i}_{i=1}^{m}}\; \sigma}
    \end{aligned}\\[-3pt]
    &\vartriangleright \belse\; \bundef
    \end{aligned}
\end{aligned} \\[-3pt]
&\;\begin{aligned}[t]
  &\vartriangleright \token{DistR}{\sexpr{F\; r_1\; r_2}}
    \Rightarrow
    \bmatch\; (\domfunc{intersection}\; \sexpr{\sexpr{\tfalse\,r_1}\; \sexpr{r_2\,\tfalse}}\; v)
    \\[-3pt]
  &\quad\begin{aligned}[t]
    &\vartriangleright
      \varnothing \gor \settt{r_1\; \dots\; r_m}
      \Rightarrow \bundef \\[-3pt]
    &\vartriangleright \sexpr{\sexpr{b_1\,r'_1}\, \sexpr{r'_2\,b_2}}
      \Rightarrow \token{Leaf}\sexpr{x\; \token{DistR}\sexpr{F\, r'_1\, r'_2}\; \sigma} \\[-3pt]
    &\vartriangleright
      v_1 \amalg \dots \amalg v_m \Rightarrow
        \blet_{1\le i\le m}\; w_i \;\bbe\; \Denotv{D}{d}v_i \\[-3pt]
      &\quad\begin{aligned}[t]
        &\bin\,\blet\; \set{n_1, \dots, n_k}
          \;\bbe\; \set{n \mid 0 < w_n} \span\span \\[-3pt]
        &\bin\,\blet_{1\le i \le k}\; \begin{aligned}[t]
          S_i \;\bbe\; (\domfunc{condition}\;
            \;\token{Leaf}\sexpr{x\, d\, \sigma}\,
            \;(\token{Id}\sexpr{x}\, \token{in}\, v_{n_i}))
        \end{aligned} \\[-3pt]
        &\bin\; \bif\; (k = 1) \;\bthen\; S_1
        \;\belse\,
          \oplus_{i=1}^k\sexpr{S'_i\,w_{n_i}}
      \end{aligned}
    \end{aligned}
\end{aligned}\\[-3pt]
&\;\begin{aligned}[t]
  &\vartriangleright
    \token{DistI}{\sexpr{F\; r_1\; r_2}} \Rightarrow
    \bmatch\; (\domfunc{intersection}\; \sexpr{\sexpr{\tfalse\,r_1}\; \sexpr{r_2\,\tfalse}}\; v)
    \\[-3pt]
  &\quad\begin{aligned}[t]
    &\vartriangleright
      \settt{r_1\; \dots\; r_m}
      \Rightarrow \blet_{1 \le i \le m}\; w_i \;\bbe\; \Denotv{D}{d} \settt{r_i} \\[-3pt]
    &\quad \begin{aligned}[t]
          &\bin\,\blet\; \set{n_1, \dots, n_k}
            \;\bbe\; \set{n \mid 0 < w_n} \span\span \\[-3pt]
          &\bin\,\blet_{1 \le i \le k}\; S_i = \sexpr{x\;
            \token{DistI}\sexpr{F\; (r_{n_i}{-}1/2)\; r_{n_i}}\; \sigma}
            \\[-3pt]
        &\bin\; \bif\; (k = 1) \;\bthen\; S_1
          \;\belse\, \oplus_{i=1}^k\sexpr{S'_i\,w_{n_i}}
      \end{aligned}\\[-3pt]
    &\vartriangleright \belse\; \mbox{\ttfamily // same as last two cases for} \token{DistR}
    \end{aligned}
\end{aligned}
\end{align*}
\captionsetup{textfont=bf,aboveskip=0pt}
\caption{Conditioning Leaf}
\label{lst:condition-leaf}
\end{sublisting}\vrule
\begin{sublisting}[b]{.497\textwidth}
\mathleft{10pt}
\begin{sublisting}{\textwidth}
\centering
\begin{align*}
&\domfunc{condition}\;
  (\sexpr{S_1\; w_1} \oplus \dots \oplus \sexpr{S_m\; w_m})\; e \defas \\[-3pt]
&\; \begin{aligned}[t]
  &\blet_{1 \le i \le m}\;
    w'_i \;\bbe\; w_i \left(\Denotv{P}{S_i}e\right)
      \\[-3pt]
  &\bin\,\blet\;\set{n_1, \dots, n_k} \;\bbe\; \set{n \mid 0 < w'_n} \span\span \\[-3pt]
  &\bin\,\blet_{1 \le i \le k}\;S'_i \;\bbe\;
      (\domfunc{condition}\; S_{n_i}\; e)
      \\[-3pt]
  &\bin\; \bif\; (k = 1)\;
    \bthen\; S'_1 \; \belse\, \oplus_{i=1}^k\sexpr{S'_i\,w'_{n_i}}
  \end{aligned}
\end{align*}
\captionsetup{textfont=bf,aboveskip=2pt}
\caption{Conditioning Sum}
\label{lst:condition-sum}
\end{sublisting}
\hrule
\begin{align*}
&\domfunc{condition}\;
  (S_1 \otimes \dots \otimes S_m)\; e \defas \\[-3pt]
&\bmatch\; \domfunc{disjoin}\; e \\[-3pt]
&\vartriangleright e_1 \sqcap \dots \sqcap e_h \Rightarrow
  \mbox{\scriptsize\ttfamily//one $h$-dimensional hyperrectangle} \\[-3pt]
&\quad \bigotimes_{1 \le i \le m} \left[\begin{aligned}[m]
    &\bmatch\; \set{1 \le j \le h \mid (\domfunc{vars}\, e_j) \subset (\domfunc{scope}\, S_i) }\\[-4pt]
    &\vartriangleright \set{n_1, \dots, n_k} \\[-3pt]
    &\quad\Rightarrow \domfunc{condition}\; S_i\; (e_{n_1} \sqcap \dots \sqcap e_{n_k}) \\[-3pt]
    &\vartriangleright \set{} \Rightarrow S_i
    \end{aligned}\right] \\
&\vartriangleright e_1 \sqcup \dots \sqcup e_\ell \Rightarrow
  \mbox{\scriptsize\ttfamily//$\ell \ge 2$ disjoint hyperrectangles}
  \\[-3pt]
&\quad\begin{aligned}[t]
  &\blet_{1\le{i}\le{\ell}}\; w_i \;\bbe\; \Denotv{P}{S_1 \otimes \dots \otimes S_m}e_i \\[-3pt]
  &\bin\,\blet\; \set{n_1, \dots, n_k} \;\bbe\; \set{n  \mid 0 < w_n} \\[-3pt]
  &\bin\,\blet_{1\le{i}\le{k}}\; S'_i \;\bbe\; (\domfunc{condition}\, (S_1 \otimes \dots \otimes S_m)\, e_{n_i}) \\[-3pt]
  &\bin\; \bif\; (k = 1)\;
    \bthen\; S'_1 \; \belse\, \oplus_{i=1}^k\sexpr{S'_i\,w_{n_i}}
  \end{aligned}
\end{align*}
\captionsetup{textfont=bf,aboveskip=2pt}
\caption{Conditioning Product}
\label{lst:condition-product}
\end{sublisting}
\end{framed}

\captionsetup{aboveskip=2pt}
\caption{Implementation of $\domfunc{condition}$ for $\dom{Leaf}$, $\dom{Sum}$, and
$\dom{Product}$ expressions using distribution semantics in Lst.~\ref{lst:core-semantics-distributions}.}
\label{lst:condition}
\end{listing*}

\begin{example}
The left panel in Fig.~\ref{fig:partition} shows $m = 4$ rectangles in
$\dom{Real}\times \dom{Real}$. The right panel shows a grid (in red)
with $(2m-1)^2 = 49$ primitive rectangular regions that are pairwise
disjoint from one another and whose union over-approximates the union
of the 4 rectangles. In this case, 29 of these primitive rectangular
regions are sufficient (but excessive) to exactly partition the union
of the rectangles into a disjoint union.
No more than 49 primitive rectangles are ever needed to partition
any 4 rectangles in $\dom{Real}^2$, and this bound is tight.
The bound in~\ref{item:disjoin-bound} generalizes this idea to
hyperrectangles that live in $h$-dimensional space.
\begin{figure}[H]
\begin{subfigure}{.5\linewidth}
\includegraphics[width=\linewidth]{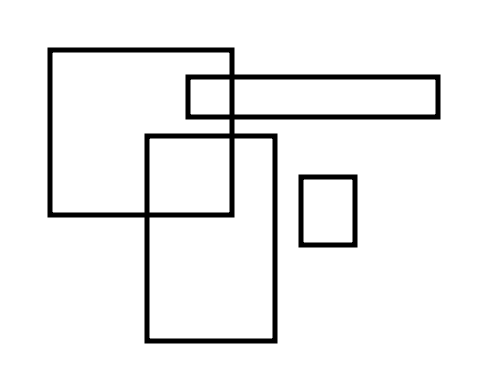}
\caption*{Conditioning Region}
\end{subfigure}%
\begin{subfigure}{.5\linewidth}
\includegraphics[width=\linewidth]{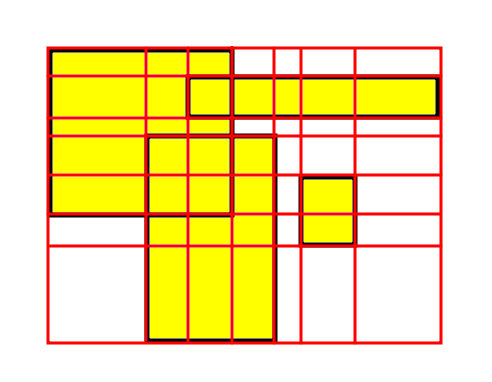}
\caption*{Partition into Rectangles}
\end{subfigure}
\caption{Example illustrating the upper bound~\ref{item:disjoin-bound}
on the number of disjoint rectangles in a worst-case partition of a
conditioning region in the two-dimensional $\dom{Real}$ plane.}
\label{fig:partition}
\end{figure}
\end{example}

\begin{remark}
When defining $\tilde{e}$ in Eq~\eqref{eq:reincarnationist-2} of
$\domfunc{disjoin}$, ignoring previous clauses that are disjoint
from $e_i$ is essential for $\domfunc{disjoin}$ to halt, so as to avoid
recursing on a primitive sub-rectangle in the interior.
That is, filtering out such clauses ensures that $\domfunc{disjoin}$
makes a finite number of recursive calls.
\end{remark}

\subsection{Algorithms for Conditioning Sum-Product Expressions on Positive Measure Events}
\label{subsec:condition-spe}

Having established the key background details, we now prove
Thm.~\ref{thm:closure} from the main text, which establishes the
closure under conditioning property of the $\dom{SP}$ domain.
\begin{proof}[Proof of Theorem.~\ref{thm:closure}]
We establish Eq.~\eqref{eq:sensize} by defining
\begin{equation}
\domfunc{condition}: \SPE \to \dom{Event} \to \SPE
\end{equation}
which satisfies
\begin{align}
\Denotv{P}{(\domfunc{condition}\, S\, e)}e' =
  \frac{\Denotv{P}{S}({e \sqcap e'})}{\Denotv{P}{S}e}
  \label{eq:appx-zemstvo}
\end{align}
for all $e' \in \dom{Event}$ and $e \in \dom{Event}$ for which
$\Denotv{P}{S}e > 0$.

We will define $\domfunc{condition}$ separately for each of the three
constructors $\dom{Leaf}$, $\dom{Sum}$, and $\dom{Product}$ from
Lst.~\ref{lst:core-syntax-sum-product}.
The proof is by structural induction, where $\dom{Leaf}$ is the base
case and $\dom{Sum}$ and $\dom{Product}$ are the recursive cases.

\noindentparagraph{Conditioning Leaf} Lst.~\ref{lst:condition-leaf}
shows the base cases of $\domfunc{condition}$.
The case of $d\in\dom{DistStr}$ is straightforward.
For $d\in\dom{DistReal}$, if the intersection (defined in second line of Lst.~\ref{lst:condition-leaf})
of $v$ with the support of $d$ is an interval $\sexpr{\sexpr{b_1'\,r'_1}\,\sexpr{r'_2,b_2'}}$, then
it suffices to return a $\dom{Leaf}$ restricting $d$ to the interval.
If the intersection is a $\dom{Union}$ $v_1 \amalg \dots \amalg v_m$
(recall fro Eq.~\eqref{eq:invariant-union-intersection} that $\domfunc{intersection}$ ensures
the $v_i$ are disjoint), then the conditioned $\SPE$ is a $\dom{Sum}$,
whose $i$th child is obtained by recursively calling $\domfunc{condition}$
on $v_i$ and $i$th (relative) weight is the probability of $v$ under $d$, since,
for any new $v' \in \dom{Outcomes}$, we have
\begin{align}
\begin{aligned}
\frac%
  {\Denotv{D}{d}{(\domfunc{intersect}\, v'\, (v_1 \amalg \dots \amalg v_m))}}
  {\Denotv{D}{d}{(v_1 \amalg \dots \amalg v_m)}} \\
=\frac%
  {\Denotv{D}{d}{\amalg_{i=1}^{m} (\domfunc{intersect}\, v'\, v_i) }}
  {\sum_{i=1}^{m}\Denotv{D}{d}{v_i}}.
\end{aligned}
\label{eq:ebulliently}
\end{align}
Eq.~\eqref{eq:ebulliently} follows from the additivity of $\Denotv{D}{d}$.
The plots of $X$ in Figs.~\ref{subfig:poly-invert-prior-dist}
  and~\ref{subfig:poly-invert-post-dist} illustrate the
  identity in Eq.~\eqref{eq:ebulliently}, where conditioning the unimodal
  normal distribution results in a mixture of three
  restricted normals whose weights are given by the relative prior probabilities
  of the three regions.

For $d\in\dom{DistInt}$, if the positive probability
$\dom{Outcomes}$ are $\settt{r_1 \dots r_m}$, then the conditioned $\SPE$ is a
$\dom{Sum}$ of ``delta''-CDFs whose atoms are located on the integers
$r_i$ and weights are the (relative) probabilities
$\Denotv{D}{d}\settt{r_i}$ $(i=1,\dots,m)$.
Since the atoms of $F$ for $\dom{DistInt}$ are integers, it suffices
to restrict $F$ to the interval $(r_i-1/2, r_i)$, for
each $r_i$ with a positive weight.
Correctness again follows from Eq.~\eqref{eq:ebulliently}, since
finite sets are unions of disjoint singleton sets.
For other positive probability $\dom{Outcomes}$, the conditioning
procedure $\dom{DistInt}$ is the same as that for $\dom{DistReal}$.

\clearpage

\noindentparagraph{Conditioning Sum}
Lst.~\ref{lst:condition-sum} shows $\domfunc{condition}$ for $S \in \dom{Sum}$.
Recalling the denotation $\Denotv{P}{S}$ for $S \in \dom{Sum}$ in
Lst.~\ref{lst:core-semantics-sum-product}, the correctness follows
from the following properties:
\begin{align}
&\frac%
  {\Denotv{P}{\sexpr{S_1\; w_1} \oplus \dots \oplus \sexpr{S_m\; w_m}}({e \sqcap e'})}
  {\Denotv{P}{\sexpr{S_1\; w_1} \oplus \dots \oplus \sexpr{S_m\; w_m}}e} \\
&= \frac%
  {\sum_{i=1}^{m}w_i\Denotv{P}{S_i}({e \sqcap e'})}
  {\sum_{i=1}^{m}w_i\Denotv{P}{S_i}e}  \label{eq:pravity-1} \\
&= \frac%
  {\sum_{i=1}^{m}w_i(\Denotv{P}{S_i}e)\Denotv{P}{(\domfunc{condition}\, S_i\, e)}{e'}}
  {\sum_{i=1}^{m}w_i\Denotv{P}{S_i}e}
  \label{eq:pravity-2} \\
&= \Denotv{P}{\oplus_{i=1}^{m} \sexpr{(\domfunc{condition}\, S_i\, e)\;, w_i\Denotv{P}{S_i}e}}e',
\end{align}
where Eq.~\eqref{eq:pravity-2} has applied
Eq.~\eqref{eq:appx-zemstvo} inductively for each $S_i$.
Eqs.~\eqref{eq:pravity-1}--\eqref{eq:pravity-2} assume for simplicity that
  $\Denotv{P}{S_i}e>0$ for each $i=1,\dots,m$, whereas
  Lst.~\ref{lst:condition-leaf} does not make this assumption.

\noindentparagraph{Conditioning Product}
Lst.~\ref{lst:condition-product} how $\domfunc{condition}$ operates on
$S \in \dom{Product}$.
The first step is to invoke $\domfunc{disjoin}$ to rewrite $(\domfunc{dnf}\, e)$
as $\ell \ge 1$ disjoint clauses $e'_1 \sqcup \dots \sqcup e'_\ell$
(recall from Prop.~\ref{prop:disjoin-dnf} that $\domfunc{disjoin}$
is semantics-preserving).
The first pattern in the $\bmatch$ statement corresponds $\ell = 1$,
and the result is a new $\dom{Product}$, where the $i$th child is
conditioned on the literals of $e_1$ whose variables are contained in
$\domfunc{scope}\, S_i$ (if any).
The second pattern returns a $\dom{Sum}$ of $\dom{Product}$, since
\newcommand{\Sp}{S_1 \otimes \dots \otimes S_m}
\begin{align}
&\frac%
  {\Denotv{P}{\Sp}{(e \sqcap e')}}
  {\Denotv{P}{\Sp}{e}} \\
&=\frac%
  {\Denotv{P}{\Sp}{((e_1 \sqcup \dots \sqcup e_{\ell}) \sqcap e')}}
  {\Denotv{P}{\Sp}{(e_1 \sqcup \dots \sqcup e_{\ell})}} \\
&=\frac%
  {\Denotv{P}{\Sp}{((e_1 \sqcap e') \sqcup \dots \sqcup (e_{\ell} \sqcap e'))}}
  {\sum_{i=1}^{\ell}\Denotv{P}{\Sp}{e_i}} \\
&=\frac%
  {\sum_{i=1}^{\ell}\Denotv{P}{\Sp}{(e_i \sqcap e')}}
  {\sum_{i=1}^{\ell}\Denotv{P}{\Sp}{e_i}} \\
&=\frac%
  {\sum_{i=1}^{\ell} \Denotv{P}{S}{e_i}\, \Denotv{P}{(\domfunc{condition}\, (\Sp)\, e_i)}{e'}}
  {\sum_{i=1}^{\ell}\Denotv{P}{\Sp}{e_i}} \label{eq:kain-4} \\
&= \Denotv{P}{\oplus_{i=1}^{\ell} \sexpr{(\domfunc{condition}\, S\, e_i)\;\, \Denotv{P}{S}e_i}}e'.
\end{align}
Eq~\eqref{eq:kain-4} follows from the induction hypothesis
Eq.~\eqref{eq:appx-zemstvo}
and $(\domfunc{disjoin}\, e_i) \equiv e_i$ (idempotence), so that
$(\domfunc{disjoin}\, e_i \sqcap e') \equiv (\domfunc{disjoin}\, e_i) \sqcap (\domfunc{disjoin}\, e') \equiv e_i \sqcap (\domfunc{disjoin}\, e')$.

Thm.~\ref{thm:closure} is thus established.
\end{proof}

Fig.~\ref{fig:hyperrectangle} in the main text shows an example of the
closure property from Thm.~\ref{thm:closure}, where conditioning
on a hyperrectangle changes the structure of the $\SPE$ from a
$\dom{Product}$ into a $\dom{Sum}$-of-$\dom{Product}$.
The algorithms in this section are the first to describe probabilistic
inference and closure properties for conditioning an $\SPE$ on a query
that involves transforms of random variables and predicates
with set-valued constraints.
We next establish Thm.~\ref{thm:condition-linear} from the main text,
which gives a sufficient condition for the runtime of
$\domfunc{condition}$ (Lst.~\ref{lst:condition}) to scale linearly in
the number of nodes in $S$; identical results hold for computing
$\dom{Event}$ probabilities ($\Denotv{P}{S}{e}$,
Lst.~\ref{lst:core-semantics-sum-product}) and probability densities
($\Denot[\mathbb{P}_0]{S}e$,
Lst.~\ref{lst:core-semantics-sum-product-density}).

\condlinear*

\begin{proof}
First, if $S$ is a $\dom{Sum}$ (Lst.~\ref{lst:condition-sum}) with $m$
children then $(\domfunc{condition}\, S\, e)$ makes no more than $m$ subcalls to
$\domfunc{condition}$ (one for each child), and if $S$ is a
$\dom{Leaf}$ (Lst.~\ref{lst:condition-leaf}) then there are zero
subcalls, independently of $e$.
Since each node has exactly one parent, we can can conclude that each
node in $S$ is visited exactly once by showing that the hypothesis on
$e$ implies that for any $S \in \dom{Product}$
(Lst.~\ref{lst:condition-product}) with $m$ children, there are makes
at most $m$ subcalls to $\domfunc{condition}$ from which we  (each
node has exactly one parent).
Suppose that $(\domfunc{disjoin}\, e)$ returns a single
$\dom{Conjunction}$.
Then the first pattern of the $\bmatch$ statement in
Lst.~\ref{lst:condition-product} is matched (one $h$-dimensional
rectangle), resulting in $m$ subcalls to $\domfunc{condition}$.
Thus, each node in $S$ is visited (at most) once by
$\domfunc{condition}$.
To complete the proof, note that the hypothesis that $e$ specifies a single
$\dom{Conjunction}$
$\sexpr{t_1\,\token{in}\,v_1} \sqcap \dots \sqcap
  \sexpr{t_m\,\token{in}\,v_m}$ of $\dom{Containment}$ constraints on
non-transformed variables is sufficient for $(\domfunc{disjoin}\, e)$
to return a single $\dom{Conjunction}$.
\end{proof}

\subsection{Conditioning Sum-Product Expressions on Measure Zero Equality Constraints}
\label{subsec:condition-spe-equality}

Recall from Remark~\ref{remark:condition-measure-zero} in the main
text that $\SPE$ is also closed under conditioning on a
$\dom{Conjuction}$ of possibly measure zero equality constraints of
non-transformed variable, such as
$\set{X=3, Y=\pi, Z=\dquote{foo}}$.
In this section, we describe the conditioning algorithm for this case,
which is implemented by
\begin{align}
\domfunc{condition}_0: \SPE \to \dom{Event} \to \SPE,
\label{eq:smytrie}
\end{align}
where $e \in \dom{Event}$ satisfies the follows requirements
with respect to $S \in \SPE$:
\begin{enumerate}
\item Either $e \equiv \sexpr{\scall{Id}{x}\,\token{in}\,\settt{\mli{rs}}}$ or
  $e$ is a $\dom{Conjunction}$ of such literals, where $\equiv$ here
  denote syntactic (not semantic) equivalence.
\item Every $\dom{Var}$ $x$ in each literal of $e$ is a
  non-transformed variable; i.e., for each $\dom{Leaf}$ expression $S$
  such that $x \in \domfunc{scope}\, S$, we have
  $S \equiv \scall{Leaf}{x\,d\,\sigma}$, for some $d$ and $\sigma$.
\end{enumerate}

With these requirements on $e$, Lst.~\ref{lst:condition-zero} presents
the implementation of $\domfunc{condition}_0$, leveraging the generalized
density semantics from Lst.~\ref{lst:core-semantics-sum-product-density} in the
main text.
The inference rules closely match those for standard sum-product
networks, except for the fact that a density from $\Denot[\mathbb{P}_0]{S}$ is a pair,
whose first entry is the number of continuous distributions participating in the
weight of the $\dom{Event}$ $e$ which must be correctly accounted for
by $\domfunc{condition}_0$.
In the reference implementation of \sppl{},
$\kw{condition}$ invokes $\domfunc{condition}$
and  $\kw{constrain}$ invokes $\domfunc{condition}_0$.
Analogously to the $\kw{prob}$ query, which returns
probabilities using the distribution semantics
$\mathbb{P}$ in
Lst.~\ref{lst:core-syntax-distributions}, \sppl{} also includes
the $\kw{density}$ query, which returns densities using the generalized
semantics $\mathbb{P}_0$ in Lst.~\ref{lst:core-semantics-sum-product-density}.


\begin{listing}[t]
\footnotesize
\FrameSep0pt
\begin{framed}
\begin{sublisting}{\linewidth}
\mathleft{10pt}
\begin{align*}
&\domfunc{condition}_0\;
  {\scall{Leaf}{x\, d\, \sigma}}\;
  \sexpr{\scall{Id}{x}\,\token{in}\,\settt{\mli{rs}}}
  \defas \bmatch\, d \\[-4pt]
  &\quad \begin{aligned}[t]
    &\vartriangleright
      \scall{DistR}{F\, r_1\, r_2}
      \Rightarrow\, \bmatch\, {\mli{rs}} \\[-4pt]
    &\quad \begin{aligned}[t]
      &\vartriangleright r \Rightarrow
        \bmatch\, (
          \Denot[\mathbb{P}_0]{\scall{Leaf}{x\, d\, \sigma}}\;
          \sexpr{\scall{Id}{x}\,\token{in}\,\settt{\mli{rs}}}) \\[-4pt]
      &\quad \begin{aligned}[t]
          &\vartriangleright (1, 0) \Rightarrow \bundef \\[-4pt]
          &\vartriangleright \belse\,
            \blet\; \tilde{F}\, \bbe\, \left(\lambda r'.\ \bindicator{r \le r'}\right)\;
            \bin\; \scall{DistI}{\tilde{F}\, (r-1/2)\, r}
          \end{aligned}
        \\[-4pt]
      &\vartriangleright s \Rightarrow \bundef
    \end{aligned} \\[-4pt]
    &\vartriangleright \belse \Rightarrow
      \domfunc{condition}\;
      {\scall{Leaf}{x\, d\, \sigma}}\;
      \sexpr{\scall{Id}{x}\,\token{in}\,\settt{\mli{rs}}}
    \end{aligned}
\end{align*}
\captionsetup{textfont=bf,aboveskip=0pt}
\caption{Conditioning Leaf}
\label{lst:condition-zero-leaf}
\hrule

\begin{align*}
&\domfunc{condition}_0\;
  (\sexpr{S_1\; w_1} \oplus \dots \oplus \sexpr{S_m\; w_m})\;
  \left( \sqcap_{i=1}^{\ell} \sexpr{\scall{Id}{x_i}\,\token{in}\,\settt{\mli{rs_i}}}\right)
  \defas \\[-3pt]
&\; \begin{aligned}[t]
  &\blet_{1 \le i \le m}\;
    (d_i, p_i) \;\bbe\; \Denot[\mathbb{P}_0]{S_i}
      \left( \sqcap_{i=1}^{\ell} \sexpr{\scall{Id}{x_i}\,\token{in}\,\settt{\mli{rs_i}}}\right) \\[-3pt]
  &\bin\, \begin{aligned}[t]
    &\bif\, \forall_{1 \le i \le m}.\ p_i = 0\, \bthen\, \bundef \\
    &\belse\, \begin{aligned}[t]
      &\blet_{1 \le i \le m}\, w'_i \;\bbe\; w_i p_i \\[-3pt]
      &\bin\,\blet\, d^* \,\bbe\, \min\set{d_i \mid 1 \le i \le m, 0 < p_i} \\[-4pt]
      &\bin\,\blet \set{n_1, \dots, n_k} \;\bbe\; \set{n \mid 0 < w'_n, d_i = d^*} \span\span \\[-3pt]
      &\bin\,\blet_{1 \le i \le k}\;S'_i \;\bbe\;
        \left(\domfunc{condition}_0\; S_{n_i}\; \left( \sqcap_{i=1}^{\ell} \sexpr{\scall{Id}{x_i}\,\token{in}\,\settt{\mli{rs_i}}}\right)\right)
      \\[-3pt]
      &\bin\; \bif\; (k = 1)\;
      \bthen\; S'_1 \; \belse\, \oplus_{i=1}^k\sexpr{S'_i\,w_{n_i}}
      \end{aligned}
    \end{aligned}
  \end{aligned}
\end{align*}
\captionsetup{textfont=bf,aboveskip=2pt}
\caption{Conditioning Sum}
\label{lst:condition-zero-sum}
\hrule

\begin{align*}
&\domfunc{condition}_0\;
  \left(S_1 \otimes \dots \otimes S_m \right)\;
  \sqcap_{i=1}^{\ell} \sexpr{\scall{Id}{x_i}\,\token{in}\,\settt{\mli{rs_i}}}
  \defas \\[-4pt]
  &\quad \begin{aligned}[t]
    &\blet_{1 \le i \le m}\, S'_i \,\bbe\,
      \bmatch\; \set{x_1, \dots, x_m} \cap (\domfunc{scope}\, S_i) \\[-4pt]
      &\quad \vartriangleright \set{n_1, \dots, n_k}
        \Rightarrow  \domfunc{condition}_0\, {S_i}\,
          \sqcap_{t=1}^{k} \sexpr{\scall{Id}{x_{n_t}}\,\token{in}\,\settt{\mli{rs_t}}} \\[-4pt]
      &\quad \vartriangleright \set{} \Rightarrow S_i \\[-4pt]
    &\bin\, S'_1 \otimes \dots \otimes S'_m
    \end{aligned}
\end{align*}
\captionsetup{textfont=bf,aboveskip=2pt}
\caption{Conditioning Product}
\label{lst:condition-zero-product}
\end{sublisting}
\end{framed}

\captionsetup{aboveskip=2pt}
\caption{Implementation of $\domfunc{condition}_0$ for $\dom{Leaf}$, $\dom{Sum}$, and
$\dom{Product}$ expressions using density semantics in Lst.~\ref{lst:core-semantics-sum-product-density}.}
\label{lst:condition-zero}
\end{listing}


\section{Translating Sum-Product Expressions to \sppl{} Programs}
\label{appx:translation-reverse}

Lst.~\ref{lst:sppl-translation} in Sec.~\ref{sec:translation} presents
the relation $\translate$, that translates $C \in \dom{Command}$
(i.e., \sppl{} source syntax, Lst.~\ref{lst:sppl-syntax}) to a sum-product expression $S \in \SPE$
in the core language (Lst.~\ref{lst:core-syntax}).
%
Lst.~\ref{lst:spe-translation} defines a relation
$\translateR$ that reverses the $\translate$ relation:
it converts expression $S \in \SPE$ to $C \in \dom{Command}$.
Briefly, \begin{enumerate*}[label=(\roman*)]
\item a $\dom{Product}$ is converted to a sequence $\dom{Command}$;
\item a $\dom{Sum}$ is converted to an \kw{if}-\kw{else} $\dom{Command}$; and
\item a $\dom{Leaf}$ is converted to a sequence of
  sample (\texttt{\textasciitilde}) and transform (\texttt{=}).
\end{enumerate*}
The symbol $\Uparrow$ (whose definition is omitted) in the \ref{Leaf}
rule converts semantic elements such as $d\in\dom{Distribution}$ and
$t\in\dom{Transform}$ from the core calculus
(Lst.~\ref{lst:core-semantics}) to an \sppl{} expression $E \in \dom{Expr}$
(Lst.~\ref{lst:sppl-syntax}) in a straightforward way, e.g.,
\begin{align}
(\scall{Poly}{\scall{Id}{\texttt{X}}\; 1\; 2\; 3})
  \Uparrow
  (\texttt{1 + 2*X + 3*X**2}).
\end{align}

It is easy to see that chaining $\translate$
(Lst.~\ref{lst:sppl-translation}) and $\translateR$
(Lst.~\ref{lst:spe-translation}) for a given
\sppl{} program does not preserve either \sppl{} or core syntax, that is%
\footnote{The symbol $C \translateStar S$ means
$\langle C, S_{\varnothing}\rangle$ translates to $S$
in zero or more steps of $\translate$, where $S_{\varnothing}$ is an
``empty'' $\dom{SP}$ used for the initial translation step, and
similarly for $\translateRStar$.}
\begin{align*}
&((C \translateStar S) \translateRStar C') &&\mbox{ does not imply } C = C' \\
&((C \translateStar S) \translateRStar C') \translateStar S' &&\mbox{ does not imply } S = S'.
\end{align*}
Instead, it can be shown that $\translateR$ is a semantics-preserving
inverse of $\translate$, in the sense that for all $e \in \dom{Event}$
{\mathleft{0pt}\begin{align}
((C \translateStar S) \translateRStar C') \translateStar S'
&\implies \Denotv{P}{S}e = \Denotv{P}{S'}e.
\hspace{-.5cm}
\label{eq:eclipsation-2}
\end{align}}
Eq.~\eqref{eq:eclipsation-2} establish a formal semantic
correspondence between the \sppl{} language and the class of
sum-product expressions:
each \sppl{} program admits a representation as an $\SPE$, and each
valid element of $\SPE$ that satisfies conditions \ref{item:definedness-env-base}--\ref{item:definedness-sum-weights}
expression corresponds to some \sppl{} program.

Thus, in addition to synthesizing full \sppl{} programs from data
using the PPL synthesis systems~\citep{chasins2017,saad2019bayesian}
mentioned in Sec.~\ref{sec:related},
it is also possible with the
translation strategy in Lst.~\ref{lst:spe-translation} to synthesize
\sppl{} programs using the wide range of techniques for learning the
structure and parameters of sum-product
networks~\citep{gens2013,lee2014,vergari2019,trapp2019}.
With this approach, \sppl{}
\begin{enumerate*}[label=(\roman*)]
\item provides users with a uniform representation of existing sum-product
networks as generative source code in a formal PPL
(Lst.~\ref{lst:sppl-syntax});
\item allows users to extend these baseline programs with modeling
extensions supported by the core calculus
(Lst.~\ref{lst:core-semantics}), such as predicates for decision trees
and numeric transformations; and
\item delivers exact answers to an extended set of probabilistic
inference queries (Sec.~\ref{sec:condition}) within the modular and reusable
workflow from Fig.~\ref{fig:system-diagram}.
\end{enumerate*}

\begin{listing}[t]
\footnotesize

\staterule{Leaf}
{d \Uparrow D\texttt{(}E\texttt{)}, t_1 \Uparrow E_1, \dots, t_m \Uparrow E_m}
{\begin{aligned}[t]
  &(x\; d\; \set{x\mapsto \token{Id}\sexpr{x}, x_1\mapsto t_1, \dots, x_m\mapsto t_m}) \\
  &\qquad\qquad \translateR
  x\,\texttt{\textasciitilde}\,D\texttt{(}E\texttt{)} \texttt{;}
  x_1\, \texttt{=}\, E_1 \texttt{;}
  \dots \texttt{;}
  x_m\, \texttt{=}\, E_m\end{aligned}}

\bigskip

\staterule{Product}
{S_1 \translateR C_1,
  \dots,
  S_m \translateR C_m}
{\otimes_{i=1}^{m} S_i \translateR C_1\texttt{;} \dots \texttt{;} C_m}

\bigskip

\staterule{Sum}
{S_1 \translateR C_1, \dots, S_m \translateR C_m; \quad \textrm{where}\ b\ \mbox{is a fresh } \dom{Var}}
{\\[-2pt] \oplus_{i=1}^{m}\sexpr{S_i\, w_i}
  \translateR
  \left[\begin{aligned}[m]
    &b\,\texttt{\textasciitilde}\, \texttt{choice}\texttt{(\{\textquotesingle1\textquotesingle:} w_1\texttt{,}\dots\texttt{,}\texttt{\textquotesingle}m\texttt{\textquotesingle:}w_m\texttt{\})} \\
    &\kw{if}\, \texttt{(}b\, \texttt{==}\, \squote{1}\texttt{)}\, \settt{C_1} \\
    &\kw{elif} \dots \\
    &\kw{elif}\, \texttt{(}b\, \texttt{==}\, \texttt{\textquotesingle}m\texttt{\textquotesingle}\texttt{)}\, \settt{C_m} \\
  \end{aligned}\right]
}
\smallskip
\hrule
\captionsetup{aboveskip=5pt}
\caption{Translating an element of $\SPE$
(Lst.~\ref{lst:core-syntax-sum-product})
to an \sppl{} command $C$ (Lst.~\ref{lst:sppl-syntax}).}
\label{lst:spe-translation}
\end{listing}


\begin{listing*}
\centering
\scriptsize
\FrameSep0pt
\begin{framed}
\begin{sublisting}[b]{.277\textwidth}
\mathleft{1pt}
\begin{sublisting}{\textwidth}
\begin{align*}
x &\in \dom{Var} \\[-4pt]
n &\in \dom{Natural}\\[-4pt]
b &\in \dom{Boolean} \defas \set{\ttrue, \tfalse}\\[-4pt]
u &\in \dom{Unit} \defas \set{\tunit}\\[-4pt]
w &\in [0,1]  \\[-4pt]
r &\in \dom{Real} \cup \set{-\infty, \infty} \\[-4pt]
s &\in \dom{String} \defas \dom{Char}^*
\end{align*}
\captionsetup{aboveskip=-5pt, textfont=bf}
\caption{Basic Sets}
\label{lst:core-syntax-basic}
\end{sublisting}
\hrule
\smallskip
\begin{align*}
\mli{rs} &\in \dom{Outcome} \defas \dom{Real} + \dom{String} \span\span\\[-4pt]
v &\in \dom{Outcomes} \\[-4pt]
    &\defas \varnothing                                   && [\dom{Empty}]\\[-4pt]
    &\gor   \settt{s_1 \dots s_m}^{b}                     && [\dom{FiniteStr}]\\[-4pt]
    &\gor   \settt{r_1 \dots r_m}                         && [\dom{FiniteReal}]\\[-4pt]
    &\gor   \sexpr{\sexpr{b_1\, r_1}\, \sexpr{r_2\, b_2}} && [\dom{Interval}]\\[-4pt]
    &\gor   v_1 \amalg \dots \amalg v_m                   && [\dom{Union}]
\end{align*}
\captionsetup{skip=-7pt, textfont=bf}
\caption{Outcomes}
\label{lst:core-syntax-outcomes}
\end{sublisting}\vrule
\begin{sublisting}[b]{.325\textwidth}
\mathleft{2pt}
\begin{sublisting}{\textwidth}
\begin{align*}
t &\in \dom{Transform} \\[-4pt]
  &\defas \scall{Id}{x}                    && [\dom{Identity}]\\[-4pt]
  &\gor   \scall{Reciprocal}{t}            && [\dom{Reciprocal}]\\[-4pt]
  &\gor   \scall{Abs}{t}                   && [\dom{AbsValue}]\\[-4pt]
  &\gor   \token{Root}\sexpr{t\,n}         && [\dom{Radical}]\\[-4pt]
  &\gor   \token{Exp}\sexpr{t\,r}          && [\dom{Exponent}]\\[-4pt]
  &\gor   \token{Log}\sexpr{t\,r}          && [\dom{Logarithm}]\\[-4pt]
  &\gor   \scall{Poly}{t\; r_0 \dots r_m}  && [\dom{Polynomial}]\\[-4pt]
  &\gor   \token{Piecewise}
    \begin{aligned}[t]
      \texttt{(}&\sexpr{t_1\, e_1} \\[-4pt]
      &\dots \\[-4pt]
      &\sexpr{t_m\, e_m}\texttt{)}
    \end{aligned} && [\dom{Piecewise}]
\end{align*}
\captionsetup{skip=-5pt, textfont=bf}
\caption{Transformations}
\label{lst:core-syntax-trasnformations}
\hrule
\end{sublisting}
\begin{align*}
e &\in \dom{Event} \\[-4pt]
  &\defas  \sexpr{t\; \token{in}\, v}  && \mbox{[Containment]}\\[-4pt]
  &\gor    e_1 \sqcap \dots \sqcap e_m  && \mbox{[Conjunction]}\\[-4pt]
  &\gor    e_1 \sqcup \dots \sqcup e_m  && \mbox{[Disjunction]}
\end{align*}
\captionsetup{skip=-7pt, textfont=bf}
\caption{Events}
\label{lst:core-syntax-events}
\end{sublisting}\vrule
\begin{sublisting}[b]{.394\textwidth}
\mathleft{2pt}
\begin{sublisting}{\textwidth}
\begin{align*}
F &\in \dom{CDF} \subset \dom{Real} \to [0,1] \\[-4pt]
  &\defas
    \scall{Norm}{r_1,r_2}
    \gor \scall{Poisson}{r}
    \gor \scall{Binom}{n,w}
    \dots
  \span\span \\[-4pt]
&\mbox{ where $F$ is c{\`a}dl{\`a}g};\\[-4pt]
&\quad \lim_{r\to\infty}F(r)=1; \lim_{r\to-\infty}F(r)=0; \\[-4pt]
&\quad \mbox{and } F^{-1}(u) \defas \inf\set{r \mid u \le F(r)}.\\
d &\in \dom{Distribution} \span\span \\[-4pt]
  &\defas \scall{DistR}{F\, r_1\, r_2}      && [\dom{DistReal}] \\[-4pt]
  &\gor \scall{DistI}{F\, r_1\, r_2}        && [\dom{DistInt}] \\[-4pt]
  &\gor \scall{DistS}{\sexpr{s_1\,w_1} \dots \sexpr{s_m\,w_m}}
                                            && [\dom{DistStr}]
\end{align*}
\captionsetup{skip=-5pt, textfont=bf}
\caption{Primitive Distributions}
\label{lst:core-syntax-distributions}
\end{sublisting}
\hrule

\begin{align*}
\sigma &\in \dom{Environment} \defas \dom{Var} \to \dom{Transform}\span\span \\[-4pt]
S &\in \SPE\\[-4pt]
  &\defas \scall{Leaf}{x\, d\, \sigma}              && [\dom{Leaf}]\\[-4pt]
  &\gor  \sexpr{S_1\, w_1}
          \oplus \dots \oplus \sexpr{S_m\,w_m}      && [\dom{Sum}] \\[-4pt]
  &\gor   S_1 \otimes \dots \otimes S_m             && [\dom{Product}]
\end{align*}
\captionsetup{skip=-7pt, textfont=bf}
\caption{Sum-Product}
\label{lst:core-syntax-sum-product}
\end{sublisting}
\end{framed}
\captionsetup{aboveskip=2pt}
\caption{Core calculus.}
\label{lst:core-syntax}
\end{listing*}


\begin{listing*}
\begin{framed}
\begin{align*}
\domfunc{complement}\; \settt{s_1 \dots s_m}^{b}
  &\defas \settt{s_1 \dots s_m}^{\neg b}
\\
\domfunc{complement}\;
  \sintvl{b_1\;r_1}{r_2\;b_2}
  &\defas
    \sintvl{\tfalse\;{-\infty}}{r_1\;\neg b_1}
    \amalg
    \sintvl{\neg b_2\;r_2}{\infty\;\tfalse}
\\
\domfunc{complement}\;
  \settt{r_1 \dots r_m}
  &\defas
    \begin{aligned}[t]
    &\sintvl{\tfalse\;{-\infty}}{r_1\;\ttrue} \\
    &\amalg \left[\amalg_{j=2}^{m} \sintvl{\ttrue\;r_{j-1}}{r_j\;\ttrue}\right] \\
    &\amalg \sintvl{\ttrue\;r_m}{\infty\;\tfalse}
  \end{aligned}
\\
\domfunc{complement}\; \varnothing
  &\defas
  \settt{}^{\ttrue} \amalg \sintvl{\tfalse\, {-\infty}}{\infty\, \tfalse}
\end{align*}
\end{framed}
\captionsetup{skip=2pt}
\caption{Implementation of $\domfunc{complement}$ on the sum domain $\dom{Outcomes}$.}
\label{lst:complement}
\end{listing*}

\begin{listing*}
\begin{framed}
\begin{align*}
\domfunc{vars}&: (\dom{Transform} + \dom{Event}) \to \mathcal{P}(\dom{Vars}) \\
\domfunc{vars}\, \mli{te} &= \bmatch\, \mli{te} \\
&\vartriangleright t \Rightarrow \begin{aligned}[t]
  &\bmatch\; t \\
  &\vartriangleright \scall{Id}{x} \Rightarrow \set{x} \\
  &\vartriangleright \scall{Root}{t'\; n}
        \gor \scall{Exp}{t'\; r}
        \gor \scall{Log}{t'\; r}
        \gor \scall{Abs}{t'} \\
        &\qquad \gor \scall{Reciprocal}{t'}
        \gor \scall{Poly}{t'\; r_0\; \dots\; r_m} \\
        &\qquad \Rightarrow \domfunc{vars}\; t' \\
  &\vartriangleright
    \scall{Piecewise}{\sexpr{t_i\; e_i}_{i=1}^m}
    \Rightarrow \cup_{i=1}^{m}((\domfunc{vars}\; t_i)\cup (\domfunc{vars}\; e_i))
  \end{aligned} \\
&\vartriangleright \sexpr{t\;\token{in}\;v} \Rightarrow \domfunc{vars}\; t \\
&\vartriangleright
  (e_1 \sqcap \dots \sqcap e_m) \gor
  (e_1 \sqcup \dots \sqcup e_m)
  \Rightarrow \cup_{i=1}^{m} \domfunc{vars}\; e_i
\end{align*}
\end{framed}
\captionsetup{skip=2pt}
\caption{Implementation of $\domfunc{vars}$, which returns the variables
in a $\dom{Transform}$ or $\dom{Event}$.}
\label{lst:vars}
\end{listing*}

\begin{listing*}
\begin{framed}
\begin{align*}
&\domfunc{scope}: \SPE \to \mathcal{P}(\dom{Var}) \\[-5pt]
&\domfunc{scope}\; \sexpr{x\, d\, \sigma} \defas \mathrm{dom}(\sigma) \\[-5pt]
&\domfunc{scope}\; \sexpr{S_1 \otimes \dots \otimes S_m}
  \defas \cup_{i=1}^{m} (\domfunc{scope}\; S_i)\\[-5pt]
&\domfunc{scope}\; \sexpr{\sexpr{S_1\, w_1}\oplus \dots \oplus \sexpr{S_m\,w_m}}
  \defas (\domfunc{scope}\; S_1)
\end{align*}
\end{framed}
\captionsetup{skip=2pt}
\caption{Implementation of $\domfunc{scope}$, which returns the set of
variables in an element of $\SPE$.}
\label{lst:core-semantics-auxfn-scope}
\end{listing*}

\begin{listing*}
\begin{framed}
\begin{align*}
&\domfunc{subsenv}: \dom{Event} \to \dom{Environment} \to \dom{Event} \\[-5pt]
&\domfunc{subsenv}\; e\; \sigma \defas \begin{aligned}[t]
  &\blet\, \set{x, x_1, \dots, x_m} = \mathrm{dom}(\sigma) \\[-5pt]
  &\bin\,\blet\,  e_1 \;\bbe\; \domfunc{subs}\; e\; x_m\; \sigma(x_m) \\[-5pt]
  &\dots \\[-5pt]
  &\bin\,\blet\, e_m \;\bbe\; \domfunc{subs}\; e_{m-1}\; x_1\; \sigma(x_1) \\[-5pt]
  &\bin\, e_m
\end{aligned}
\end{align*}
\end{framed}
\captionsetup{skip=2pt}
\caption{Implementation of $\domfunc{subsenv}$, which rewrites $e$ as an
$\dom{Event}$ $e'$ on one variable $x$.}
\label{lst:core-semantics-auxfn-subsenv}
\end{listing*}

\begin{listing*}
\begin{framed}
\begin{align*}
&\domfunc{negate}: \dom{Event} \to \dom{Event} \\
&\domfunc{negate}\, \sexpr{t\,\token{in}\,v} \defas \bmatch\; (\domfunc{complement}\; v) \\
&\qquad \begin{aligned}[t]
  &\vartriangleright v_1 \amalg \dots \amalg v_m \Rightarrow
    \sexpr{t\,\token{in}\,v_1} \sqcup \dots \sqcup \sexpr{t\,\token{in}\,v_m} \\
  &\vartriangleright v \Rightarrow \sexpr{t\,\token{in}\,v}
\end{aligned} \\
&\domfunc{negate}\, (e_1 \sqcap \dots \sqcap e_m) \defas \sqcup_{i=1}^{m}(\domfunc{negate}\, e_i) \\
&\domfunc{negate}\, (e_1 \sqcup \dots \sqcup e_m) \defas \sqcap_{i=1}^{m}(\domfunc{negate}\, e_i)
\end{align*}
\end{framed}
\captionsetup{skip=2pt}
\caption{Implementation of $\domfunc{negate}$, which applies De Morgan's laws
to an $\dom{Event}$.}
\label{lst:negate}
\end{listing*}

\begin{listing*}
\begin{framed}
\begin{align*}
&\domfunc{dnf}: \dom{Event} \to \dom{Event} \\[-3pt]
&\domfunc{dnf}\; \sexpr{t\; \token{in}\; v}
  \defas \sexpr{t\; \token{in}\; v} \\[-3pt]
&\domfunc{dnf}\; e_1 \sqcup \dots \sqcup e_m
  \defas \sqcup_{i=1}^{m}  (\domfunc{dnf}\, e_i )\\[-3pt]
&\domfunc{dnf}\; e_1 \sqcap \dots \sqcap e_m \defas \begin{aligned}[t]
  &\blet_{1\le{i}\le{m}}\; (e'_{j1}\sqcap\dots\sqcap e'_{j,k_i})
    \;\bbe\; \domfunc{dnf}\, e_i \ \\[-3pt]
  &\bin \bigsqcup_{\substack{1 \le {j_1} \le k_1 \\ \dots \\ 1 \le {j_m} \le k_m}}
    \bigsqcap_{i=1}^{m} e'_{i,j_i}
  \end{aligned}
\end{align*}
\end{framed}
\captionsetup{skip=2pt}
\caption{$\domfunc{dnf}$ converts and $\dom{Event}$ to DNF (Def.~\ref{def:dnf}).}
\label{lst:dnf-appx}
\end{listing*}

\begin{listing*}
\begin{framed}
\begin{align*}
&\domfunc{disjoint?}: \dom{Event} \times \dom{Event} \to \dom{Boolean} \\[-3pt]
&\domfunc{disjoint?}\, \langle e_1, e_2 \rangle \defas \bmatch\; \langle e_1, e_2 \rangle  \\[-3pt]
&\vartriangleright
  \langle
    \sqcap_{i=1}^{m_1} (\scall{Id}{x_{1,i}}\,\token{in}\,v_{1,i}),
    \sqcap_{i=1}^{m_2} (\scall{Id}{x_{2,i}}\,\token{in}\,v_{2,i})
  \rangle \\[-3pt]
&\; \Rightarrow \begin{aligned}[t]
  &\left[\exists_{1 \le i \le 2}.\exists_{1 \le j \le m_i}. v_{ij} = \varnothing) \right]\;
    \vee \left[ \begin{aligned}
  &\blet\;
    \set{\langle n_{1i}, n_{2i} \rangle}_{i=1}^{k}
    \;\bbe\; \set{ \langle i, j \rangle \mid x_{1,i}\,{=}\,x_{2,j} } \\[-3pt]
  &\bin\; (\exists_{1 \le i \le k}.
    (\domfunc{intersection}\, v_{1,n_{1,i}}\, v_{2,n_{2,i}}) = \varnothing)
    \end{aligned} \right]
  \end{aligned}
  \\[-3pt]
&\vartriangleright \belse \Rightarrow \bundef
\end{align*}
\end{framed}
\captionsetup{skip=2pt}
\caption{$\domfunc{disjoint?}$ returns $\ttrue$ if two $\dom{Event}$s are disjoint (Def.~\ref{def:dnf-disjoint}).}
\label{lst:disjoint?}
\end{listing*}

\begin{listing*}
\begin{framed}
\begin{align*}
\valfunc{T}&: \dom{Transform} \to (\dom{Real} \to \dom{Real})\\
\Denot[\valfunc{T}]{\scall{Id}{x}}
  &\defas \lambda r'.\, r' \\
\Denot[\valfunc{T}]{\scall{Reciprocal}{t}}
  &\defas \lambda r'.\, 1/\left({\Denot[\valfunc{T}]{t}(r')}\right) \\
\Denot[\valfunc{T}]{\scall{Abs}{t}}
  &\defas \lambda r'.\, \abs{\Denot[\valfunc{T}]{t}(r')} \\
\Denot[\valfunc{T}]{\scall{Root}{t\; n}}
  &\defas \lambda r'.\, \sqrt[n]{\Denot[\valfunc{T}]{t}(r')} \\
\Denot[\valfunc{T}]{\scall{Exp}{t\; r}}
  &\defas \lambda r'.\, r^{\left({\Denot[\valfunc{T}]{t}(r')} \right)}\;
  && (\mbox{iff } 0 < r) \\
\Denot[\valfunc{T}]{\scall{Log}{t\; r}}
  &\defas \lambda r'.\, \log_{r}\left({\Denot[\valfunc{T}]{t}(r')}\right)\;
  && (\mbox{iff } 0 < r) \\
\Denot[\valfunc{T}]{\scall{Poly}{t\; r_0\; \dots\; r_m}}
  &\defas \lambda r'.\, \textstyle\sum_{i=0}^{m} r_i\left({\Denot[\valfunc{T}]{t}(r')}\right)^i \\
\Denot[\valfunc{T}]{\scall{Piecewise}{\sexpr{t_i\; e_i}_{i=1}^m}}
  &\defas \lambda r'.\, \begin{aligned}[t]
        &\bif\; \left[ (\inj[r']{\dom{Real}}{\dom{Outcome}}) \in \Denotv{V}{\Denotv{E}{e_1}x} \right] \;
          \bthen\; \Denot[\valfunc{T}]{t_1}r' \\
        &\belse\; \bif\; \dots \\
        &\belse\; \bif\; \left[ (\inj[r']{\dom{Real}}{\dom{Outcome}}) \in \Denotv{V}{\Denotv{E}{e_m}x} \right] \;
            \bthen\; \Denot[\valfunc{T}]{t_m}r' \\
        &\belse\; \bundef
  \end{aligned} \span\span \\
&\qquad (\mbox{iff } \begin{aligned}[t]
  (\domfunc{vars}\; t_1)
    = \dots &= (\domfunc{vars}\; t_m) \\
            &= (\domfunc{vars}\; e_1) = \dots = (\domfunc{vars}\; e_m)
            \asdef \set{x}
  \end{aligned} \span\span
\end{align*}
\end{framed}
\captionsetup{skip=2pt}
\caption{Semantics of $\dom{Transform}$.}
\label{lst:transform}
\end{listing*}

\begin{listing*}
\begin{framed}
\begin{align*}
\domfunc{domainof} &: \dom{Transform} \to \dom{Outcomes} \\
\domfunc{domainof}\; \scall{Id}{x}
  &\defas \sintvl{\tfalse\; {-\infty}}{\infty\; \tfalse} \\
\domfunc{domainof}\; \scall{Reciprocal}{t}
  &\defas \sintvl{\tfalse\; 0}{\infty\; \tfalse} \\
\domfunc{domainof}\; \scall{Abs}{t}
  &\defas \sintvl{\tfalse\; {-\infty}}{\infty\; \tfalse} \\
\domfunc{domainof}\; \scall{Root}{t\; n}
  &\defas \sintvl{\tfalse\; 0}{\infty\; \tfalse} \\
\domfunc{domainof}\; \scall{Exp}{t\; r_0}
  &\defas \sintvl{\tfalse\; {-\infty}}{\infty\; \tfalse} \\
\domfunc{domainof}\; \scall{Log}{t\; r_0}
  &\defas \sintvl{\tfalse\; 0}{\infty\; \tfalse} \\
\domfunc{domainof}\; \scall{Poly}{t\; r_0\; \dots\; r_m}
  &\defas \sintvl{\tfalse\; {-\infty}}{\infty\; \tfalse} \\
\domfunc{domainof}\; \scall{Piecewise}{\sexpr{t_i\; e_i}_{i=1}^m} &\defas
  \domfunc{union}\;
    [(\domfunc{intersection}\; (\domfunc{domainof}\; t_i)\; (\Denotv{E}{e}x)]_{i=1}^{m}\\
    &\mbox{where } \set{x} \defas \domfunc{vars}\, t_1
\end{align*}
\end{framed}
\captionsetup{skip=2pt}
\caption{$\domfunc{domainof}$ returns
the $\dom{Outcomes}$ on which a $\dom{Transform}$ is defined.}
\label{lst:domainof}
\end{listing*}

\begin{listing*}
\setlength{\abovedisplayskip}{0pt}%
\setlength{\belowdisplayskip}{0pt}%
\setlength{\abovedisplayshortskip}{0pt}%
\setlength{\belowdisplayshortskip}{0pt}%
\begin{framed}
\begin{align*}
&\domfunc{preimg}\; t\; v\;
  \defas \domfunc{preimage'}\; t\; (\domfunc{intersection}\; (\domfunc{domainof}\; t)\; v) \\
&\domfunc{preimage'}\; \token{Id}\; v \defas v \\
&\domfunc{preimage'}\; t\; \varnothing \defas \varnothing \\
&\domfunc{preimage'}\; t\; (v_1 \amalg \dots \amalg v_m)
  \defas \domfunc{union}\;
    (\domfunc{preimg}\; t\; v_1)\;
    \dots\;
    (\domfunc{preimg}\; t\; v_m) \\
&\domfunc{preimage'}\; t\; \settt{r_1 \dots  r_m} \defas
  \domfunc{preimg}\; t'\;
    (\domfunc{union}\; (\domfunc{finv}\; t\; r_1)\; \dots\; (\domfunc{finv}\; t\; r_m)) \\
&\domfunc{preimage'}\; t\;
    \sintvl
      {b_{\mathrm{left}}\; r_{\mathrm{left}}}
      {r_{\mathrm{right}}\; b_{\mathrm{right}}}
    \defas \bmatch\; t \\
&\begin{aligned}[t]
&\vartriangleright \scall{Radical}{t'\; n}
  \gor \scall{Exp}{t'\; r}
  \gor \scall{Log}{t'\; r}
  \Rightarrow
  \begin{aligned}[t]
  &\blet\; \settt{r'_{\mathrm{left}}}\; \bbe\; \domfunc{finv}\; t\; r_{\mathrm{left}} \\
  &\bin\,\blet\; \settt{r'_{\mathrm{right}}}\; \bbe\; \domfunc{finv}\; t\; r_{\mathrm{right}} \\
  &\bin\; \domfunc{preimg}\; t'\;
    \sintvl
      {b_{\mathrm{left}}\; r'_{\mathrm{left}}}
      {r'_{\mathrm{right}}\; b_{\mathrm{right}}}
    \end{aligned} \\
&\vartriangleright \scall{Abs}{t'} \Rightarrow
  \begin{aligned}[t]
  &\blet\; v'_{\mathrm{pos}} \;\bbe\;
    \sintvl
      {b_{\mathrm{left}}\; r_{\mathrm{left}}}
      {r_{\mathrm{right}}\; b_{\mathrm{right}}} \\
  &\bin\,\blet\; v'_{\mathrm{neg}} \;\bbe\;
    \sintvl
      {b_{\mathrm{right}}\; {-r_{\mathrm{right}}}}
      {{-r_{\mathrm{left}}}\; b_{\mathrm{left}}} \\
  &\bin\; \domfunc{preimg}\; t'\;
    (\domfunc{union}\; v'_{\mathrm{pos}}\; v'_{\mathrm{neg}})
  \end{aligned} \\
&\vartriangleright \scall{Reciprocal}{t'} \Rightarrow
  \begin{aligned}[t]
  &\blet\; \langle r'_{\mathrm{left}}, r'_{\mathrm{right}} \rangle \;\bbe\;
    \begin{aligned}[t]
    &\bif\; (0 \le r_{\mathrm{left}} < r_{\mathrm{right}}) \\
    &\bthen\; \langle
      \bif\; (0 < r_{\mathrm{left}})\;
      \bthen\; 1/r_{\mathrm{left}}\;
      \belse\; \infty,\\
    &\qquad\quad
      \bif\; (r_{\mathrm{right}} < \infty)\;
      \bthen\; 1/r_{\mathrm{right}}\;
      \belse\; 0
      \rangle \\
    &\belse\; \langle
      \bif\; ({-\infty} < r_{\mathrm{left}})\;
      \bthen\; 1/r_{\mathrm{left}}\;
      \belse\; 0, \\
    &\qquad\quad
      \bif\; (r_{\mathrm{right}} < 0)\;
      \bthen\; 1/r_{\mathrm{right}}\;
      \belse\; {-\infty}
      \rangle
    \end{aligned} \\
  &\bin\; \domfunc{preimg}\; t'\;
    \sintvl
      {b_{\mathrm{right}}\; r'_{\mathrm{right}}}
      {r'_{\mathrm{left}}\; b_{\mathrm{left}} }
    \end{aligned} \\
&\vartriangleright \scall{Polynomial}{t\; r_0\; \dots\; r_m} \Rightarrow
  \begin{aligned}[t]
  &\blet\; v'_{\mathrm{left}} \;\bbe\; \domfunc{polyLte}\; \neg{b_{\mathrm{left}}}\; r_{\mathrm{left}}\; r_0\; \dots\; r_m \\
  &\bin\,\blet\; v'_{\mathrm{right}} \;\bbe\; \domfunc{polyLte}\; b_{\mathrm{right}}\; r_{\mathrm{right}}\; r_0\; \dots\; r_m \\
  &\bin\;
    \domfunc{preimg}\; t'\;
      (\domfunc{intersection}\; v'_{\mathrm{right}}\;
      (\domfunc{complement}\; v'_{\mathrm{left}}))
  \end{aligned} \\
&\vartriangleright \scall{Piecewise}{\sexpr{t_i\; e_i}_{i=1}^m} \Rightarrow
  \begin{aligned}[t]
  &\blet_{1 \le i \le m}\; v'_i \;\bbe\;
    \domfunc{preimg}\; t_i\;
      \sintvl
        {b_{\mathrm{left}}\; r_{\mathrm{left}}}
        {b_{\mathrm{right}}\; r_{\mathrm{right}}} \\
  &\bin\,\blet_{1 \le i \le m}\;
    v_i \;\bbe\; \domfunc{intersection}\; v'_i\; (\Denotv{E}{e_i}x), \\
  &\bin\; \domfunc{union}\; v_1\; \dots\; v_m
    \qquad \mbox{where } \set{x} \defas \domfunc{vars}\,t_1
  \end{aligned}
\end{aligned}
\end{align*}
\end{framed}
\captionsetup{skip=2pt}
\caption{$\domfunc{preimg}$ computes the generalized inverse of a many-to-one
$\dom{Transform}$.}
\label{lst:preimage}
\end{listing*}

\begin{listing*}
\begin{framed}
\begin{align*}
\domfunc{finv} &: \dom{Transform} \to \dom{Real} \to \dom{Outcomes} \\
\domfunc{finv}\; \scall{Id}{x}\; r
  &\defas \settt{r} \\
\domfunc{finv}\; \scall{Reciprocal}{t}\; r
  &\defas \bif\; (r = 0)\;
    \bthen\; \settt{{-\infty}\; \infty}
    \belse\; \settt{1/r} \\
\domfunc{finv}\; \scall{Abs}{t}\; r
  &\defas \settt{{-r}\; r} \\
\domfunc{finv}\; \scall{Root}{t\; n}\; r
  &\defas \bif\; (0 \le r)\; \bthen\;\settt{r^n}\; \belse\; \varnothing \\
\domfunc{finv}\; \scall{Exp}{t\; r_0}\; r
  &\defas \bif\; (0 \le r)\; \bthen\; \settt{\log_{r_0}(r)}\; \belse\; \varnothing \\
\domfunc{finv}\; \scall{Log}{t\; r_0}\; r   &\defas \settt{r_0^r} \\
\domfunc{finv}\; \sexpr{\token{Polynomial}\; t\; r_0\; \dots\; r_m}\; r
  &\defas \domfunc{polySolve}\; r\; r_0\; r_1\; \dots\; r_m \\
\domfunc{finv}\; \sexpr{\token{Piecewise}\; \sexpr{t_i\; e_i}_{i=1}^m}
  &\defas \domfunc{union}\;
    [(\domfunc{intersection}\; (\domfunc{finv}\; t_i\; r)\; (\Denotv{E}{e_i} x))]_{i=0}^{m},\\
    &\mbox{where } \set{x} \defas \domfunc{vars}\, t_1
\end{align*}
\end{framed}
\captionsetup{skip=0pt}
\caption{$\domfunc{finv}$ computes the generalized inverse of a many-to-one
transform at a single $\dom{Real}$.}
\label{lst:finv}
\end{listing*}

\begin{listing*}
\begin{framed}
\begin{align*}
&\domfunc{polyLim}: \dom{Real}^+ \to \dom{Real}^2 \\
&\domfunc{polyLim}\; r_0 \defas \langle r_0, r_0 \rangle \\
&\domfunc{polyLim}\; r_0\; r_1\; \dots\; r_m \defas \\
&\qquad \begin{aligned}
  &\blet\; n \;\bbe\; \max \set{j \mid r_{j} > 0 } \\
  &\bin\;  \bif\; (\domfunc{even}\; n)\; \begin{aligned}[t]
        \bthen\; &(\bif\; (r_n > 0)\;
          \bthen\; \langle \infty, \infty \rangle\;
          \belse\; \langle -\infty, -\infty \rangle)\\
        \belse\; &(\bif\; (r_n > 0)\;
          \bthen\; \langle -\infty, \infty \rangle\;
          \belse\; \langle \infty, -\infty \rangle)
        \end{aligned}
\end{aligned}
\end{align*}
\end{framed}
\captionsetup{skip=2pt}
\caption{$\domfunc{polyLim}$
computes the limits of a polynomial limits at the infinities.}
\label{lst:polyLim}
\end{listing*}

\begin{listing*}
\begin{framed}
\begin{align*}
&\domfunc{polySolve}: \dom{Real} \to \dom{Real}^+ \to \dom{Set} \\
&\domfunc{polySolve}: r\; r_0\; \dots\; r_m \defas \bmatch\; r \\
&\qquad \begin{aligned}[t]
  &\vartriangleright (\infty \gor {-\infty})
    &&\Rightarrow \begin{aligned}[t]
      &\blet\;
        \langle r_{\mathrm{neg}}, r_{\mathrm{pos}}\rangle\; &&\bbe\; \domfunc{polyLim}\; r_0\; \dots\; r_m \\
      &\bin\,\blet\;
        \domfunc{f}\; &&\bbe\; \lambda r'.\,
          \bif\; (r = \infty)\; \bthen\; (r' = \infty)\; \belse\; (r' = {-\infty})\; \\
      &\bin\,\blet\;
        v_{\mathrm{neg}}\; &&\bbe\;
          \bif\; (\domfunc{f}\; r_{\mathrm{neg}})\;
          \bthen\; \settt{-\infty}\;
          \belse\; \varnothing \\
      &\bin\,\blet\;
        v_{\mathrm{pos}}\; &&\bbe\;
          \bif\; (\domfunc{f}\; r_{\mathrm{pos}})\;
          \bthen\; \settt{\infty}\;
          \belse\; \varnothing \\
      &\bin\; &&\domfunc{union}\; v_{\mathrm{pos}}\; v_{\mathrm{neg}}
    \end{aligned} \\
  &\vartriangleright \belse
    &&\Rightarrow
      (\domfunc{roots}\; (r_0 - r)\; r_1\; \dots\; r_m)
  \end{aligned}
\end{align*}
\end{framed}
\captionsetup{skip=2pt}
\caption{$\domfunc{polySolve}$
computes the set of values at which a polynomial is equal to a specific value $r$.}
\label{lst:polySolve}
\end{listing*}

\begin{listing*}
\begin{framed}
\begin{align*}
&\domfunc{polyLte} : \dom{Boolean} \to \dom{Real} \to \dom{Real}^+ \to \dom{Outcomes} \\
&\domfunc{polyLte}\; b\; r\; r_0\; \dots\; r_m \defas \bmatch\; r \\
&\qquad \begin{aligned}[t]
  &\vartriangleright -\infty
    &&\Rightarrow \bif\; b\;
      \bthen\; \varnothing\;
      \belse\; (\domfunc{polySolve}\; r\; r_0\; \dots\; r_m) \\
  &\vartriangleright \infty
      &&\Rightarrow \begin{aligned}[t]
        \bif\; \neg{b}\;
          &\bthen\; \sintvl{\ttrue\; {-\infty}}{\infty\; \ttrue}\\
          &\belse\; \begin{aligned}[t]
            &\blet\; \langle r_{\mathrm{left}}, r_{\mathrm{right}}\rangle \;\bbe\;
              \domfunc{polyLim}\; r_0\; \dots\; r_m \\
            &\bin\,\blet\;
              \langle b_{\mathrm{left}}, b_{\mathrm{right}} \rangle \;\bbe\;
              \langle r_{\mathrm{left}} = \infty, r_{\mathrm{right}} = \infty \rangle \\
            &\bin\; \sintvl
              {b_{\mathrm{left}}\; {-\infty}}
              {\infty b_{\mathrm{right}}}
          \end{aligned}\\
      \end{aligned} \\
  &\vartriangleright \belse &&\Rightarrow \begin{aligned}[t]
    &\blet\;
      [r_{\mathrm{s}, i}]_{i=1}^{k} \;\bbe\;
      \domfunc{roots}\; (r_0 - r)\; r_1\; \dots\; r_m \\
    &\bin\,\blet\;
      [\langle r'_{\mathrm{left},i}, r'_{\mathrm{right},i}\rangle]_{i=0}^{k} \;\bbe\;
        [
          \langle {-\infty}, r_{\mathrm{s},0} \rangle,
          \langle r_{\mathrm{s}, 1}, r_{\mathrm{s}, 2} \rangle,
          \dots,
          \langle r_{\mathrm{s}, k-1}, r_{\mathrm{s}, k} \rangle,
          \langle r_{\mathrm{s}, k}, \infty \rangle
        ]\\
    &\bin\,\blet\;
      \domfunc{f}_{\rm mid} \;\bbe\; \lambda rr'.\, \begin{aligned}[t]
              &\bif\;         &&(r = -\infty)\; &&\bthen\; r'  \\
              &\belse \bif\;  &&(r' = \infty)\; &&\bthen\; r \\
              &\belse\; && (r + r') / 2
            \end{aligned} \\
          & t' \;\bbe\; \scall{Poly}{\scall{Id}{\token{x}}\; (r_0 -r)\; r_1\; \dots\; r_m}\\
    &\bin\; \domfunc{union}\; \left[\begin{aligned}[m]
          &\bif\;
            \Denot[\valfunc{T}]{t'}
              (\domfunc{f}_{\rm mid}\; r'_{\mathrm{left},i}\; r'_{\mathrm{right},i})
            &&\bthen\; \sintvl{b\; r'_{\mathrm{left},i}}{r'_{\mathrm{right},i}\; b} \\
            \span &&\belse\; \varnothing \\
        \end{aligned}\right]_{i=0}^k
  \end{aligned}
\end{aligned}
\end{align*}
\end{framed}
\captionsetup{skip=2pt}
\caption{$\domfunc{polyLte}$
computes the set of values at which a polynomial is less than a given value $r$.}
\label{lst:polyLte}
\end{listing*}

\end{document}